\newtheorem{theorem}{Theorem}
\newtheorem{proposition}[theorem]{Proposition}
\newtheorem{lemma}[theorem]{Lemma}
\newtheorem{definition}[theorem]{Definition}
\newtheorem{claim}[theorem]{Claim}
\newtheorem{corollary}[theorem]{Corollary}
\newtheorem{example}[theorem]{Example}
\DeclareMathOperator*{\argmax}{argmax}
\newcommand{\AutoAdjust}[3]{{ \mathchoice{ \left #1 #2  \right #3}{#1 #2 #3}{#1 #2 #3}{#1 #2 #3} }}
\newcommand{\Xcomment}[1]{{}}
\newcommand{\InBrackets}[1]{\AutoAdjust{[}{#1}{]}}
\newcommand{\Ex}[2][]{{\mathbb E}_{#1}\InBrackets{#2}}
\newcommand{\Prx}[2][]{\Pr_{#1}\InBrackets{#2}}
\newcommand{\dd}{\mathrm{d}}  
\newcommand{\eps}{\varepsilon}
\newcommand{\alg}{\textnormal{Alg}}
\newcommand{\given}{\;\mid\;}
\newcommand{\bigggiven}{\;\bigg|\;}
\newcommand{\Biggiven}{\;\Big|\;}
\title{Nash Convergence of Mean-Based Learning Algorithms in First-Price Auctions\footnote{A preliminary version of this paper was published at WWW 2022.}}
\author{
Xiaotie Deng\thanks{Center on Frontiers of Computing Studies, Peking University, email: {\tt xiaotie@pku.edu.cn}.
}
\and 
Xinyan Hu\thanks{Department of Electrical Engineering and Computer Sciences, University of California, Berkeley, email: {\tt xinyanhu@berkeley.edu}. Most of the work done at Center on Frontiers of Computing Studies, Peking University.}
\and
Tao Lin\thanks{School of Engineering and Applied Sciences, Harvard University, email: {\tt tlin@g.harvard.edu}}
\and
Weiqiang Zheng\thanks{Department of Computer Science, Yale University, email: {\tt weiqiang.zheng@yale.edu}}
}
\date{August, 2025}
\begin{document}

\maketitle

\begin{abstract}
The convergence properties of learning dynamics in repeated auctions is a timely and important question, with numerous applications in, e.g., online advertising markets. This work focuses on repeated first-price auctions where bidders with fixed values learn to bid using mean-based algorithms -- a large class of online learning algorithms that include popular no-regret algorithms such as Multiplicative Weights Update and Follow the Perturbed Leader. We completely characterize the learning dynamics of mean-based algorithms, under two notions of convergence: (1) \emph{time-average}: the fraction of rounds where bidders play a Nash equilibrium converges to 1; (2) \emph{last-iterate}: the mixed strategy profile of bidders converges to a Nash equilibrium. Specifically, the results depend on the number of bidders with the highest value:
\begin{itemize}
    \item If the number is at least three, the dynamics almost surely converges to a Nash equilibrium of the auction, in both time-average and last-iterate.  
    \item If the number is two, the dynamics almost surely converges to a Nash equilibrium in time-average but not necessarily last-iterate.
    \item If the number is one, the dynamics may not converge to a Nash equilibrium in time-average or last-iterate. 
\end{itemize}
Our discovery opens up new possibilities in the study of the convergence of learning dynamics.
\end{abstract}

\section{Introduction}
First-price auctions are the current trend in online advertising markets.
A major example is the switch from second-price auctions to first-price auctions on Google's Ad Exchange platform in 2019~\citep{paes_leme_why_2020, goke_bidders_2022}. 

In contrast to second-price auctions, first-price auctions are non-truthful: bidders need to reason about other bidders' strategies and choose their own bidding strategies accordingly. 
In today's online advertising auctions, a standard practice for bidders is to use data-driven learning algorithms to make bidding decisions automatically \citep{aggarwal2024auto}. 
As one bidder adjusts bidding strategies using a learning algorithm, other bidders' payoffs are affected and thus they will adjust their strategies as well. 
Then, a natural question follows: \emph{
Do multi-agent learning dynamics in repeated first-price auctions converge to a Nash equilibrium? 
} 
This question is of both theoretical and practical interest.
Understanding the Nash convergence properties of multi-agent learning dynamics is fundamental in the theory of learning in games,
and it provides useful predictions on the bidders' welfare and platforms' revenue in online auction systems with learning agents. 


An early result on the convergence of multi-agent learning in repeated first-price auctions is given by \citet{hon-snir_learning_1998}. Considering repeated first-price auctions where bidders have fixed values for the item, \citep{hon-snir_learning_1998} shows that a Nash equilibrium may or may not be learned by the \emph{Fictitious Play} algorithm, where in each round of auctions every bidder best responds to the empirical distributions of other bidders' bids in history. 
Fictitious Play, however, is a deterministic algorithm that does not have the \emph{no-regret} property ---
a desideratum for learning algorithms in adversarial environments.
The no-regret property can only be obtained by randomized algorithms \citep{roughgarden_lecture_2016}.
As observed by \cite{nekipelov_econometrics_2015} that bidders' behavior on Bing's advertising system is consistent with no-regret learning, 
it is hence important, from both theoretical and practical points of view, to understand the convergence property of randomized no-regret learning algorithms in repeated first-price auctions.  
This motivates our work.

\paragraph{\bf Our contributions.}
Focusing on repeated first-price auctions where bidders have fixed values, we completely characterize the Nash convergence property of a wide class of randomized online learning algorithms called ``mean-based algorithms'' \citep{braverman_selling_2018}.
This class contains many popular no-regret algorithms, including Multiplicative Weights Update (MWU), Follow the Perturbed Leader (FTPL), EXP-3, etc.

We systematically analyze two notions of Nash convergence:
(1) \emph{time-average}: the fraction of rounds where bidders play a Nash equilibrium approaches 1 in the limit; (2) \emph{last-iterate}: the mixed strategy profile of bidders approaches a Nash equilibrium in the limit. 
Specifically, the results turn out to depend on the number of bidders with the highest value:
\begin{itemize}
\item If the number is at least three, the learning dynamics of mean-based algorithms almost surely converges to Nash equilibrium, both in time-average and in last-iterate.  
\item If the number is two, the learning dynamics almost surely converges to Nash equilibrium in time-average but not necessarily in last-iterate. 
\item 
If the number is one, the learning dynamics may not converge to Nash equilibrium in time-average nor in last-iterate. 
\end{itemize} 

For the last case, the non-convergence result is proved for the Follow the Leader algorithm, which is a mean-based algorithm that is not necessarily no-regret.  We also show by experiments that no-regret mean-based algorithms such as MWU and $\eps_t$-Greedy may not last-iterate converge to a Nash equilibrium.

\paragraph{\bf Intuitions and techniques.}
The intuition behind our convergence results (the first two cases above) relates to the notion of ``iterative elimination of dominated strategies'' in game theory.  Suppose there are three bidders all having the same integer value $v$ for the item and choosing bids from the set $\{0, 1, \ldots, v-1\}$.  The unique Nash equilibrium is all bidders bidding $v-1$.  The elimination of dominated bids is as follows: firstly, bidding $0$ is dominated by bidding $1$ for each of the three bidders no matter what other bidders bid, so bidders will learn to bid $1$ or higher instead of bidding $0$ at the beginning; then, given that no bidders bid $0$, bidding $1$ is dominated by bidding $2$, so all bidders learn to bid at least $2$; ...; in this way all bidders learn to bid $v-1$.
This logic is implicit in \cite{hon-snir_learning_1998}, but their formal argument only works for deterministic algorithms such as Fictitious Play, not for randomized algorithms.  

Formally proving that randomized mean-based learning algorithms are able to iteratively eliminate dominated strategies is non-trivial because those algorithms can pick a dominated strategy with a small but positive probability. 
To overcome this obstacle, we use and generalize a technique from \cite{feng_convergence_2021}, who show that bidders in a second-price auction with multiple Nash equilibria converge to the truthful equilibrium if they use mean-based algorithms with an initial uniform exploration stage.
Their argument relies on the fact that, in a second-price auction, all bidders learn to eliminate the non-truthful bids with high probability during the uniform exploration stage.
Then, by using a ``time partitioning'' technique combined with martingale concentration inequalities, \citet{feng_convergence_2021} prove that the bidders will continue to play non-truthful bids with small probability after the exploration stage. 
We significantly generalize their technique in two aspects: (1) We allow arbitrary mean-based algorithms without an initial uniform exploration stage. (2) More importantly, while \citet{feng_convergence_2021}'s technique only works for one phase of elimination of dominated strategies, we further develop their technique to prove that mean-based learning algorithms are able to iteratively eliminate dominated strategies for any number of phases. 
This allows us to prove that mean-based learning algorithms converge to Nash equilibrium in repeated first-price auctions.

\subsection{Discussion} \label{sec:discussion}

\paragraph{\bf The fixed value assumption.}
Our work assumes that each bidder has a fixed value for the sold item throughout the repeated auction. Under this assumption, the first-price auction is known to be equivalent to Bertrand competition~\citep{bertrand1883review} where firms have fixed (but possibly heterogeneous) production costs~\citep{bichler_online_2024}. Therefore, our results also imply the (non-)convergence to the Nash equilibrium when firms set prices using mean-based learning algorithms in Bertrand competition games.

The fixed-value assumption is also common in the literature on repeated auctions, in various contexts including value inference~\citep{nekipelov_econometrics_2015}, dynamic pricing~\citep{amin_learning_2013, devanur_perfect_2015, immorlica_repeated_2017}, and bidding equilibrium~\citep{hon-snir_learning_1998, iyer_mean_2014, kolumbus_auctions_2022, banchio_artificial_2022}. An exception is the work by \cite{feng_convergence_2021}, who study repeated first-price auctions under the Bayesian assumption that bidders' values are i.i.d.~samples from a distribution.  However, their result is restricted to a $2$-symmetric-bidder setting with the Uniform$[0, 1]$ distribution where the Bayesian Nash equilibrium (BNE) is simply every bidder bidding half of their values.
For general asymmetric distributions, there is no explicit characterization of the BNE \citep{lebrun_existence_1996, lebrun_first_1999, maskin_equilibrium_2000} despite the existence of (inefficient) numerical approximations~\citep{fibich_asymmetric_2003, escamocher_existence_2009, wang_bayesian_2020}. Moreover, since computing BNE in first-price auctions is known to be computationally intractable~\citep{filos2021complexity, chen_complexity_2023, filos2024computation}, the existence of simple, convergent learning dynamics appears unlikely in the Bayesian setting with general value distributions.


Moreover, as we will show, even with the seemingly innocuous assumption of fixed values, the learning dynamics of mean-based algorithms already exhibit complicated behaviors: it may converge to different equilibria in different runs or not converge at all.  One can envision more unpredictable behaviors when values are not fixed.

\paragraph{\bf Learning in general games.} 
Our work is related to a fundamental question in the field of Learning in Games \citep{fudenberg_theory_1998, cesa-bianchi_prediction_2006, nisan_algorithmic_2007}: if players in a repeated game employ online learning algorithms to adjust strategies, will they converge to an equilibrium? And what kinds of equilibrium? 
Classical results include the convergence of no-regret learning algorithms to a \emph{Coarse Correlated Equilibrium (CCE)} and no-internal-regret algorithms to a \emph{Correlated Equilibrium} in \emph{any} game \citep{foster_calibrated_1997, hart_simple_2000}.
But given that (coarse) correlated equilibria are much weaker than the archetypal solution concept of a Nash equilibrium, a more appealing and challenging question is the convergence towards a Nash equilibrium.
Positive answers to this question are only known for some special cases of algorithms and games: e.g., no-regret algorithms converge to Nash equilibria in zero-sum games, $2\times 2$ games, and routing games \citep{fudenberg_theory_1998, cesa-bianchi_prediction_2006, nisan_algorithmic_2007}.
In contrast, several works give non-convergence examples: e.g., the non-convergence of MWU in a $3\times 3$ game~\citep{daskalakis_learning_2010} and Regularized Learning Dynamics in zero-sum games~\citep{mertikopoulos_cycles_2018}.
In this work we study the Nash equilibrium convergence property in first-price auctions for a large class of learning algorithms, namely the mean-based algorithms, and provide both positive and negative results.

\paragraph{\bf Last v.s.~average iterate convergence.}  
We emphasize that previous results on convergence of learning dynamics to Nash equilibria in games are mostly attained in an average sense, i.e., the empirical distributions of players' actions converge. 
Our notion of time-average convergence, which requires players to play a Nash equilibrium in almost every round, is different from the convergence of empirical distributions; in fact, ours is stronger if the Nash equilibrium is unique.
Nevertheless, time-average convergence fails to capture the full picture of the dynamics since players' last-iterate (mixed) strategy profile may not converge.
Existing results about last-iterate convergence show that most of the learning dynamics actually diverge or enter a limit cycle even in a simple $3\times 3$ game~\citep{daskalakis_learning_2010} or zero-sum games~\citep{mertikopoulos_cycles_2018}, except for a few convergence examples like optimistic gradient descent/ascent in two-player zero-sum games or monotone games~\citep{daskalakis_last-iterate_2018, wei_linear_2021, cai2022finitetime}. 
Our results and techniques, regarding the convergence of any mean-based algorithm in first-price auctions, shed light on further study of last-iterate convergence in more general settings. 


\subsection{Additional Related Works}
\paragraph{\bf Online learning in auctions.}
A large fraction of existing works on online learning in repeated auctions are from the \emph{seller}'s perspective, i.e., studying how a seller can maximize revenue by adaptively changing the rules of the auction (e.g., reservation price) over time (e.g., \cite{blum_near-optimal_2005, amin_learning_2013, mohri_optimal_2014, cesa-bianchi_regret_2015, braverman_selling_2018, huang_learning_2018, abernethy_learning_2019, kanoria_incentive-compatible_2019, deng_game-theoretic_2020, golrezaei_dynamic_2021}).
We focus on the \emph{bidders'} learning problem. 
%

Existing works from bidders' perspective are mostly about ``learning to bid'', studying how to design no-regret algorithms for a bidder to bid in various formats of repeated auctions, including first-price auctions \citep{balseiro_contextual_2019, han2020learning, zhang2022leveraging, badanidiyuru_learning_2023, wang2023learning, han_optimal_2025, galgana2025learning}, second-price auctions \citep{iyer_mean_2014, weed_online_2016}, and more general auctions \citep{feng_learning_2018, karaca2020no}.
Those works take the perspective of a \emph{single} bidder, without considering the interaction among \emph{multiple} bidders, all learning to bid at the same time.  
We instead study the consequence of such interaction, showing that the learning dynamics of multiple bidders may or may not converge to the Nash equilibrium of the auction. 

\paragraph{\bf Multi-agent learning in first-price auctions.}
In addition to the aforementioned works by \cite{feng_convergence_2021} and \cite{hon-snir_learning_1998}, we review some other works on multi-agent learning in first-price auctions.
Experimentally, \citep{bichler_learning_2021, goke_bidders_2022, banchio_artificial_2022} observe the convergence for some learning algorithms. 
Theoretically, 
\citep{kolumbus_auctions_2022} proves that in repeated first-price auctions with two mean-based learning bidders, \emph{if} the dynamics converge to some limit, then this limit must be a CCE in which the bidder with the higher value submits bids that are close to the lower value.  However, they do not provide conditions under which the dynamics converge.
We prove that the dynamics converge if the two bidders have the same value and in fact converge to the stronger notion of a Nash equilibrium.
Our results complement \cite{kolumbus_auctions_2022} and support the aforementioned experimental findings.

\paragraph{\bf Learning to iteratively eliminate dominated strategies.}
To our knowledge, we are the first to formally prove that mean-based learning algorithms are able to iteratively eliminate dominated strategies in repeated games.
Although this result seems intuitive, the proof is involved due to the randomness of mean-based algorithms.  As mentioned in the Introduction, we generalize a ``time-partitioning'' technique in \cite{feng_convergence_2021} to overcome this technical difficulty. 
Some works on multi-agent learning in other games \citep{wu_multi-agent_2022, feng_peer_2022} have also observed but did not formally prove this result.

\paragraph{\bf Overview of subsequent works.}
We review several works on learning dynamics in first-price auctions that appeared after our conference paper \citep{deng_nash_2022}. 
Paralleling our results for mean-based learning algorithms and using different techniques, \citet{ahunbay2025semicoarse} show the last-iterate convergence of the non-mean-based projected gradient descent algorithm~\citep{zinkevich2003online} in first-price auctions. 
Following our work, \citet{bichler_online_2024} generalize our techniques and prove that mean-based algorithms can iteratively eliminate dominated strategies in more general games.

\paragraph{\bf Organization of the paper.}
We discuss the model and preliminaries in Section~\ref{sec:model} and present our main results in Section~\ref{sec:main-result}.  
Section~\ref{sec:main-proof} presents the proof of Theorem~\ref{thm:main-M1-3}, which covers the main ideas and proof techniques of all our convergence results.
Section~\ref{sec:body-exp} includes experimental results.
We conclude and discuss future directions in Section \ref{sec:conclusion}. 
Missing proofs from Section~\ref{sec:main-result} and~\ref{sec:main-proof} are in Appendix~\ref{app:main-result-proof} and~\ref{app:main-proof} respectively.    

\section{Model and Preliminaries}
\label{sec:model}
\paragraph{\bf Repeated first-price auctions.} 
We consider repeated first-price sealed-bid auctions where a single seller sells a good to a set of $N\ge 2$ players (bidders) $\mathcal N=\{1, 2, ..., N\}$ for infinitely many rounds.  
Each player $i\in\mathcal N$ has a fixed private value $v^i$ for the good throughout.
See Section~\ref{sec:discussion} for a discussion on this assumption. 
We assume that $v^i$ is a positive integer in some range $\{1, \ldots, V\}$ where $V$ is an upper bound on $v^i$.  Assume $V\ge 3$. 
No player knows other players' values.
Without loss of generality, assume $v^{1}\ge v^{2}\ge \cdots \ge v^{N}$.




At each round $t\ge 1$ of the repeated auctions, each bidder $i$ submits a bid $b^i_t \in \{0, 1, \ldots, V\}$ to compete for the good.  A discrete set of bids captures the reality that the minimum unit of money is a cent.  The bidder with the highest bid wins the good.  If there are more than one highest bidders, the good is allocated to one of them uniformly at random.  The bidder who wins the good pays her bid $b^i_t$, obtaining utility $v^i - b^i_t$; other bidders obtain utility $0$.
Let $u^i(b^i_t, \bm b^{-i}_t)$ denote bidder $i$'s (expected) utility when $i$ bids $b^i_t$ while other bidders bid $\bm b^{-i}_t = (b^1_t, \ldots, b^{i-1}_t, b^{i+1}_t, \ldots, b^N_t)$, i.e., $u^i(b^i_t, \bm b^{-i}_t) = (v^i - b^i_t)\mathbb{I}[b^i_t = \max_{j\in \mathcal N} b^j_t] \frac{1}{|\argmax_{j\in \mathcal N} b^j_t|}$.  

We assume that bidders never bid above or equal to their values since that brings them negative or zero utility, which is clearly dominated by bidding $0$.  We denote the set of possible bids of each bidder $i$ by $\mathcal B^i=\{0, 1, \ldots, v^i-1\}$. 

\paragraph{\bf Online learning.} 
We assume that each bidder $i\in \mathcal N$ chooses her bids at every round using an online learning algorithm.  Specifically, we regard the set of possible bids $\mathcal{B}^i$ as a set of actions (or arms).  At each round $t$, the algorithm picks (possibly in a random way) an action $b^i_t \in \mathcal B^i$ to play, and then receives some feedback.  The feedback may include the rewards (i.e., utilities) of all possible actions in $\mathcal B^i$ (in the \emph{experts} setting) or only the reward  of the chosen action $b^i_t$ (in
the \emph{multi-arm bandit} setting). 
With feedback, the algorithm updates its choice of actions in future rounds.  We do not assume a specific feedback model in this work.  Our analysis will apply to all online learning algorithms that satisfy the following property, called ``mean-based'' \citep{braverman_selling_2018, feng_convergence_2021}, which roughly says that the algorithm picks actions with low average historical rewards with low probabilities.
\begin{definition}[mean-based algorithm]
Let $\alpha_t^i(b)$ be the average reward of action $b$ in the first $t$ rounds: $\alpha_t^i(b) = \frac{1}{t} \sum_{s=1}^t u^{i}(b, \bm b^{-i}_s)$.
An algorithm is \emph{$\gamma_t$-mean-based} if, for any $b\in \mathcal B^i$, whenever there exists $b'\in\mathcal B^i$ such that $\alpha_{t-1}^i(b') - \alpha_{t-1}^i(b) > V \gamma_t$, the probability that the algorithm picks $b$ at round $t$ is at most $\gamma_t$. 
An algorithm is \emph{mean-based} if it is $\gamma_t$-mean-based for some decreasing sequence $(\gamma_t)_{t=1}^\infty$ satisfying $\gamma_t \to 0$ as $t\to\infty$.
\end{definition}

In this work, we assume that the online learning algorithm runs for infinitely many rounds. This captures the scenario where bidders do not know how long they will be in the auction and hence use infinite-horizon learning algorithms.   
Infinite-horizon mean-based algorithms can be obtained by modifying classical finite-horizon mean-based algorithms (e.g., MWU) with constant learning rates to have decreasing learning rates. Some examples of infinite-horizon mean-based algorithms are given below:  
\begin{example}\label{ex:mean-based-algorithms}
Let $(\eps_t)_{t=1}^\infty$ be a decreasing sequence approaching $0$.  The following algorithms are mean-based: 
\begin{itemize}
    \item \emph{Follow the Leader} (also called \emph{Greedy}): at each round $t\ge 1$, each player $i\in \mathcal N$ chooses an action $b\in \argmax_{b\in\mathcal B^i}\{ \alpha_{t-1}^i(b) \}$ (with a tie-breaking rule specified by the algorithm). 
    \item \emph{$\eps_t$-Greedy}: at each round $t\ge 1$, each player $i\in \mathcal N$ with probability $1-\eps_t$ chooses $b\in \argmax_{b\in\mathcal B^i}\{ \alpha_{t-1}^i(b) \}$, with probability $\eps_t$ chooses an action in $\mathcal B^i$ uniformly at random.
    \item \emph{Multiplicative Weights Update} (MWU, also called \emph{Hedge}): at each round $t\ge 1$, each player $i\in \mathcal N$ chooses each action $b\in\mathcal B^i$ with probability $\frac{w_{t-1}(b)}{\sum_{b'\in\mathcal B^i} w_{t-1}(b')}$, where $w_t(b) = \exp(\eps_t\sum_{s=1}^t u^i(b, \bm b_s^{-i}))$.\footnote{This MWU algorithm is different from another MWU algorithm where the weight is defined by $w_t(b) = w_{t-1}(b) \cdot \exp(\eps_t u^i(b, \bm b_t^{-i})) = \exp(\sum_{s=1}^t \eps_s u^i(b, \bm b_s^{-i}))$.
    The latter algorithm is not mean-based because the rewards $u^i(b, \bm b_s^{-i})$ in earlier rounds matter more than rewards in later rounds given decreasing $\eps_s$. 
    The algorithm we define here treat rewards at different rounds equally and is hence mean-based.}
\end{itemize}
\end{example}

Clearly, Follow the Leader is $(\gamma_t=0)$-mean-based and $\eps_t$-Greedy is $\eps_t$-mean-based.  One can see \cite{braverman_selling_2018} for why MWU is mean-based.  Additionally,  MWU is no-regret when the sequence $(\eps_t)_{t=1}^\infty$ is set to $\eps_t = O(1/\sqrt t)$ 
(see, e.g., Theorem 2.3 in \cite{cesa-bianchi_prediction_2006}). 

%
%
%
%
%
\paragraph{\bf Equilibria in first-price auctions.} Before presenting our main results, we characterize the set of all Nash equilibria in the single-round first-price auction where bidders have fixed values $v^1\ge v^2 \ge \cdots \ge v^N$.  We focus on pure-strategy Nash equilibria. 
Recall that $u^i(b^i, \bm b^{-i})$ denotes the utility of bidder $i$ when she bids $b^i$ while others bid $\bm b^{-i} = (b^1, \ldots, b^{i-1}, b^{i+1}, \ldots, b^N)$.  A bidding profile $\bm b = (b^1, \ldots, b^N) = (b^i, \bm b^{-i})$ is called a \emph{Nash equilibrium} if $u^i(\bm b) \ge u^i(b', \bm b^{-i})$ for any $b'\in \mathcal B^i$ and any $i\in \mathcal N$.  
Let $M^i$ be the set of bidders who have the same value as bidder $i$, $M^i = \{j\in \mathcal N: v^j=v^i\}$. $M^1$ is the set of bidders with the highest value. 
\begin{proposition}\label{prop:Nash}
The set of (pure-strategy) Nash equilibria in the first-price auction with fixed values $v^1\ge v^2 \ge \cdots \ge v^N$ are bidding profiles $\bm b = (b^1, \ldots, b^N)$ that satisfy the following: 
\begin{itemize}
    \item The case of $|M^1| \ge 3$: $b^i=v^1-1$ for $i \in M^1$ and $b^j \le v^1-2$ for $j\notin M^1$.
    \item The case of $|M^1| = 2$: 
    \begin{itemize}
        \item If $N=2$ or $v^1 = v^2 > v^3 + 1$: there are two types of Nash equilibria: (1) $b^1=b^2=v^1-1$, with $b^j\le v^1-3$ for $j\notin M^1$; (2) $b^1=b^2=v^1-2$, with $b^j\le v^1-3$ for $j\notin M^1$. 
        \item If $N>2$ and $v^1=v^2 = v^3 + 1$: $b^1=b^2=v^1-1$ and $b^j \le v^1-2$ for $j\notin M^1$.  
    \end{itemize}
    \item The case of $|M^1|=1$:
    \begin{itemize}
        \item Bidding profiles that satisfy the following are Nash equilibria: $b^1=v^2$, at least one bidder in $M^2$ bids $v^2-1$, all other bidders bid $b^j\le v^2-1$.  
        \item If $v^1=v^2+1$ and $|M^2|=1$, then there is another type of Nash equilibria: $b^1=b^2=v^2-1$, $b^j\le v^2-2$ for $j\notin\{1, 2\}$. 
    \end{itemize}
\end{itemize}
There are no other (pure-strategy) Nash equilibria. 
\end{proposition}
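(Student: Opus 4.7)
The plan is to prove Proposition~\ref{prop:Nash} by fixing a candidate Nash equilibrium, letting $w = \max_i b^i$ denote the top bid and $k$ the number of bidders tied there, and enumerating the profitable deviations available to each type of bidder: a top bidder shading down to $w-1$ or jumping up to $w+1$, a non-top bidder joining the tie at $w$ or overbidding to $w+1$, and an outside bidder entering the tie. The starting observation is that at least one bidder of $M^1$ must sit at the top; otherwise $w \le v^1 - 2$ and any $i \in M^1$ strictly prefers the deviation to $w+1$, winning outright for utility at least $1$.

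I would first handle the solo-top case $k = 1$. Let $w'$ be the runner-up bid. If $w' \le w - 2$, the solo top bidder $i$ shades to $w-1$ and still wins alone for strictly higher payoff, so $w' = w - 1$. Her further deviation to $w-1$ then yields $(v^i - w + 1)/(k'+1)$ against a current payoff $v^i - w$, and since $v^i - w$ is a positive integer and $k' \ge 1$, the inequality $(v^i - w + 1)/(k'+1) > v^i - w$ fails. What remains is the requirement that no runner-up profits by jumping to $w$, forcing $v^j \le w$ for all $j \ne i$ and hence $i = 1$, $w = v^2$, with the runner-ups being a nonempty subset of $M^2$ bidding $v^2 - 1$. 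This is exactly the first equilibrium family in the $|M^1| = 1$ case.

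For the multi-tie case $k \ge 2$, the decisive inequality is the ``jump-up'' calculus for an $M^1$ bidder already at the top: bidding $w+1$ is profitable iff $v^1 - w - 1 > (v^1 - w)/k$, equivalently $v^1 - w > k/(k-1)$. Combined with $w \le v^1 - 1$ this narrows $w \in \{v^1 - 1, v^1 - 2\}$; further, $w = v^1 - 2$ survives only when $k = 2$ (since $k/(k-1) \le 3/2$ forces profitable jumping once $k \ge 3$), and $w = v^1 - 1$ forces every tied bidder into $M^1$ (no outsider can bid $v^1 - 1$). A symmetric check on non-top $M^1$ bidders---who profitably join the tie at $w = v^1 - 1$ to obtain $1/(k+1) > 0$---forces all of $M^1$ into the tie when $w = v^1 - 1$. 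This already yields the unique equilibrium family for $|M^1| \ge 3$, because $k = 2$ with $w = v^1 - 2$ is incompatible with a third $M^1$ bidder who would otherwise profit by joining the tie.

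The remaining bookkeeping handles $|M^1| = 2$ and the tied-top subcase of $|M^1| = 1$. For $|M^1| = 2$ and $w = v^1 - 2$, the only potential spoiler is an outside bidder $j$ with $v^j = v^1 - 1$ who could join the tie and gain $1/3$; excluding this gives exactly the split $N = 2$ or $v^1 > v^3 + 1$ in the statement. For $|M^1| = 1$ with $k \ge 2$, the jump-up inequality for bidder $1$ reduces to $v^1 - v^2 \le 1$ with $k = 2$, yielding the second equilibrium family exactly when $v^1 = v^2 + 1$ and $|M^2| = 1$---otherwise another $M^2$ bidder would profitably join the tie at $v^2 - 1$. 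The main obstacle throughout is the bookkeeping of boundary cases: the integrality of $v^i - w$ and the sharp threshold $k/(k-1)$ make the marginal equilibria $w = v^1 - 2$ and $v^1 = v^2 + 1$ delicate, but each is ultimately pinned down by one explicit deviation.
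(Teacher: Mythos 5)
Your case analysis is correct: the decomposition by the top bid $w$ and the tie size $k$, together with the three deviation types (shade to $w-1$, jump to $w+1$, join the tie at $w$), does exhaust the relevant deviations in this discrete first-price game, and your threshold computations ($x>k/(k-1)$ for the jump-up, the integrality argument $\frac{v^i-w+1}{k'+1}\le v^i-w$ for the solo winner, and the boundary checks at $w=v^1-2$, $k=2$ and $v^1=v^2+1$, $|M^2|=1$) all check out and recover exactly the families listed in Proposition~\ref{prop:Nash}. Note that the paper itself omits the proof, stating only that it is straightforward, so there is no authorial argument to compare against; your write-up supplies precisely the kind of direct verification the authors had in mind, and the only places where you compress (e.g., ruling out a tie at $v^1-2$ between one $M^1$ bidder and an outside bidder of value $v^1-1$ when $|M^1|=2$, which fails because the excluded $M^1$ bidder would join the tie for $2/3>0$) are routine and consistent with the rest of your argument.
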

\noindent The proof of this proposition is straightforward and omitted. 
Intuitively, whenever more than one bidder has the highest value ($|M^1| \ge 2$), they should compete with each other by bidding $v^1-1$ (or $v^1-2$ if $|M^1|=2$ and no other bidders are able to compete with them).  When $|M^1| = 1$, the unique highest-value bidder (bidder $1$) competes with the second-highest bidders ($M^2$).

\section{Main Results: Convergence of Mean-Based Algorithms}
\label{sec:main-result}
We introduce some additional notations.  Let $\bm x^i_t\in \Delta(\mathcal B^i)$ be the mixed strategy of player $i$ in round $t$, where the $b$-th component of $\bm x^i_t$ is the probability that player $i$ bids $b\in\mathcal B^i$ in round $t$. The sequence $(\bm x^i_t)_{t=1}^\infty$ is a stochastic process, where the randomness comes from the bidder's randomized learning algorithms. Let $\bm 1_{b}$ be the vector $(0, ..., 0, 1, 0, ..., 0)$ where $1$ is in the $b$-th position. 

Our main results about the convergence of mean-based algorithms in repeated first-price auctions depend on the number of bidders with the highest value, $|M^1|$.  
\subsection{The case of $|M^1| \ge 3$} 
\begin{theorem}\label{thm:main-M1-3}
If $|M^1|\ge 3$ and every bidder follows a mean-based algorithm, then, with probability $1$, both of the following events happen:
\begin{itemize}
\item Time-average convergence of bid sequence: $\lim_{t\to\infty} \frac{1}{t} \sum_{s=1}^t \mathbb{I}\big[ \forall i\in M^1, b_s^i=v^1-1 \big] = 1$.
\item Last-iterate convergence of mixed strategy profile: $\forall i\in M^1,~ \lim_{t\to\infty} \bm x^i_t = \bm 1_{v^1-1}$.
\end{itemize}
\end{theorem}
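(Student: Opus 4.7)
The plan is to formalize the iterated elimination of dominated bids for the top bidders in $M^1$ as an induction over bid levels $k = 0, 1, \ldots, v^1 - 2$, combining the mean-based property with an Azuma-type concentration argument in the spirit of \cite{feng_convergence_2021}. The inductive claim at level $k$ is: almost surely, for every $i \in M^1$, the mixed-strategy mass on low bids $\sum_{b \le k}(\bm x^i_t)_b$ tends to $0$ as $t \to \infty$, and the empirical frequency $\frac{1}{t}\sum_{s=1}^{t}\mathbb{I}[b^i_s \le k]$ also tends to $0$. The terminal case $k = v^1 - 2$ is exactly the two statements of the theorem, since the only remaining bid for $i \in M^1$ is $v^1 - 1$, and bidders $j \notin M^1$ can only bid at most $v^j - 1 \le v^1 - 2$.

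The inductive step exploits the clean dominance gap that opens up precisely when $|M^1| \ge 3$. Assume at level $k-1$ that bidders in $M^1$ essentially never bid below $k$. For any $i \in M^1$, bidding exactly $k$ yields positive utility only on rounds where no other bidder exceeds $k$; under the inductive hypothesis, this forces at least two other bidders of $M^1$ to bid exactly $k$ as well, splitting $i$'s reward to at most $(v^1 - k)/|M^1| \le (v^1 - k)/3$. Bidding $k+1$ instead wins outright on every such round and collects $v^1 - k - 1$. The inequality $v^1 - k - 1 > (v^1 - k)/3$ holds for every $k \le v^1 - 2$, producing a strictly positive per-round advantage $\Delta_k$ whenever the ``tie-at-$k$'' event has positive frequency.

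To lift this per-round dominance into an advantage in the cumulative averages $\alpha^i_t$, I would partition the time axis after the inductive time $T_{k-1}$ into geometrically growing blocks and, within each block, apply Azuma's inequality to the martingale differences $\mathbb{I}[b^j_s = k] - \Pr[b^j_s = k \mid \mathcal{F}_{s-1}]$ to show that realized counts concentrate around their conditional means. A dichotomy then closes the induction: either the conditional frequency of the ``tie-at-$k$'' event already vanishes, in which case the mixed-strategy mass $(\bm x^i_t)_k$ is already negligible; or this frequency is bounded away from $0$, in which case $\alpha^i_t(k+1) - \alpha^i_t(k) = \Omega(1)$ eventually exceeds $V\gamma_t$, so the mean-based definition forces $(\bm x^i_t)_k \le \gamma_t \to 0$. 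Combined with the level $k-1$ hypothesis, this yields $\sum_{b \le k}(\bm x^i_t)_b \to 0$ almost surely, and a further Azuma/Borel--Cantelli argument converts this to almost-sure time-average convergence of the realized bids.

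I expect the main obstacle to be the dependence structure across bidders and across induction levels. Bidder $i$'s mixed strategy at round $t$ depends on the joint history of every player, whose law in turn depends on $i$'s past moves, so the filtrations have to be set up carefully for Azuma to apply, and the small error probabilities incurred at each of the $v^1$ successive phases must remain summable so that Borel--Cantelli delivers almost-sure (not merely in-probability) statements. A secondary subtlety is that the induction must advance all bidders of $M^1$ in lockstep, since the dominance argument for bidder $i$ at level $k$ uses the level $k-1$ hypothesis simultaneously for every other bidder in $M^1 \setminus \{i\}$ rather than one bidder at a time.
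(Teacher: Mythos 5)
Your overall skeleton --- level-by-level elimination of bids, the mean-based property combined with Azuma's inequality on geometrically growing time blocks, advancing all bidders of $M^1$ in lockstep --- is the same as the paper's. But there is a genuine gap in the first branch of your dichotomy. When the historical frequency of the ``tie-at-$k$'' event (equivalently, of rounds where the other bidders' maximum bid is at most $k$) is small, you assert that the mixed-strategy mass on $k$ ``is already negligible.'' That does not follow from mean-basedness: the mean-based property only constrains the probability of playing $k$ when some \emph{specific} alternative bid $b'$ satisfies $\alpha_{t-1}^i(b') - \alpha_{t-1}^i(k) > V\gamma_t$, and in this branch your chosen comparator $k+1$ may have an equally small average reward (e.g.\ if the others' maximum bid is usually at least $k+2$), so the comparison of $k$ against $k+1$ gives nothing. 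The missing ingredient --- the crux of the paper's Claim~\ref{claim:mean-based} --- is a fallback comparator whose average reward is uniformly bounded below: bidding $v^1-1$ always wins or ties, so $\alpha_{t-1}^i(v^1-1) \ge \frac{1}{N}$, and once the smallness hypothesis forces $\alpha_{t-1}^i(k) < \frac{1}{N} - V\gamma_t$, the mean-based property applies with $b' = v^1-1$. Without exhibiting such a uniformly good bid, the inference in that branch is false for a general mean-based algorithm. Relatedly, the dichotomy must be a per-round threshold condition on the historical frequency (as in the paper's $P_{t-1}^i(0:k-1) < \frac{1}{2VN} - 2\gamma_t$), not an asymptotic ``vanishes versus bounded away from $0$'' alternative, which is not exhaustive along a single random path.

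A second, smaller gap is the passage to probability $1$. With a fixed starting block length, the per-block Azuma failure probabilities sum to a positive constant, and since each block's guarantee is conditioned on all earlier blocks succeeding, Borel--Cantelli (``only finitely many bad blocks'') does not rescue the induction once an early block fails. The paper instead shows, for every sufficiently large parameter $T_b$, that the entire infinite induction succeeds with probability at least $1-\eps(T_b)$ where $\eps(T_b)\to 0$, and then notes that the convergence event contains the success event for each $T_b$, so it has probability $1$; your plan needs this (or an equivalent restart argument) rather than summability alone. The remaining elements of your outline --- using the three-way tie bound (valid since $|M^1|\ge 3$) uniformly at every level $k\le v^1-2$, where the paper separates $k\le v^1-3$ (Claim~\ref{claim:mean-based}, which only needs two top bidders and is reused for Theorem~\ref{thm:main-M1-2}) from $k=v^1-2$ (Claim~\ref{claim:mean-based-v-2}), and the supermartingale/Azuma setup with respect to the history filtration --- are consistent with the paper's proof.
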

Theorem~\ref{thm:main-M1-3} can be interpreted as follows.  According to Proposition~\ref{prop:Nash}, when $|M^1| \ge 3$, the bidding profile $\bm b_s$ at round $s$ is a Nash equilibrium if and only if $\forall i\in M^1, b_s^i=v^1-1$, with bidders not in $M^1$ bidding $\le v^1-2$ by assumption (note that the bidders not in $M^1$ can follow a mixed strategy and need not converge to a deterministic bid).  Hence, the first result of Theorem~\ref{thm:main-M1-3} implies that the fraction of rounds where bidders play a Nash equilibrium approaches $1$ in the limit.  The second result shows that all bidders in $M^1$ bid $v^1-1$ with certainty eventually, achieving a Nash equilibrium. We will prove Theorem \ref{thm:main-M1-3} in Section~\ref{sec:main-proof}. 

\subsection{The case of $|M^1|=2$}



\begin{theorem}\label{thm:main-M1-2}
If $|M^1|=2$ and every bidder follows a mean-based algorithm, then, with probability $1$, one of the following two events happens:
\begin{itemize}
    \item Time-average convergence to $v^1-2$: $\lim_{t\to\infty}\frac{1}{t}\sum_{s=1}^{t} \mathbb{I}[\forall i\in M^1, \ b_s^i=v^1-2]=1$;
    \item Time-average and last-iterate convergence to $v^1-1$: $\lim_{t\to\infty}\frac{1}{t}\sum_{s=1}^{t} \mathbb{I}[\forall i\in M^1, \ b_s^i=v^1-1]=1~$ and $~\forall i\in M^1$, $\lim_{t\to\infty}\bm x_t^i=\bm{1}_{v^1-1}$.
\end{itemize}
Moreover, if $N > 2$ and $v^3=v^1-1$ then only the second event happens. 
\end{theorem}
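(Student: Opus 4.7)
The plan is to imitate the structure of the proof of Theorem~\ref{thm:main-M1-3}, replacing its single-equilibrium endgame with a dichotomy analysis tailored to $|M^1|=2$, and adding a separate refinement for the special case $v^3=v^1-1$. In \textbf{Part 1 (iterated elimination)}, as in Theorem~\ref{thm:main-M1-3}, I would show by induction on $k=0,1,\ldots,v^1-3$ that after some deterministic time $T_k$ every top bidder $i\in M^1$ places bid $k$ with vanishing probability. The inductive step uses the fact that, once bids below $k$ are played with probability $o(1)$, the counterfactual average reward of bid $k+1$ exceeds that of bid $k$ by at least $\tfrac{1}{2}$ (from tied wins when the other top bidder also bids $k+1$), so the mean-based property suppresses bid $k$. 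A parallel induction confines each non-top bidder $j$ to bids $\le v^j-1\le v^1-2$, and the time-partitioning plus Azuma machinery borrowed from \cite{feng_convergence_2021} converts these probability bounds into an almost-sure confinement from some random time onward.

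\textbf{Part 2 (dichotomy).} Write $n_t^i=\sum_{s\le t}\mathbb{I}[b_s^i=v^1-1]$ for $i\in M^1$. Conditioning on the Part 1 event, a direct computation gives
\[
\alpha_t^i(v^1-1)-\alpha_t^i(v^1-2)=\tfrac{1}{2}\cdot\tfrac{n_t^j}{t}+o(1),
\]
where $j\in M^1\setminus\{i\}$. The claim to establish is: almost surely, either (A) $n_t^i/t\to 0$ for both $i\in M^1$, or (B) $n_t^i/t\to 1$ and $\bm x^i_t\to\bm 1_{v^1-1}$ for both. Case (B) follows whenever $\liminf_t n_t^j/t>0$: the reward gap is then bounded below by a positive constant, so mean-based forces $x_t^i(v^1-2)\le\gamma_t$, and an Azuma argument transfers this to $n_t^i/t\to 1$ and to last-iterate convergence. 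To rule out intermediate or oscillatory behavior I would use a feedback amplification argument exploiting the slow-change property $|n_{t+1}^j/(t+1)-n_t^j/t|\le 2/t$: if $n_{t_k}^j/t_k\ge\eta>0$ along a subsequence, then $n_t^j/t\ge\eta/2$ throughout $[t_k,2t_k]$; mean-based drives $x_t^i(v^1-1)\to 1$ on that linear-length interval, and $n_{2t_k}^i/(2t_k)$ is pushed past $\tfrac{1}{2}$. A geometric iteration along $t_\ell=(1+\varepsilon)^\ell t_k$ then boosts both $n_t^i/t$ up to $1-O(\varepsilon)$, and sending $\varepsilon\to 0$ forces $\liminf_t n_t^i/t=1$ in both coordinates.

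\textbf{Part 3 (special case $v^3=v^1-1$).} Suppose for contradiction that case (A) of Part 2 occurs with positive probability. Bidder $3$ may bid up to $v^3-1=v^1-2$, and on case (A) faces counterfactual average reward tending to $\tfrac{1}{3}$ for bid $v^1-2$ (tying both top bidders) versus $0$ for any lower bid; mean-based therefore forces bidder $3$ to bid $v^1-2$ with probability $\to 1$. But then each top bidder's average reward of $v^1-2$ collapses to $\tfrac{2}{3}$ (three-way tie) while $v^1-1$ still returns $1$, opening a persistent gap of $\tfrac{1}{3}$; mean-based forces the top bidders to $v^1-1$, contradicting case (A). Hence only case (B) survives. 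The main obstacle is the amplification step in Part 2: ruling out asymmetric or oscillatory scenarios such as $\liminf n_t^1/t=0$ with $\limsup n_t^1/t>0$, since mean-based only constrains a bidder when her average-reward gap exceeds $V\gamma_t$ and the two bidders are coupled only through each other's cumulative frequencies; nailing the quantitative constants in the geometric iteration (so that the $\varepsilon\to 0$ limit really delivers $\liminf=1$) and wrapping the whole argument in Azuma-type concentration to pass from conditional bid probabilities $x_s^i(v^1-1)$ to empirical frequencies $n_t^i/t$ is where most of the technical work lies.
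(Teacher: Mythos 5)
Your proposal is correct in outline and follows essentially the same route as the paper's proof: the paper likewise reuses the iterated-elimination machinery of Theorem~\ref{thm:main-M1-3}, then runs a dichotomy on the frequency of $v^1-1$ (its Case 1 versus Case 2, implemented with deterministic checkpoints $T_a^k = C(k+24NV)^2$, a vanishing threshold $16(F_{T_a^k}+\tfrac{2}{k+24NV}+V\gamma_{T_a^k})$, and a bootstrapping sequence $G_{T_a^k}\to 1$ playing the role of your amplification/geometric iteration), and handles $v^3=v^1-1$ by exactly your bidder-$3$ contradiction (its Claims~\ref{claim:bidder3} and~\ref{claim:v-2}). The only small caution is that your displayed gap $\alpha_t^i(v^1-1)-\alpha_t^i(v^1-2)=\tfrac12 n_t^j/t+o(1)$ holds in general only as a lower bound of the form $\tfrac12 f_t^j(v^1-1)-X$ (ties of lower-value bidders at $v^1-2$ can enlarge the gap), but since you only use the lower bound this does not affect the argument.
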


For the case of $N=2$ or $v^3<v^1-1$, according to Proposition~\ref{prop:Nash}, $\bm b_s$ is a Nash equilibrium if and only if both bidders in $M^1$ play $v^1-1$ or $v^1-2$ at the same time, with other bidders bidding $\le v^1-3$. Hence, Theorem \ref{thm:main-M1-2} shows that the bidders eventually converge to one of the two possible types of equilibria. 
Interestingly, experiments show that some mean-based algorithms converge to the equilibrium of $v^1-1$ while some converge to $v^1-2$. Also, one randomized algorithm may converge to different equilibria in different runs. 
See Section~\ref{sec:body-exp} for details. 

In the case of time-average convergence to the equilibrium of $v^1-2$ (the first case of Theorem~\ref{thm:main-M1-2}), the last-iterate convergence result ($\forall i\in M^1$, $\lim_{t\to\infty}\bm x_t^i=\bm{1}_{v^1-2}$) does not always hold. 
Consider an example with 2 bidders, with $v^1=v^2 = 3$. We can construct a $\gamma_t$-mean-based algorithm with $\gamma_t = O(\frac{1}{t^{1/4}})$ such that, with constant probability, $\lim_{t\to\infty}\frac{1}{t}\sum_{s=1}^{t} \mathbb{I}[\forall i\in M^1, \ b_s^i=v^1-2]=1$ holds but in infinitely many rounds we have $\bm x_t^i = \bm 1_2 = \bm 1_{v^1-1}$, so the algorithm does not converge in last iterate. 
The key idea is that, when $\alpha_t^i(1) - \alpha_t^i(2)$ is positive but lower than $V\gamma_t$ in some round $t$ (which happens infinitely often), we can let the algorithm bid $2$ with certainty in round $t+1$.  This does not violate the $\gamma_t$-mean-based property.  The algorithm is presented in Algorithm \ref{alg:ex}, and the full analysis is in Appendix \ref{proof:M2counterexample}. 
\begin{proposition}\label{prop:M2counterexample}
There exists an example with $|M^1| = 2$ and a mean-based algorithm (Algorithm \ref{alg:ex}) such that, with constant probability, 
the bidders time-average converge to the Nash equilibrium of $v^1-2$ but do not last-iterate converge to a Nash equilibrium. 
\end{proposition}

\begin{algorithm}[ht]
\caption{A mean-based bidding algorithm}\label{alg:ex}
\begin{algorithmic}[1]
\Require Value $v = 3$.
\State Let $T_0 = 10^{12}$. $T_k = 32^k T_0$ for $k \ge 0$. $\gamma_t = T_k^{-1/4}$ for $t \in [T_k + 1, T_{k+1}]$.
\State \textbf{In the first $T_0$ rounds:} Bid $b_t = 1$ for $t \le T_0 - T_0^{2/3}$ and bid $b_t = 0$ for $T_0 - T_0^{2/3} + 1 \le t \le T_0$.
\For{$t \ge T_0 + 1$}
\State Find $k$ such that $ 32^kT_0 +1 \le t \le 32^{k+1}T_0$.
\If {$t = T_k +1$, $\argmax_b{\alpha_{t-1}}(b)=1$, and $\alpha_{t-1}^i(1) - \alpha_{t-1}^i(2) < V\gamma_{t}$}
\State Bid $b_t = 2$.
\Else
\State Bid $b_t \in \argmax_{b\in\{0, 1, 2\}}{\alpha_{t-1}}(b)$ with probability $1- T_{k+1}^{-1/3}$ and $0$ with probability $ T_{k+1}^{-1/3}$.
\EndIf
\EndFor
\end{algorithmic}
\end{algorithm}

\subsection{The case of $|M^1|=1$} 
In the case of $|M^1| = 1$, mean-based learning dynamics may not converge to a Nash equilibrium of the first-price auction in time average or last iterate, as shown in the following example.

\begin{example}\label{ex:M1=1}
Let $v^1 = 10$, $v^2 = v^3 = 7$. Assume that players use the \emph{Follow the Leader} algorithm 
with a specific tie-breaking rule.  
They may generate the following bid sequence $(b_t^1,b_t^2,b_t^3)_{t\ge 1}$:
$(7, 6, 1), (7, 1, 6), (7, 1, 1), (7, 6, 1), (7, 1, 6), (7, 1, 1), \ldots$, 
while satisfying $0$-mean-based. 
Note that $(7, 1, 1)$ is not a Nash equilibrium according to Proposition \ref{prop:Nash} but it appears in $\frac{1}{3}$ fraction of rounds, which means that the dynamics do not converge in the time-average sense or the last-iterate sense to a Nash equilibrium. 
\end{example}

\begin{proof}
To prove this example, we need to verify that the players' algorithms are indeed $0$-mean-based under the above bid sequence. Because players $2$ and $3$ always get zero utility no matter what they bid, we only need to verify the $0$-mean-based property for player $1$.  Let $q_t$ be the fraction of rounds in the first $t$ rounds where one of players $2$ and $3$ bids $6$ (in the other $1-q_t$ fraction of rounds both players $2$ and $3$ bid $1$); clearly, $q_t \ge \frac{2}{3}$ for any $t\ge 1$.  For player $1$, at each round $t$ her average utility by bidding $7$ is $\alpha_{t-1}^1(7) = 10 - 7 = 3$; by bidding $6$, $\alpha^1_{t-1}(6) = (10-6)(\frac{1}{2}q_{t-1} + (1-q_{t-1})) = 4 (1-\frac{q_{t-1}}{2}) \le \frac{8}{3} < 3$; by bidding $2$,  $\alpha^1_{t-1}(2) = (10-2)(1-q_{t-1}) \le \frac{8}{3} < 3$; and clearly $\alpha^1_{t-1}(b) < 3$ for any other bid.  Hence, $7 = \argmax_{b\in \mathcal B^1}\{\alpha_{t-1}^1(b)\}$, player $1$ satisfies $0$-mean-based. 
\end{proof}

Example~\ref{ex:M1=1} also shows that, in the case of $|M^1|=1$, the learning dynamics generated by a mean-based algorithm may not converge to Nash equilibrium in the classical sense of ``convergence of empirical distribution''.  Specifically, let $p_t^i = \frac{1}{t} \sum_{s=1}^t \bm{1}_{b_s^i} \in \Delta(\mathcal B^i)$ be the empirical distribution of player $i$'s bids up to round $t$.  ``Convergence of empirical distribution'' means that the players' empirical distributions $(p_t^1,p_t^2,p_t^3)_{t\ge1}$ converge to a mixed-strategy Nash equilibrium in the limit.
In Example \ref{ex:M1=1}, the players' empirical distributions converge to $(p^1,p^2,p^3)$ where $p^1(7) = 1$ and for $i=2, 3$, $p^i(6) = \frac{1}{3}$ and $p^i(1) = \frac{2}{3}$.  Given bidders 2 and 3's strategies $(p^2, p^3)$, bidder $1$ can obtain utility $(10-2)(\frac{2}{3})^2 = \frac{32}{9}$ by bidding $2$, which is larger than the utility of bidding $7$, which is $10-7 = 3$.  Thus, bidder $1$'s strategy $p^1$ is not a best response to $(p^2, p^3)$, hence $(p^1, p^2, p^3)$ is not a Nash equilibrium.

The mean-based algorithm in Example~\ref{ex:M1=1} is not no-regret.  In Section~\ref{sec:body-exp} we show by experiments that non-convergence phenomena also hold for no-regret mean-based algorithms, e.g., MWU. 

\section{Proof of Theorem~\ref{thm:main-M1-3}}
\label{sec:main-proof}
The proof of Theorem~\ref{thm:main-M1-3} covers the main ideas and proof techniques of our convergence results, so we present it here. 
We first provide a proof sketch. Then in Section~\ref{sec:property} we provide some properties of mean-based algorithms that will be used in the formal proof.  Section~\ref{sec:first-stage} and Section~\ref{sec:main-M1-3-proof} prove Theorem~\ref{thm:main-M1-3}. 

\paragraph{Proof sketch}
At a high level, the proof uses the idea of iterative elimination of dominated strategies in game theory. 
We first use an induction argument to show that bidders with the highest value (i.e., those in $M^1$) will gradually learn to eliminate bids $0, 1, \ldots, v^1-3$.  Then we further prove that: if $|M^1|=3$, they will eliminate $v^1-2$ and hence converge to $v^1-1$; if $|M^1|=2$, the two bidders may end up playing $v^1-1$ or $v^1-2$.

To see why bidders in $M^1$ will learn to eliminate $0$, suppose that there are two bidders in total and one of them (say, bidder $i$) bids $b$ with probability $P(b)$ in history.  For the other bidder (say, bidder $j$), if bidder $j$ bids $0$, she obtains utility $\alpha(0) = (v^1-0)\frac{P(0)}{2}$; if she bids $1$, she obtains utility $\alpha(1) = (v^1-1)(P(0)+\frac{P(1)}{2})$.  Since $\alpha(1) - \alpha(0) = \frac{v^1-2}{2}P(0) + (v^1-1)\frac{P(1)}{2} > 0$ (assuming $v^1\ge 3$), bidding $1$ is better than bidding $0$ for bidder $j$.  Given that bidder $j$ is using a mean-based algorithm, she will play $0$ with small probability (say, zero probability).  The same argument applies to bidder $i$.  Hence, both bidders will learn to not play $0$.  Then we take an inductive step: assuming that no bidders play $0, \ldots, k-1$, we have $\alpha(k+1) - \alpha(k) = \frac{v^1-k-2}{2}P(k) + \frac{v^1-k-1}{2}P(k+1) > 0$ for $k\le v^1-3$, therefore $k+1$ is a better response than $k$ and both players will avoid bidding $k$.  An induction shows that they will finally learn to avoid $0, 1, \ldots, v^1-3$. 
Then, for the case of $|M^1|\ge 3$, we will use an additional lemma (Lemma~\ref{lemma:mean-based-v-2}) to show that, if bidders bid $0, 1, \ldots, v^1-3$ rarely in history, they will also avoid bidding $v^1-2$ in the future.

However, mean-based learning bidders may choose dominated bids with a small but positive probability. To prove a high-probability convergence result, we use and generalize a time-partitioning technique from \cite{feng_convergence_2021}. We partition the time horizon into periods $1 < T_0 < T_1 < T_2 < \cdots$.  If bidders bid $0, 1, \ldots, k-1$ with low frequency from round $1$ to $T_{k-1}$, then using the mean-based properties in Lemma \ref{lemma:mean-based} and Lemma~\ref{lemma:mean-based-v-2}, we show that they will bid $k$ with probability at most $\gamma_t$ in each round from $T_{k-1}+1, T_{k-1}+2, \ldots, $ to $ T_k$. A use of Azuma's inequality shows that the frequency of bid $k$ in period $(T_{k-1}, T_k]$ is also low with high probability, which concludes the induction. 
By constructing an appropriate partition and taking a union bound over all periods, we are able to prove that the frequency of bids less than $v^1-1$ converges to $0$ with high probability. 




\subsection{Iterative Elimination Properties of Mean-Based Algorithms} \label{sec:property}
We define some notations for the proofs.  Let $P_t^i(k)$ be the frequency that the highest bid submitted by bidders other than $i$ is $k$ during the first $t$ rounds:
\[ P_t^i(k) = \frac{1}{t} \sum_{s=1}^t \mathbb{I}[\max_{j\ne i} b_s^j = k ].\]
Let $P_t^i(0:k) = \sum_{\ell = 0}^k P_t^i(\ell)$, and $P_t^i(0:-1) = 0$.  Let $Q_t^i(k)$ be the probability that bidder $i$ wins the item with ties if she bids $k$ in history: 
\[ Q_t^i(k) = \frac{1}{t} \sum_{s=1}^t \mathbb{I}[\max_{j\ne i} b_s^j = k]\frac{1}{|\argmax_{j\ne i} b_s^j| + 1}.\]
Clearly, 
\begin{equation}\label{eq:P-Q}
0 ~ \le ~ \frac{1}{N} P_t^i(k) ~ \le ~ Q_t^i(k) ~ \le ~ \frac{1}{2} P_t^t(k) ~ \le ~ \frac{1}{2}.
\end{equation} 
Recall that $\alpha_t^i(k)$ is bidder $i$'s average utility by bidding $k$ in the first $t$ rounds.  We can write $\alpha_t^i(k)$ using $P_t^i(0:k-1)$ and $Q_i^t(k)$: 
\begin{equation}\label{eq:alpha-P-Q}
    \alpha_t^i(k) = (v^i-k) \big( P_t^i(0:k-1) + Q_t^i(k) \big). 
\end{equation}
We use $H_t$ to denote the history of the first $t$ rounds, which includes the realization of all randomness in the first $t$ rounds.  Bidders themselves do not necessarily observe the full history $H_t$.  Given $H_{t-1}$, each bidder's mixed strategy $\bm x_t^i$ at round $t$ is fully determined, and the $k$-th component of $\bm x_t^i$ is $\Pr[b_t^i = k\mid H_{t-1}]$.  The following lemma shows that, if other bidders rarely bid $0$ to $k-1$ in history, then bidder $i\in M^1$ will not bid $k$ with large probability in round~$t$, for $k\le v^1-3$.
\begin{lemma}\label{lemma:mean-based}
Assume $v^1\ge 3$.  For any $i\in M^1$, any $k\in \{0, 1, \ldots, v^1-3\}$, any $t$ such that $\gamma_t < \frac{1}{12NV}$, if the history $H_{t-1}$ of the first $t-1$ rounds satisfies $P_{t-1}^i(0:k-1) \le \frac{1}{3NV}$, then $\Pr[b_t^i = k \mid H_{t-1} ] \le \gamma_t$. 
\end{lemma}

The intuition behind Lemma~\ref{lemma:mean-based} is that, if other bidders never bid $0$ to $k-1$, then bidder $i$'s bid $k$ will be dominated by a mixed strategy between bidding $k+1$ and $v^1-1$. However, Lemma~\ref{lemma:mean-based} is stronger than the classical notion of iterative dominance: it requires bidder $i$'s bid $k$ to be dominated even when other bidders bid $0$ to $k-1$ with a small constant probability $\frac{1}{3NV}$.  This stronger property is crucial to our proof of Theorem \ref{thm:main-M1-3}. 


\begin{proof}{Proof of Lemma~\ref{lemma:mean-based}}
If $\alpha_{t-1}^i(k+1) - \alpha_{t-1}^i(k) > V \gamma_{t}$, then by the mean-based property, the conditional probability $\Pr[b_t^i = k \mid H_{t-1}] \le \gamma_t$, so the lemma holds. Then, we consider the case where $\alpha_{t-1}^i(k+1) - \alpha_{t-1}^i(k) \le V \gamma_t$.  Using \eqref{eq:alpha-P-Q} and \eqref{eq:P-Q}, 
\begin{align*}
    V\gamma_t & \ge \alpha_{t-1}^i(k+1) - \alpha_{t-1}^i(k) \\
    & = (v^1-k-1) \Big( P_{t-1}^i(0:k) + Q_{t-1}^i(k+1) \Big) \\
    & \quad - (v^1-k) \Big( P_{t-1}^i(0:k-1) + Q_{t-1}^i(k) \Big) \\
    & \ge (v^1-k-1) P_{t-1}^i(k) \\
    & \quad - P_{t-1}^i(0:k-1) - (v^1-k) Q_{t-1}^i(k) \\
    & \ge (v^1-k-1) P_{t-1}^i(k) \\
    & \quad  - P_{t-1}^i(0:k-1)  - (v^1-k) \tfrac{P_{t-1}^i(k)}{2},
\end{align*}
which implies 
\begin{equation}\label{eq:P-gamma-Pk-1}
    P_{t-1}^i(k) \le \tfrac{2}{v^1-k-2}\big( V\gamma_t + P_{t-1}^i(0:k-1) \big).
\end{equation}
We then upper bound $\alpha_{t-1}^i(k)$ as follows: 
\begin{align*}
     & \alpha_{t-1}^i (k) ~ \le ~ (v^1-k)\big(P_{t-1}^i(0:k-1) + \tfrac{1}{2} P_{t-1}^i(k)\big) \\
      & \le (v^1-k) P_{t-1}^i(0:k-1) \hspace{6em} \text{by \eqref{eq:P-gamma-Pk-1}}\\
      & \quad + \tfrac{v^1-k}{v^1-k-2}\big(V \gamma_t + P_{t-1}^i(0:k-1)\big) \\
    & = \tfrac{v^1-k}{v^1-k-2}V\gamma_t + \big( v^1-k + \tfrac{v^1-k}{v^1-k-2} \big) P_{t-1}^i(0:k-1) \\
    & \le 3V\gamma_t + \big( v^1-k + 3 \big) P_{t-1}^i(0:k-1) \\
     & \le 3V\gamma_t + 2V P_{t-1}^i(0:k-1). 
\end{align*}
By the condition of the lemma, we have $P_{t-1}^i(0:k-1) \le \frac{1}{3NV} < \frac{1}{2NV} - 2\gamma_t$. So 
\[ \alpha_{t-1}^i(k) < 3V\gamma_t + 2V \left(\tfrac{1}{2NV} - 2\gamma_t\right) = \tfrac{1}{N} - V \gamma_t. \]
Then, we note 
that $\alpha_{t-1}^i(v^1-1) = P_{t-1}^i(0:v^1-2) + Q_{t-1}^i(v^1-1) \ge \frac{1}{N} P_{t-1}^i(0:v^1-1) = \frac{1}{N}\cdot 1$ where the last equality 
holds
because no bidder bids above $v^1-1$ by assumption. 
Therefore, 
\[ \alpha_{t-1}^i(v^1-1) -  \alpha_{t-1}^i(k) > \tfrac{1}{N} - \left(\tfrac{1}{N} - V \gamma_t\right) = V\gamma_t.\] 
%
From the mean-based property, we obtain $\Pr[b_t^i = k \mid H_{t-1}] \le \gamma_t$. 
 \end{proof}

The following lemma is for $k = v^1-2$: if bidders rarely bid $0$ to $v^1-3$ in history and $|M^1| \ge 3$, then bidder $i\in M^1$ will not bid $v^1-2$ with large probability in round~$t$. Intuitively, this is because $v^1 - 2$ is dominated by $v^1-1$. See Appendix \ref{app:mean-based-v-2} for the proof. 
\begin{lemma}\label{lemma:mean-based-v-2}
Suppose $|M^1|\ge 3$ and $v^1\ge 2$.  For any $t$ such that $\gamma_t < \frac{1}{12N V}$, if the history $H_{t-1}$ of the first $t-1$ rounds satisfies $\frac{1}{t-1} \sum_{s=1}^{t-1} \mathbb{I}[\exists i\in M^1, b_s^i \le v^1-3] \le \frac{1}{3NV}$, then, $\forall$ $i\in M^1$, $\Pr[b_t^i = v^1- 2\mid H_{t-1} ] \le \gamma_t$.  
\end{lemma}


\subsection{Iteratively Eliminating Bids $0, 1, \ldots, v^1-3$}\label{sec:first-stage}
In this subsection we will use an induction argument to prove that, after a sufficiently long time, bidders in $M^1$ will rarely bid $0, 1, \ldots, v^1-3$ (Corollary~\ref{cor:first-stage-end}). 
We partition the time horizon into $v^1-3$ periods.
Let constants $c = 1+\frac{1}{12NV}$ and $d = \lceil \log_c (8NV) \rceil$. Let $T_b$ be any (constant) integer such that $\gamma_{T_b} < \frac{1}{12N^2 V^2}$ and $\exp\left( -\frac{(c-1)T_b}{1152N^2V^2} \right) \le \frac{1}{2}$.
Let $T_0 = 12N V T_b$ and $T_k = c^d T_{k-1} = c^{dk}T_0 \ge (8N V)^kT_0$ for $k \in \{1, 2, \ldots, v^1-3\}$. Let $A_k$ be event
$$ A_k = \left[ \frac{1}{T_k} \sum_{t=1}^{T_k} \mathbb{I}[\exists i\in M^1, b_t^i\le k]\le \frac{1}{4N V} \right], $$
which says that bidders in $M^1$ bid $0, 1, \ldots, k$ not too often in the first $T_k$ rounds. Our goal is to show that $\Pr[A_{v^1-3}]$ is high. 

The \textbf{base case} is to show that $A_0$ happens with high probability. 
\begin{lemma}\label{lem:base-case}
$\Prx{A_0} \ge 1 - \exp\left(-\frac{T_b}{24NV}\right)$.
\end{lemma}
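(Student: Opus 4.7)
The plan is to apply Claim~\ref{claim:mean-based} at $k=0$, combined with Azuma's inequality applied to the indicator $X_t := \mathbb{I}[\exists i \in M^1 : b_t^i = 0]$. The key simplification for this base case is that $P_{t-1}^i(0{:}-1) = 0$ holds trivially, so the hypothesis of Claim~\ref{claim:mean-based} reduces to the scalar condition $\gamma_t < \frac{1}{4VN}$. By monotonicity of $(\gamma_t)$ and the choice $\gamma_{T_b} < \frac{1}{12N^2V^2}$, this holds for every $t \ge T_b+1$. The claim then gives $\Pr[b_t^i = 0 \mid H_{t-1}] \le \gamma_t$ for each $i \in M^1$ and each such $t$, and a union bound over $|M^1| \le N$ yields $\mathbb{E}[X_t \mid H_{t-1}] \le N\gamma_t$ on this range.

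Next I would introduce the martingale differences $Y_t := X_t - \mathbb{E}[X_t \mid H_{t-1}]$ for $t \ge T_b + 1$, which satisfy $|Y_t| \le 1$ relative to the filtration $(H_t)$. Applying Azuma's inequality to $\sum_{t=T_b+1}^{T_0} Y_t$ with deviation $T_b$ gives a tail of at most $\exp\!\left(-\tfrac{T_b^2}{2(T_0 - T_b)}\right) \le \exp\!\left(-\tfrac{T_b}{24NV}\right)$, using $T_0 = 12NVT_b$. On the complement of this bad event, $\sum_{t=T_b+1}^{T_0} X_t$ is bounded by the Azuma slack $T_b$ plus the aggregate drift $\sum_{t=T_b+1}^{T_0} \mathbb{E}[X_t \mid H_{t-1}] \le N\gamma_{T_b}(T_0 - T_b) \le T_b/V \le T_b$. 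Bounding the first $T_b$ rounds trivially by $T_b$, the total is at most $3T_b = T_0/(4NV)$, which is exactly the event $A_0$; hence $\Pr[A_0] \ge 1 - \exp(-T_b/(24NV))$.

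The only delicate piece is constant bookkeeping: verifying that each of the three contributions (initial rounds, conditional-expectation drift, martingale fluctuation) comes in at most $T_b$, so that $T_0 = 12NVT_b$ is calibrated to exactly match the target $\frac{T_0}{4NV} = 3T_b$. There is no conceptual obstacle in this base case because the inductive hypothesis ``bidders rarely bid below $k$'' is vacuous at $k = 0$: $P_{t-1}^i(0{:}-1) \equiv 0$. The harder inductive steps $k \to k+1$ in the subsequent lemmas will presumably require both Claim~\ref{claim:mean-based} with a nonzero $P_{t-1}^i(0{:}k{-}1)$ and a second Azuma application, for which the geometric factor $c - 1 = \frac{1}{12NV}$ and the second constraint $\exp(-\frac{(c-1)T_b}{1152N^2V^2}) \le \frac{1}{2}$ built into $T_b$ will be consumed.
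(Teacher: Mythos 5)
Your proposal is correct and follows essentially the same route as the paper: apply Claim~\ref{claim:mean-based} at $k=0$ (where $P_{t-1}^i(0{:}-1)=0$ makes the hypothesis automatic), union bound over $i\in M^1$, and run Azuma's inequality over rounds $T_b+1,\ldots,T_0$ with deviation $T_b$, so that the initial $T_b$ rounds, the drift term, and the martingale fluctuation each contribute at most $T_b$, totalling $3T_b = T_0/(4NV)$. Centering by the exact conditional expectation (a true martingale plus a separate drift bound) rather than the paper's supermartingale $Z_t=\mathbb{I}[\cdot]-|M^1|\gamma_t$, and using $N$ in place of $|M^1|$, are only cosmetic variations.
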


\begin{proof}
Consider any round $t\ge T_b$.  For any $i\in M^1$, given any history $H_{t-1}$ of the first $t-1$ rounds, it holds that $P_{t-1}^i(0:-1) = 0\le \frac{1}{3NV}$.  Hence, by Lemma~\ref{lemma:mean-based}, we have $\Pr[b_t^i = 0\mid H_{t-1}] \le \gamma_t$.
Using a union bound over $i\in M^1$, 
\[\Pr[\exists i\in M^1, b_t^i = 0\mid H_{t-1}] \le |M^1|\gamma_t.\]
Let $Z_t = \mathbb{I}[\exists i\in M^1, b_t^i = 0] - |M^1|\gamma_t$ and let $X_t=\sum_{s=T_b+1}^t Z_s$.  We have $\Ex{Z_t \mid H_{t-1}} \le 0$.  Therefore, the sequence $X_{T_b+1}, X_{T_b+2}, \ldots, X_{T_0}$ is a supermartingale (with respect to the sequence of history $H_{T_b}, H_{T_b+1}, \ldots, H_{T_0-1}$).  By Azuma's inequality, for any $\Delta>0$, we have 
\begin{equation*}
    \Pr\bigg[\sum_{t=T_b+1}^{T_0} Z_t \ge \Delta \bigg] \le \exp\left(-\frac{\Delta^2}{2(T_0 - T_b)}\right). 
\end{equation*}
Let $\Delta = T_b$. We have with probability at least $1 - \exp\left(-\frac{\Delta^2}{2(T_0 - T_b)}\right) \ge 1 - \exp\left(-\frac{T_b}{24N V}\right)$, $\sum_{t=T_b+1}^{T_0} Z_t < T_b$ holds,
namely, $\sum_{t=T_b+1}^{T_0} \mathbb{I}[\exists i\in M^1, b_t^i = 0] < T_b + \sum_{t=T_b+1}^{T_0} |M^1|\gamma_t$, 
which implies
\begin{align*}
    & \frac{1}{T_0} \sum_{t=1}^{T_0} \mathbb{I}[\exists i\in M^1, b_t^i = 0] \\
    & \le \frac{1}{T_0} \bigg( T_b + \sum_{t=T_b+1}^{T_0} \mathbb{I}[ \exists i\in M^1, b_t^i = 0]\bigg)\\& < \frac{1}{T_0} \bigg( 2T_b +  \sum_{t=T_b+1}^{T_0} |M^1|\gamma_t\bigg) ~ \le ~ \frac{1}{4NV}, 
\end{align*}
where the last inequality is because $\frac{T_b}{T_0} = \frac{1}{12N V}$ and $\frac{1}{T_0}\sum_{t=T_b+1}^{T_0} |M^1|\gamma_{t} \le |M^1|\gamma_{T_b} \le \frac{1}{12N V}$.
 \end{proof}


Then, we use \textbf{induction} to show that, if bidders in $M^1$ seldom bid $0, 1, \ldots, k$ in the first $T_k$ rounds, then they will also seldom bid $0, 1, \ldots, k, k+1$ in the first $T_{k+1}$ rounds, with high probability. 
\begin{lemma}\label{lemma:first-stage}
Suppose $|M^1|\ge 2$. For every $k \in[ 0, v^1-4]$, $\Prx{A_{k+1} \mid A_k} \ge 1 - \sum_{j=1}^d \exp\Big( -\frac{|\Gamma_{\ell}^j|}{1152N^2V^2} \Big)$.     
\end{lemma}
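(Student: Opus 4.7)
The plan is to adapt the time-partitioning + Azuma argument from Lemma~\ref{lem:base-case} to the inductive step from $A_k$ to $A_{k+1}$. The difficulty is that the window $(T_k, T_{k+1}]$ is very long (a factor $c^d \ge 8NV$ longer than $T_k$), so the empirical frequency of low bids could drift above the threshold $\tfrac{1}{2VN}-2\gamma_t$ required by Claim~\ref{claim:mean-based} before we get a chance to apply it. To handle this, I would partition $(T_k, T_{k+1}]$ into the $d$ geometrically spaced sub-periods $\Gamma_k^j := (c^{j-1}T_k,\, c^j T_k]$ for $j=1,\dots,d$ and run a chained induction over $j$.

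The induction carries the invariant that at every endpoint $c^j T_k$, the empirical frequency of ``$\exists i\in M^1,\ b_s^i \le k$'' remains bounded by a fixed constant strictly below $\tfrac{1}{2VN}$ (with the base case $j=0$ given directly by $A_k$). For the step at sub-period $\Gamma_k^j$, I would observe that for any $t\in\Gamma_k^j$, the empirical frequency up to $t-1$ is at most the $j{-}1$ invariant plus the trivial overshoot $(c-1)$ from bids already placed in the current partial sub-period; the choice $c-1 = 1/(12NV)$ is exactly what keeps this sum below the Claim~\ref{claim:mean-based} threshold. Using $|M^1| \ge 2$, this bound also upper-bounds $P_{t-1}^i(0:k'-1)$ for every $k'\le k+1$, so Claim~\ref{claim:mean-based} applies simultaneously to every such $k'$ and gives $\Pr[b_t^i = k' \mid H_{t-1}] \le \gamma_t$. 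A union bound over $i\in M^1$ and over $k'\in\{0,\dots,k+1\}$ then yields $\Pr[\exists i\in M^1,\ b_t^i \le k+1 \mid H_{t-1}] \le NV\gamma_t$.

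With this conditional tail bound in hand, I would introduce the supermartingale $X_t := \sum_{s=c^{j-1}T_k+1}^{t}\bigl(\mathbb{I}[\exists i\in M^1,\ b_s^i \le k+1] - NV\gamma_s\bigr)$ on $\Gamma_k^j$ and apply Azuma's inequality with deviation $\Delta_j = |\Gamma_k^j|/(24NV)$. This gives a per-sub-period failure probability of $\exp\bigl(-|\Gamma_k^j|/(1152N^2V^2)\bigr)$, and on the good event the number of rounds in $\Gamma_k^j$ with some $i\in M^1$ placing a bid $\le k+1$ is at most $|\Gamma_k^j|/(8NV)$, which in particular preserves the invariant on bid $\le k$ into the next sub-period. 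Summing this across $j=1,\dots,d$, combined with the trivial bound $T_k$ on $[1,T_k]$ (and $T_k/T_{k+1} \le 1/(8NV)$ by the definition of $d$), yields an overall frequency of at most $\tfrac{1}{4NV}$ over $[1,T_{k+1}]$, i.e., $A_{k+1}$. A final union bound over the $d$ sub-periods gives the failure probability stated in the lemma.

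The main obstacle I anticipate is the simultaneous balancing of the three constants $c$, $d$, and $\Delta_j$. The running frequency plus the $(c-1)$ sub-period overshoot must stay strictly below $\tfrac{1}{2VN}-2\gamma_t$ at every one of the $d$ induction levels, which forces $c$ close to $1$; yet $c^d$ must reach $8NV$, which forces $d$ logarithmic in $NV$; and $\Delta_j$ must be small enough that the resulting bound $|\Gamma_k^j|/(8NV)$ on added bad bids does not blow up the invariant, while being large enough that $\exp(-\Delta_j^2/(2|\Gamma_k^j|))$ still yields the exponent $|\Gamma_k^j|/(1152N^2V^2)$ appearing in the statement. Verifying that the constants declared at the start of Section~\ref{sec:first-stage} satisfy all of these inequalities uniformly in $j$ is where the bulk of the bookkeeping lives.
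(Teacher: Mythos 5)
Your proposal matches the paper's own proof essentially step for step: the same geometric partition of $(T_k,T_{k+1}]$ into $d$ sub-periods, the same use of $|M^1|\ge 2$ to bound $P_{t-1}^i(0:k)$ by the joint low-bid frequency so that Claim~\ref{claim:mean-based} applies to all bids up to $k+1$, the same union bound and Azuma step with deviation $|\Gamma_k^j|/(24NV)$ giving the per-sub-period failure probability $\exp\bigl(-|\Gamma_k^j|/(1152N^2V^2)\bigr)$ and the $\tfrac{1}{8NV}$ within-period frequency, and the same final aggregation using $T_{k+1}\ge 8NV\,T_k$. The only cosmetic difference is that the paper phrases the carried invariant as events $B_k^j$ on bids $\le k+1$ within each sub-period rather than as an endpoint bound on bids $\le k$, which is equivalent, so your argument is correct and is the paper's argument.
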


To prove Lemma~\ref{lemma:first-stage}, we divide the rounds in $[T_k,T_{k+1}]$ to $d = \lceil \log_c (8N V) \rceil$ episodes such that $T_k = T_k^0 < T_k^1 < \cdots < T_k^{d} = T_{k+1}$ where $T_k^{j} = cT_k^{j-1}$. Let $\Gamma_k^j = [T_k^{j-1}+1,T_k^{j}]$, with $|\Gamma_k^j| = T_k^j - T_k^{j-1}$.  
We define a series of events $B_k^j$ for $j \in [0,d]$. $B_k^0$ is the same as $A_k$. For $j \in [1, d]$,
\begin{equation*}
    B_k^j = \bigg[ \sum_{t\in \Gamma_k^j} \mathbb{I}[\exists i\in M^1, b_t^{i} \le k+1] \le \frac{|\Gamma_k^j|}{8N V} \bigg]. 
\end{equation*}

\begin{claim}\label{claim:induction-B}
For every $j \in [0,d-1]$, $\Pr\big[B_k^{j+1}\mid A_k, B_k^1, \ldots, B_k^j\big] \ge 1 - \exp\Big(-\frac{|\Gamma_k^{j+1}|}{1152N^2 V^2}\Big)$.
\end{claim}

\begin{proof}
Suppose $A_k, B_k^1, \ldots, B_k^j$ happen. 
We denote $A_k^j = [A_k, B_k^1, \ldots, B_k^j]$.
We argue that event $A_k^j$ implies $P_{t-1}^i(0:k) \le \frac{1}{3NV}$ for every bidder $i\in M^1$ and every round $t \in \Gamma_k^j = [T_k^j+1, T_k^{j+1}]$. 
Recall that $P_{t-1}^i(0:k) = \frac{1}{t-1} \sum_{s=1}^{t-1} \mathbb{I}[\max_{i'\ne i} b_s^{i'} \le k]$. 
Because $|M^1| \ge 2$, the event $\mathbb{I}[\max_{i'\ne i} b_s^{i'} \le k]$ implies that there exists $i^*\in M^1$, $i^*\ne i$, such that $b_s^{i^*} \le k$. 
Thus, 
\begin{align*}
    P_{t-1}^i(0:k) & ~ \le ~ \frac{1}{t-1} \sum_{s=1}^{t-1} \mathbb{I}[\exists i\in M^1, b_s^i \le k] \\
    & ~ = ~ \frac{1}{t-1} \bigg( \sum_{s=1}^{T_k}  \mathbb{I}[\exists i\in M^1, b_s^i \le k] \\
    & \qquad + \sum_{s \in \Gamma_k^1 \cup \cdots \cup \Gamma_k^j}  \mathbb{I}[\exists i\in M^1, b_s^i \le k] \\
    & \qquad +  \sum_{s=T_k^j+1}^{t-1} \mathbb{I}[\exists i\in M^1, b_s^i \le k] \bigg) \\
    \text{(by event $A_k^j$)} & ~ \le ~ \frac{1}{t-1} \bigg( T_k \frac{1}{4NV} ~ + ~ (T_k^j - T_k)\frac{1}{8NV} \\
    & \qquad +  \sum_{s=T_k^j+1}^{t-1} \mathbb{I}[\exists i\in M^1, b_s^i \le k] \bigg) \\
    & ~ \le ~ \frac{1}{t-1} \bigg( T_k^j \frac{1}{4NV} + (t - 1 - T_k^j) \cdot 1 \bigg).
\end{align*}
Because $T_k^j \le t-1 \le T_k^{j+1} = c T_k^j$ with $c = 1 + \frac{1}{12NV}$, we have 
\begin{equation*}
     P_{t-1}^i(0:k) ~ \le ~ \frac{1}{4NV} + \frac{T_k^{j+1}-T_k^j}{T_{k}^{j+1}} ~ \le ~ \frac{1}{3N V} 
\end{equation*}
for any round $t \in \Gamma_k^{j+1}$. 
Then, Lemma \ref{lemma:mean-based} implies
$\Prx{b_t^i = b \mid H_{t-1},A_k^j} \le \gamma_t$ for every $b \le k+1$. 
Consider the event $[\exists i\in M^1, b_t^{i} \le k+1]$.  Using union bounds over $i\in M^1$ and $b\in\{0, 1, \ldots, k+1\}$,  
\begin{align*}
     & \Pr \Big[\exists i\in M^1, b_t^{i} \le k+1 \Biggiven H_{t-1}, A_k^j \Big] \\
     &  \le |M^1| \cdot \Prx{b_t^i \le k+1 \;\Big|\; H_{t-1}, A_k^j} \\ & \le |M^1|(k+2)\gamma_t ~~ \le ~~ |M^1|V\gamma_t. 
\end{align*}
\noindent Let $Z_t = \mathbb{I}[\exists i\in M^1, b_t^i \le k+1] - |M^1|V\gamma_t$ and let $X_t=\sum_{s=T_k^j+1}^t Z_s$.  We have $\Ex{Z_t \mid A_k^j, H_{t-1}} \le 0$.  Therefore, the sequence $X_{T_k^j+1}, X_{T_k^j+2}, \ldots, X_{T_k^{j+1}}$ is a supermartingale (with respect to the sequence of history $H_{T_k^j}, H_{T_k^j+1}, \ldots, H_{T_k^{j+1}-1}$).  By Azuma's inequality, for any $\Delta > 0$, we have 
\begin{equation*}
    \Pr\Bigg[\sum_{t=T_k^j+1}^{T_k^{j+1}} Z_t \ge \Delta \bigggiven A_k^j \Bigg] \le \exp\bigg(-\frac{\Delta^2}{2|\Gamma_k^{j+1}|}\bigg). 
\end{equation*}
Let $\Delta = \frac{|\Gamma_k^{j+1}|}{24N V}$. 
Then with probability at least 
$
1-\exp\Big(-\frac{|\Gamma_k^{j+1}|}{1152N^2 V^2}\Big)
$
we have $\sum_{t \in \Gamma_k^{j+1}} Z_t < \frac{|\Gamma_k^{j+1}|}{24N V}$, implying $\sum_{t\in \Gamma_k^{j+1}}  \mathbb{I}[\exists i\in M^1, b_t^{i} \le k+1] < \frac{|\Gamma_k^{j+1}|}{24N V} + \sum_{t \in \Gamma_k^{j+1}} |M^1|V \gamma_t ~ \le ~ \frac{|\Gamma_k^{j+1}|}{24N V} + |M^1|V \frac{|\Gamma_k^{j+1}|}{12N^2V^2} ~ \le ~ \frac{|\Gamma_k^{j+1}|}{8NV}$,
which proves the claim. 
 \end{proof}

\begin{proof}{Proof of Lemma~\ref{lemma:first-stage}}
Suppose $A_k$ holds. We want to show that $A_{k+1}$ holds with high probability.
Using Claim~\ref{claim:induction-B} with $j=0, 1, \ldots, d-1$, we have, with probability at least $1 - \sum_{j=1}^d\exp\Big(-\frac{|\Gamma_k^j|}{1152N^2 V^2}\Big)$, all the events $B_k^1, \ldots, B_k^d$ hold, which implies
\begin{align*}
    & \frac{1}{T_{k+1}} \sum_{t=1}^{T_{k+1}} \mathbb{I}[\exists i\in M^1, b_t^{i} \le k+1 ] \\
    & \le \frac{1}{T_{k+1}} \bigg( T_k\cdot 1 + \sum_{t\in \Gamma_k^1\cup\cdots \cup\Gamma_k^d} \mathbb{I}[\exists i\in M^1, b_t^{i} \le k+1]\bigg) \\
    & \le \frac{1}{T_{k+1}} \Big( T_k \cdot 1 + (T_{k+1}-T_k) \cdot \frac{1}{8NV}\Big) \\
    & \le \frac{1}{8NV} + \Big(1-\frac{T_k}{T_{k+1}}\Big)\frac{1}{8NV} ~~ \le ~~ \frac{1}{4N V}, 
\end{align*}
where in the third inequality we used $T_{k+1} \ge (8NV) T_k$.  
Thus $A_{k+1}$ holds. 
 \end{proof}

Using induction from $k=0, 1, \ldots$ to $v^1-4$, we have all events $A_0, A_1, \ldots, A_{v^1-3}$ happen with probability at least $1 - \exp\big(-\frac{T_b}{24NV}\big) - \sum_{k=0}^{v^1-4}\sum_{j=1}^{d}\exp\big(-\frac{|\Gamma_k^j|}{1152N^2 V^2}\big)$.
We then lower bound the probability, obtaining the following corollary: 
\begin{corollary}\label{cor:first-stage-end}
Suppose $|M^1|\ge 2$. $\Prx{A_{v^1-3}} \ge 1 - \exp\Big(-\frac{T_b}{24NV}\Big) - 2\exp\Big(-\frac{T_b}{1152N^2V^2} \Big)$. 
\end{corollary}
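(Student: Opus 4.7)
\textbf{Proof Proposal for Corollary~\ref{cor:first-stage-end}.}

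The plan is to chain together Lemma~\ref{lem:base-case} and $v^1-3$ applications of Lemma~\ref{lemma:first-stage} using a union bound, and then to bound the resulting sum of failure probabilities by a geometric-series estimate that exploits the super-exponential growth of the period lengths $|\Gamma_k^j|$.

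First, by Lemma~\ref{lem:base-case} the event $A_0$ fails with probability at most $\exp(-T_b/(24NV))$. Then by repeatedly invoking Lemma~\ref{lemma:first-stage} for $k=0,1,\dots,v^1-4$, the conditional probability that $A_{k+1}$ fails given $A_k$ is at most $\sum_{j=1}^d \exp(-|\Gamma_k^j|/(1152 N^2 V^2))$. A straightforward union bound gives
\begin{equation*}
\Pr[A_{v^1-3}] \;\ge\; 1 \;-\; \exp\!\bigl(-\tfrac{T_b}{24NV}\bigr) \;-\; \sum_{k=0}^{v^1-4}\sum_{j=1}^{d} \exp\!\Bigl(-\tfrac{|\Gamma_k^j|}{1152 N^2 V^2}\Bigr).
\end{equation*}

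Next, I would exploit the recursive structure of the period lengths. Since $|\Gamma_k^j| = c|\Gamma_k^{j-1}|$ within each block and $|\Gamma_k^1| = c|\Gamma_{k-1}^d|$ across blocks, the multi-index family $\{|\Gamma_k^j|\}$ reindexes to a single geometric sequence $c^s |\Gamma_0^1|$ for $s = 0, 1, \dots, (v^1-3)d - 1$. A direct computation gives $|\Gamma_0^1| = (c-1)T_0 = T_b$, using $c-1 = 1/(12NV)$ and $T_0 = 12NV\,T_b$. Writing $\beta = T_b/(1152 N^2 V^2)$, the double sum collapses to $\sum_{s=0}^{(v^1-3)d-1} \exp(-c^s \beta)$.

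The remaining task, which is the main technical obstacle, is to show this geometric-in-the-exponent sum is bounded by $2\exp(-\beta)$. I would use Bernoulli's inequality $c^s \ge 1 + s(c-1)$ to factor out $\exp(-\beta)$, obtaining
\begin{equation*}
\sum_{s=0}^{\infty} \exp(-c^s \beta) \;\le\; \exp(-\beta)\sum_{s=0}^{\infty} \exp(-s(c-1)\beta) \;=\; \frac{\exp(-\beta)}{1 - \exp(-(c-1)\beta)}.
\end{equation*}
By the choice of $T_b$, which guarantees $\exp(-(c-1)T_b/(1152 N^2 V^2)) \le 1/2$, the denominator is at least $1/2$, so the sum is at most $2\exp(-\beta) = 2\exp(-T_b/(1152 N^2 V^2))$. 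Substituting this bound back into the union-bound inequality yields the claimed lower bound on $\Pr[A_{v^1-3}]$. The trickiest step is simply being careful that the geometric indexing matches up across the two recursions $|\Gamma_k^j| = c|\Gamma_k^{j-1}|$ and $|\Gamma_k^1| = c|\Gamma_{k-1}^d|$; once that bookkeeping is right, the analytic bound is immediate from Bernoulli and the assumed smallness of $\exp(-(c-1)T_b/(1152 N^2 V^2))$.
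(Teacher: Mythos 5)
Your proposal is correct and follows essentially the same route as the paper: chain Lemma~\ref{lem:base-case} with Lemma~\ref{lemma:first-stage} via a union bound, reindex the double sum over $|\Gamma_k^j|$ as the single geometric sequence $c^s T_b$, and bound the resulting series by $2\exp\bigl(-\tfrac{T_b}{1152N^2V^2}\bigr)$ using the assumption $\exp\bigl(-\tfrac{(c-1)T_b}{1152N^2V^2}\bigr)\le\tfrac12$. The only (immaterial) difference is that you apply Bernoulli's inequality $c^s\ge 1+s(c-1)$ to sum a clean geometric series, whereas the paper splits off the $s=0$ term and uses $c^s-1\ge (c-1)+(s-1)(c^2-c)$; both yield the same bound.
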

\begin{proof}
Using Lemma~\ref{lem:base-case} and Lemma~\ref{lemma:first-stage} from $k=0$ to $v_1-4$, we get
\begin{align*}
    & \Prx{A_{v^1-3}} \ge \Prx{A_0, A_1, \ldots, A_{v^1-3}} \\
    & \ge 1 - \exp\left(-\tfrac{T_b}{24NV}\right) - \sum_{k=0}^{v^1-4}\sum_{j=1}^d \exp\left( -\tfrac{|\Gamma_{k}^j|}{1152N^2V^2} \right).
\end{align*}
Note that $|\Gamma_k^j| = T_k^j - T_k^{j-1} = cT_k^{j-1} - cT_k^{j-2} = c|\Gamma_k^{j-1}|$ for any $k \in \{0, 1, \ldots, v^1-4\}$ and $j \in \{2, \ldots, d\}$, and that $|\Gamma_k^1| = c|\Gamma_{k-1}^d|$ for any $k \in \{1, 2, \ldots, v^1-4\}$.  We also note that $|\Gamma_0^1| = (c-1)T_0 = T_b$. Thus,
\begin{align*}
    & \sum_{k=0}^{v^1-4}\sum_{j=1}^d \exp\left( -\tfrac{|\Gamma_{k}^j|}{1152N^2V^2} \right) \\
    & = \sum_{s=0}^{(v^1-3)d-1}\exp\left(-\tfrac{c^sT_b}{1152N^2V^2}\right) \\
    & \le \sum_{s=0}^{\infty}\exp\left(-\tfrac{c^sT_b}{1152N^2V^2}\right)\\
    & =  \exp\left(-\tfrac{T_b}{1152N^2V^2}\right)\left(1+ \sum_{s=1}^{\infty}\exp\left(-\tfrac{(c^s-1)T_b}{1152N^2V^2}\right)\right). 
\end{align*}
It remains to prove $\sum_{s=1}^{\infty}\exp\left(-\frac{(c^s-1)T_b}{1152N^2V^2}\right) \le 1$.
Since $c^s-1 \ge c-1+(s-1)(c^2-c),\forall s\ge 1$, we have
\begin{align*}
    &  \sum_{s=1}^{\infty}\exp\left(-\tfrac{(c^s-1)T_b}{1152N^2V^2}\right)\\
    &\le \sum_{s=1}^{\infty}\exp\left(-\tfrac{(c-1)T_b}{1152N^2V^2}\right) \left(\exp\left(-\tfrac{(c^2-c)T_b}{1152N^2V^2}\right)\right)^{s-1}\\
    &\le \sum_{s=1}^{\infty}\left(\tfrac{1}{2}\right)^s = 1,
\end{align*}
where the second ``$\le$'' is because $\exp\left(-\frac{(c^2-c)T_b}{1152N^2V^2}\right)\le \exp\left(-\frac{(c-1)T_b}{1152N^2V^2}\right) \le \frac{1}{2}$ by the choice of $T_b$. 
\end{proof}

\subsection{Eliminating $v^1-2$} \label{sec:main-M1-3-proof}

Assume that $A_{v^1-3}$ has happened.  We continue partitioning the time horizon after $T_{v^1-3}$, all the way to infinity, to show two points: (1) the frequency of bids in $\{0, 1, \ldots, v^1-3\}$ from bidders in $M^1$ approaches $0$; (2) the frequency of $v^1-2$ also approaches $0$. 
Again let $ c=1+\frac{1}{12NV}$. Let $T_a^0=T_{v^1-3}, T_a^{k+1}=c T_a^k, \Gamma_a^{k+1}=[T_a^{k}+1, T_a^{k+1}], k\ge 0$.
Let $\delta_t=(\frac{1}{t})^{\frac{1}{3}}, t\ge 0$. For each $k\ge 0$, define
\[F_{T_a^k}= \tfrac{1}{4NV c^k}+\sum_{s=0}^{k-1} \tfrac{c-1}{c^{k-s}}\delta_{T_a^s}+ \sum_{s=0}^{k-1}|M^1|V \tfrac{c-1}{c^{k-s}} \gamma_{T_a^s}, \] 
and
\[\widetilde{F}_{T_a^k}=\tfrac{1}{c^k} +\sum_{s=0}^{k-1} \tfrac{c-1}{c^{k-s}}\delta_{T_a^s}+ \sum_{s=0}^{k-1}|M^1|V \tfrac{c-1}{c^{k-s}} \gamma_{T_a^s}. \] 

\begin{claim}\label{claim:F_k-bound}
If $T_b$ is sufficiently large such that $\delta_{T_a^k}+|M^1|V\gamma_{T_a^k}\le \frac{1}{4NV}$, then 
$F_{T_a^{k+1}} \le F_{T_a^k}\le \frac{1}{4NV}$ for every $k\ge 0$ and $\lim_{k\to\infty}F_{T_a^k} = \lim_{k \to\infty}\widetilde{F}_{T_a^k} = 0$.
\end{claim}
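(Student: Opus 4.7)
The plan is to rewrite $F_{T_a^k}$ and $\widetilde{F}_{T_a^k}$ as one-step linear recursions. Peeling off the $s=k$ term and rescaling the remaining sums by $\tfrac{1}{c}$, a direct calculation gives
\[ F_{T_a^{k+1}} \;=\; \tfrac{1}{c}F_{T_a^k} + \tfrac{c-1}{c}\bigl(\delta_{T_a^k} + |M^1|V\gamma_{T_a^k}\bigr), \]
and the same recursion, but with initial value $\widetilde{F}_{T_a^0}=1$, for $\widetilde{F}_{T_a^k}$. In particular, each step is a convex combination of the previous value and the common ``driving term'' $D_k := \delta_{T_a^k} + |M^1|V\gamma_{T_a^k}$, which is non-increasing in $k$ because $\delta_t,\gamma_t$ are.

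Monotonicity and the upper bound then follow from a short induction. The recursion makes $F_{T_a^{k+1}} \le F_{T_a^k}$ equivalent to $D_k \le F_{T_a^k}$. The base case $D_0 \le F_{T_a^0} = \tfrac{1}{4NV}$ is exactly the hypothesis on $T_b$. For the inductive step, the recursion yields $F_{T_a^{k+1}} \ge \tfrac{1}{c}D_k + \tfrac{c-1}{c}D_k = D_k \ge D_{k+1}$, which propagates the property. Iterating, $F_{T_a^k} \le F_{T_a^0} = \tfrac{1}{4NV}$, establishing the first assertion.

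For the limits, I would bound the three pieces of $F_{T_a^k}$ separately. The leading term $\tfrac{1}{4NV}c^{-k}$ decays geometrically. The $\delta$-sum evaluates in essentially closed form using $\delta_{T_a^s}=(T_a^0)^{-1/3}c^{-s/3}$: pulling out $c^{-k}$ leaves $\sum_{s=0}^{k-1} c^{2s/3}$, a geometric sum that collapses the whole expression to something of order $c^{-k/3}$. For the $\gamma$-sum, since $\gamma_t\to 0$, I would use the standard Ces\`aro-type split: for any $\epsilon>0$ pick $K$ so that $\gamma_{T_a^s}<\epsilon$ for $s\ge K$; the head $\sum_{s<K}$ is $O(c^{-k})$ and the tail is at most $|M^1|V\epsilon\cdot(c-1)\sum_{j\ge 1}c^{-j}=|M^1|V\epsilon$, and then send $\epsilon\to 0$. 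The same three-term decomposition applies verbatim to $\widetilde{F}_{T_a^k}$; only the leading constant in front of $c^{-k}$ changes from $\tfrac{1}{4NV}$ to $1$.

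The only mildly delicate point is setting up the inductive monotonicity so that $F_{T_a^k}$ continues to dominate $D_k$ at every step; once the recursion is rewritten as a convex combination, this reduces to the decreasing-ness of $D_k$ and the single numerical hypothesis on $T_b$. Everything else is a routine geometric-series estimate combined with $\gamma_t\to 0$, so I do not expect any genuine obstacle.
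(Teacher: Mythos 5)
Your proposal is correct and follows essentially the same route as the paper: both rewrite $F_{T_a^{k+1}}=\tfrac1c F_{T_a^k}+\tfrac{c-1}{c}\bigl(\delta_{T_a^k}+|M^1|V\gamma_{T_a^k}\bigr)$ and exploit that the driving term is decreasing to get monotonicity below $\tfrac{1}{4NV}$, and both prove the vanishing limits by splitting the geometrically weighted sums into a negligible head and an $\eps$-small tail. Your variants (the invariant $F_{T_a^k}\ge \delta_{T_a^k}+|M^1|V\gamma_{T_a^k}$ instead of the paper's difference induction, and the closed-form evaluation of the $\delta$-sum) are only cosmetic differences.
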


\begin{lemma}\label{lemma:second-stage-M1=3}
Suppose $|M^1|\ge 2$.  Let $T_b$ be any sufficiently large constant. 
Let $A_a^k$ be the event 
\begin{equation*}
    \bigg[ \forall s\le k, ~ \frac{1}{T_a^s} \sum_{t=1}^{T_a^s} \mathbb{I}[\exists i\in M^1, b_t^i\le v^1-3]\le F_{T_a^s} \bigg]
\end{equation*}
Then, $\Pr[A_a^k]\ge 1 - \exp\left(-\frac{T_b}{24NV}\right) - 2\exp\left(-\frac{T_b}{1152N^2V^2} \right) - 2\exp\left(-(\frac{ T_b}{1152N^2V^2})^{\frac{1}{3}}\right)$.
Moreover, if $|M^1|\ge 3$, we can include the following in event $A_a^k$: $\forall s\le k$, $\frac{1}{T_a^s} \sum_{t=1}^{T_a^s} \mathbb{I}[\exists i\in M^1, b_t^i\le v^1-2]\le \widetilde{F}_{T_a^s}$.
\end{lemma}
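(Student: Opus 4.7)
I plan to prove the lemma by induction on $k$, extending the iterated-elimination argument of Lemma~\ref{lemma:first-stage} with a new geometric partition that shrinks the frequency bound round after round. The base case $k=0$ is handed to us by Corollary~\ref{cor:first-stage-end}: since $T_a^0 = T_{v^1-3}$ and $F_{T_a^0}=\frac{1}{4NV}$, the event $A_{v^1-3}$ is exactly the $s=0$ instance of $A_a^k$, and it already accounts for the first two failure terms in the stated probability bound.

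For the inductive step, I will assume $A_a^k$ holds and propagate the bound to $T_a^{k+1}$. Fix any $i\in M^1$; since $|M^1|\ge 2$, the event $[\max_{j\ne i}b_s^j\le v^1-3]$ implies $[\exists i'\in M^1,\ b_s^{i'}\le v^1-3]$, so $P_{T_a^k}^i(0{:}v^1-3)\le F_{T_a^k}$. Bounding the extra mass accumulated in $(T_a^k,t-1]$ crudely by $(t-1-T_a^k)/(t-1)$ and using $T_a^{k+1}=cT_a^k$, I get $P_{t-1}^i(0{:}v^1-3)\le F_{T_a^k}+\frac{c-1}{c}\le \frac{1}{4NV}+\frac{1}{12NV}<\frac{1}{2NV}-2\gamma_t$ for every $t\in\Gamma_a^{k+1}$, provided $T_b$ is large enough. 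Claim~\ref{claim:mean-based} then applies to each bid $b\in\{0,\ldots,v^1-3\}$ and each $i\in M^1$, and a union bound gives $\Pr[\exists i\in M^1,\ b_t^i\le v^1-3\mid H_{t-1}]\le |M^1|V\gamma_t$. I then apply Azuma's inequality to the supermartingale $\sum_{t\in\Gamma_a^{k+1}}(\mathbb{I}[\exists i\in M^1, b_t^i\le v^1-3]-|M^1|V\gamma_t)$ with deviation $\Delta=|\Gamma_a^{k+1}|\delta_{T_a^k}$; since $|\Gamma_a^{k+1}|=(c-1)T_a^k$ and $\delta_{T_a^k}^2=(T_a^k)^{-2/3}$, the per-period failure probability is $\exp\!\big(-(c-1)(T_a^k)^{1/3}/2\big)$. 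On the complement the total frequency at time $T_a^{k+1}$ is at most $\frac{1}{c}F_{T_a^k}+\frac{c-1}{c}\delta_{T_a^k}+\frac{c-1}{c}|M^1|V\gamma_{T_a^k}=F_{T_a^{k+1}}$, closing the induction. Summing per-period failures over $k\ge 0$ gives a super-geometrically decaying series of the form $\sum_{k\ge 0}\exp(-(c-1)c^{k/3}(T_a^0)^{1/3}/2)$, which is bounded by $2\exp(-(T_b/(1152N^2V^2))^{1/3})$ after absorbing constants and using $T_a^0\ge T_b$.

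For the $|M^1|\ge 3$ strengthening, I will run Claim~\ref{claim:mean-based-v-2} in parallel. The prefix-sum estimate above shows that $\frac{1}{t-1}\sum_{s<t}\mathbb{I}[\exists i\in M^1, b_s^i\le v^1-3]\le \frac{1}{3NV}$ for every $t\in\Gamma_a^{k+1}$, which is exactly the hypothesis of Claim~\ref{claim:mean-based-v-2}; hence $\Pr[b_t^i=v^1-2\mid H_{t-1}]\le \gamma_t$ for each $i\in M^1$. Combined with Claim~\ref{claim:mean-based} and a union bound over $b\in\{0,\ldots,v^1-2\}$ and $i\in M^1$, this yields $\Pr[\exists i\in M^1, b_t^i\le v^1-2\mid H_{t-1}]\le |M^1|V\gamma_t$. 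A second Azuma application on the same schedule, with the trivial base bound $\widetilde{F}_{T_a^0}=1$, produces the companion recursion $\widetilde{F}_{T_a^{k+1}}=\frac{1}{c}\widetilde{F}_{T_a^k}+\frac{c-1}{c}\delta_{T_a^k}+\frac{c-1}{c}|M^1|V\gamma_{T_a^k}$. A union bound over the two Azuma streams doubles the failure mass, which is absorbed into the same $2\exp(-(T_b/(1152N^2V^2))^{1/3})$ term by enlarging its implicit constant.

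The main subtlety I anticipate is the calibration of $\delta_{T_a^k}=(T_a^k)^{-1/3}$: the exponent $1/3$ is chosen so that $|\Gamma_a^{k+1}|\delta_{T_a^k}^2=(c-1)(T_a^k)^{1/3}$ grows fast enough for the sum of Azuma failure probabilities to converge, yet the additive contribution $\frac{c-1}{c}\delta_{T_a^k}$ to the recursion still decays to zero so that $F_{T_a^k}$ never leaves $[0,\frac{1}{4NV}]$ and the preconditions of Claims~\ref{claim:mean-based} and \ref{claim:mean-based-v-2} persist across all periods. Claim~\ref{claim:F_k-bound} packages this calibration, and its boundedness and vanishing conclusions are precisely what will let Theorem~\ref{thm:main-M1-3} subsequently extract both the time-average and the last-iterate convergence statements from this lemma.
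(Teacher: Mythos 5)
Your proposal follows essentially the same route as the paper's proof: induction over the geometric periods $T_a^k$ with base case from Corollary~\ref{cor:first-stage-end}, a per-period prefix-sum bound that activates Claim~\ref{claim:mean-based} (and Claim~\ref{claim:mean-based-v-2} when $|M^1|\ge 3$), Azuma's inequality with deviation $|\Gamma_a^{k+1}|\delta_{T_a^k}$ yielding the recursions defining $F_{T_a^k}$ and $\widetilde F_{T_a^k}$, and a geometric summation of the failure probabilities, so it is correct in substance. The one cosmetic difference is in the $|M^1|\ge 3$ part: the paper applies a single Azuma bound to the larger event $[\exists i\in M^1,\ b_t^i\le v^1-2]$, which simultaneously feeds both the $F$- and $\widetilde F$-recursions (since it dominates the $v^1-3$ event) and so preserves the stated constant $2\exp\bigl(-(T_b/(1152N^2V^2))^{1/3}\bigr)$, whereas your two parallel Azuma streams with a union bound would give $4$ in place of $2$ --- harmless for all downstream uses, but not literally the stated constant unless you adopt the single-event trick.
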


\noindent The proof of Lemma~\ref{lemma:second-stage-M1=3} is similar to Lemma~\ref{lemma:first-stage} except that we use Lemma~\ref{lemma:mean-based-v-2} to argue that bidders bid $v^1-2$ with low frequency. See details in Appendix \ref{proof:lemma:second-stage-M1=3}. 

~\\
\noindent \textbf{Proof of Theorem~\ref{thm:main-M1-3}.} Suppose $|M^1|\ge 3$.  We note that the event $A_a^k$ implies that for any time $t\in \Gamma_a^{k} = [T_a^{k-1}+1, T_a^k]$, 
\begin{align} 
    & \frac{1}{t}\sum_{s=1}^t \mathbb{I}[\exists i  \in M^1, b_s^i\le v^1-2] \nonumber \\
    & \le \frac{1}{t}\sum_{s=1}^{T_a^k} \mathbb{I}[\exists i  \in M^1, b_s^i\le v^1-2]  \nonumber \\
    & \le \frac{1}{t} T_a^k \widetilde F_{T_a^k} \nonumber \\
    & \le c \widetilde F_{T_a^k} \quad ~~ \text{(because $t \ge \tfrac{1}{c} T_a^k$).} \label{eq:M=3-t-frequency-bound}
\end{align}
We note that $A_a^{k-1} \supseteq A_a^k$, so by Lemma~\ref{lemma:second-stage-M1=3} with probability at least $\Pr[\cap_{k=0}^\infty A_a^k] = \lim_{k\to\infty} \Pr[A_a^k] \ge 1 - \exp\left(-\frac{T_b}{24NV}\right) - 2\exp\left(-\frac{T_b}{1152N^2V^2} \right) - 2\exp\left(-(\frac{ T_b}{1152N^2V^2})^{\frac{1}{3}}\right)$ all events $A_a^0, A_a^1, \ldots, A_a^k, \ldots$ happen. 
Then, according to \eqref{eq:M=3-t-frequency-bound} and Claim~\ref{claim:F_k-bound}, we have
\begin{equation*}
\lim_{t\to\infty} \frac{1}{t}\sum_{s=1}^t \mathbb{I}[\exists i  \in M^1, b_s^i\le v^1-2] \le \lim_{k\to\infty} c \widetilde F_{T_a^k} = 0.
\end{equation*}
Letting $T_b\to \infty$ proves the first result of the theorem.  The second result follows from the observation that, when $\frac{1}{t}\sum_{s=1}^t \mathbb{I}[\exists i \in M^1, b_s^i\le v^1-2] \le \frac{1}{3NV}$, all bidders in $M^1$ will choose bids in $\{0, 1, \ldots, v^1-2\}$ with probability at most $(v^1-1)\gamma_{t+1}$ in round $t+1$ according to Lemmas~\ref{lemma:mean-based} and \ref{lemma:mean-based-v-2}, and that $(v^1-1)\gamma_{t+1}\to 0$ as $t\to\infty$.

\section{Experimental Results}
\label{sec:body-exp}

\subsection{$|M^1|=2$: Convergence to Two Equilibria}

    
    
    
    

\begin{figure}
    \centering
    \includegraphics[width=\textwidth]{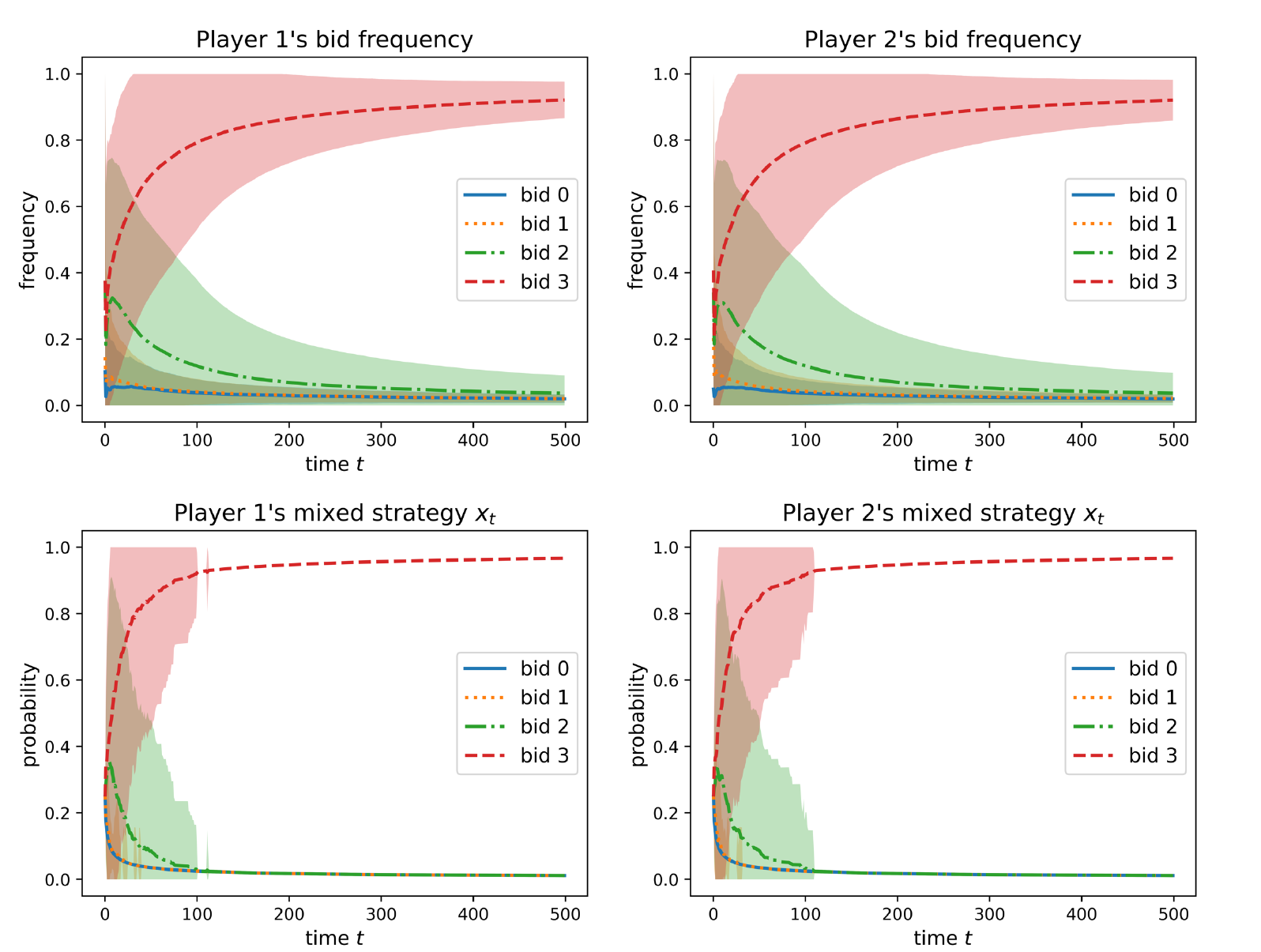}
    \caption{Player 1 and 2's bid frequencies and mixed strategies in the case of $|M^1|=2$, $v^1=v^2=4$, using $\eps_t$-Greedy algorithm, and converging to the $v^1-1=3$ equilibrium. Converging to $v^1-1$ happens in 774 out of 1000 simulations. The curves and the shaded regions are the means and $2$-standard deviation intervals among the 774 simulations.}
     \label{fig:M=2-eps-greedy_v-1}
\end{figure}

For the case of $|M^1|=2$, we showed in Theorem~\ref{thm:main-M1-2} that any mean-based algorithm must converge to one of the two equilibria where the two players in $M^1$ bid $v^1-1$ or $v^1-2$.
One may wonder whether there is a theoretical guarantee of which equilibrium will be obtained. 
We give experimental results to show that, in fact, \emph{both} equilibria can be obtained under a \emph{same} randomized mean-based algorithm in different runs.  We demonstrate this by the $\eps_t$-Greedy algorithm (defined in Example~\ref{ex:mean-based-algorithms}).  Interestingly, 
under the same setting, the MWU algorithm
always converges to the equilibrium of $v^1-1$. 
%
In the experiment, we let $n=|M^1|=2$, $v^1=v^2=V=4$. 

\paragraph{$\eps_t$-Greedy converges to two equilibria}
We run $\eps_t$-Greedy with $\eps_t=\sqrt{1/t}$ for 1000 simulations.
%
In each simulations, we run it for $T=1000$ rounds.  After it finishes, we use the frequency of bids from bidder $1$ to determine which equilibrium the algorithm will converge to: if the frequency of bid $2$ is above $0.8$, we consider it converging to the equilibrium of $v^1-2$; if the frequency of bid $3$ is above $0.8$, we consider it converging to the equilibrium of $v^1-1$; if neither happens, we consider it as ``not converged yet''. Among the $1000$ simulations, we found $774$ times of converging to $v^1-1$, $226$ times of converging to $v^1-2$, and $0$ times of ``not converged yet''. 

We give two figures of the changes of bid frequencies and mixed strategies of player $1$ and $2$: Figure~\ref{fig:M=2-eps-greedy_v-1} is for $v^1-1$; Figure~\ref{fig:M=2-eps-greedy_v-2} is for the case of converging to $v^1-2$.  The x-axis is round number $t$ and the y-axis is the frequency $\frac{1}{t}\sum_{s=1}^t \mathbb{I}[b_s^i=b]$ of each bid $b\in\{0, 1, 2, 3\}$ or the mixed strategy $\bm x_t^i = (x_t^i(0), x_t^i(1), x_t^i(2), x_t^i(3))$.  For clarity, we only show the first $500$ rounds.  

\begin{figure}
    \centering
    \includegraphics[width=\textwidth]{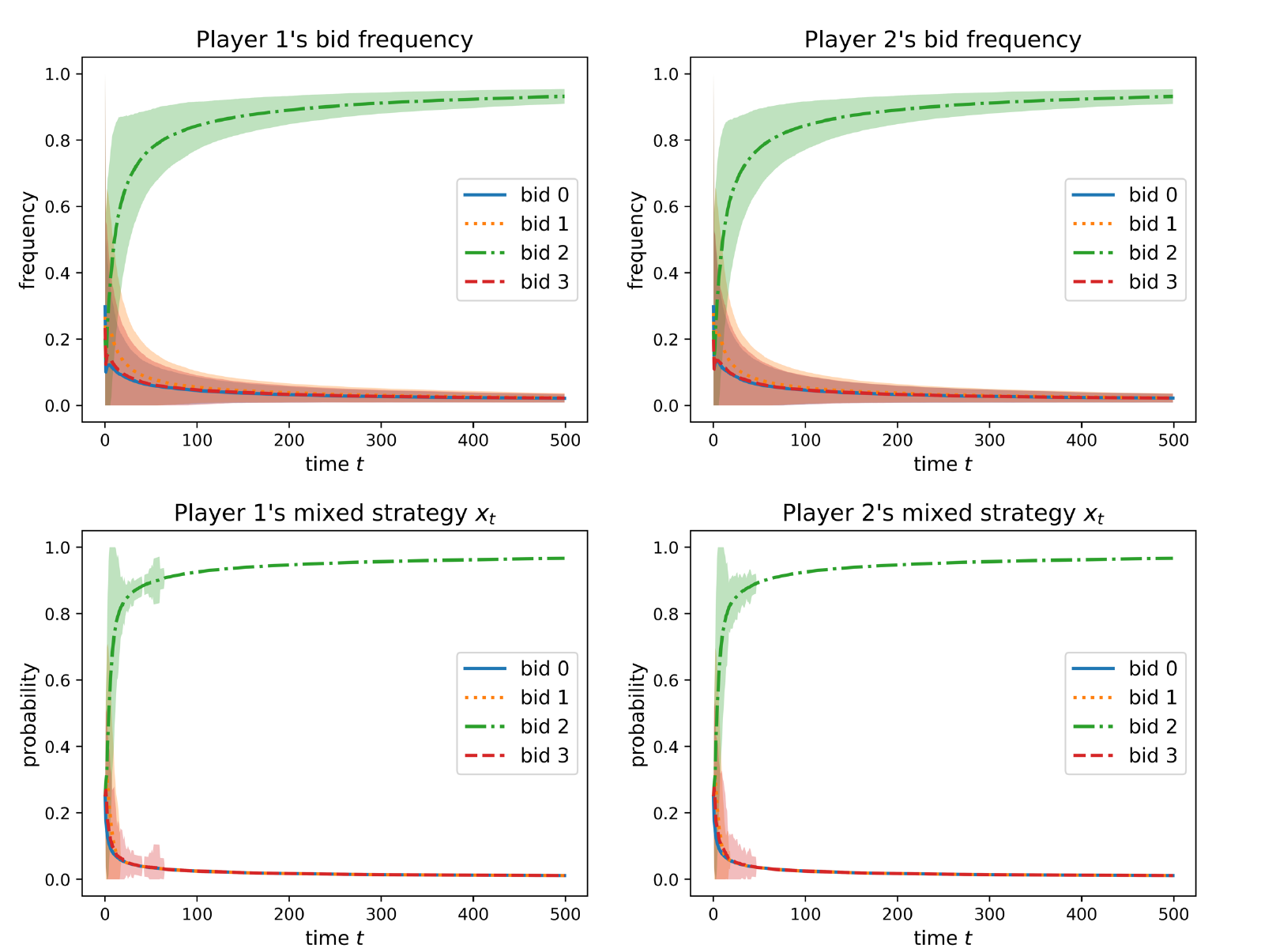}
    \caption{Player 1 and 2's bid frequencies and mixed strategies in the case of $|M^1|=2$, $v^1=v^2=4$, using $\eps_t$-Greedy algorithm, and converging to the $v^1-2=2$ equilibrium. Converging to $v^1-2$ happens in 226 out of 1000 simulations. The curves and the shaded regions are the means and $2$-standard deviation intervals among the 226 simulations.}
    \label{fig:M=2-eps-greedy_v-2}
\end{figure}

\paragraph{MWU always converges to $v^1-1$}
We run MWU with $\eps_t=\sqrt{1/t}$. 
Same as the previous experiment, we run the algorithm for $1000$ simulations and count how many times the algorithm converges to the equilibrium of $v^1-2$ and $v^1-1$.  We found that, in all $1000$ simulations, MWU converged to $v^1-1$. 
Figure~\ref{fig:M=2-MWU} shows the changes of bid frequencies and mixed strategies of both players.

    
    
    
    

\begin{figure}
    \centering
    \includegraphics[width=\textwidth]{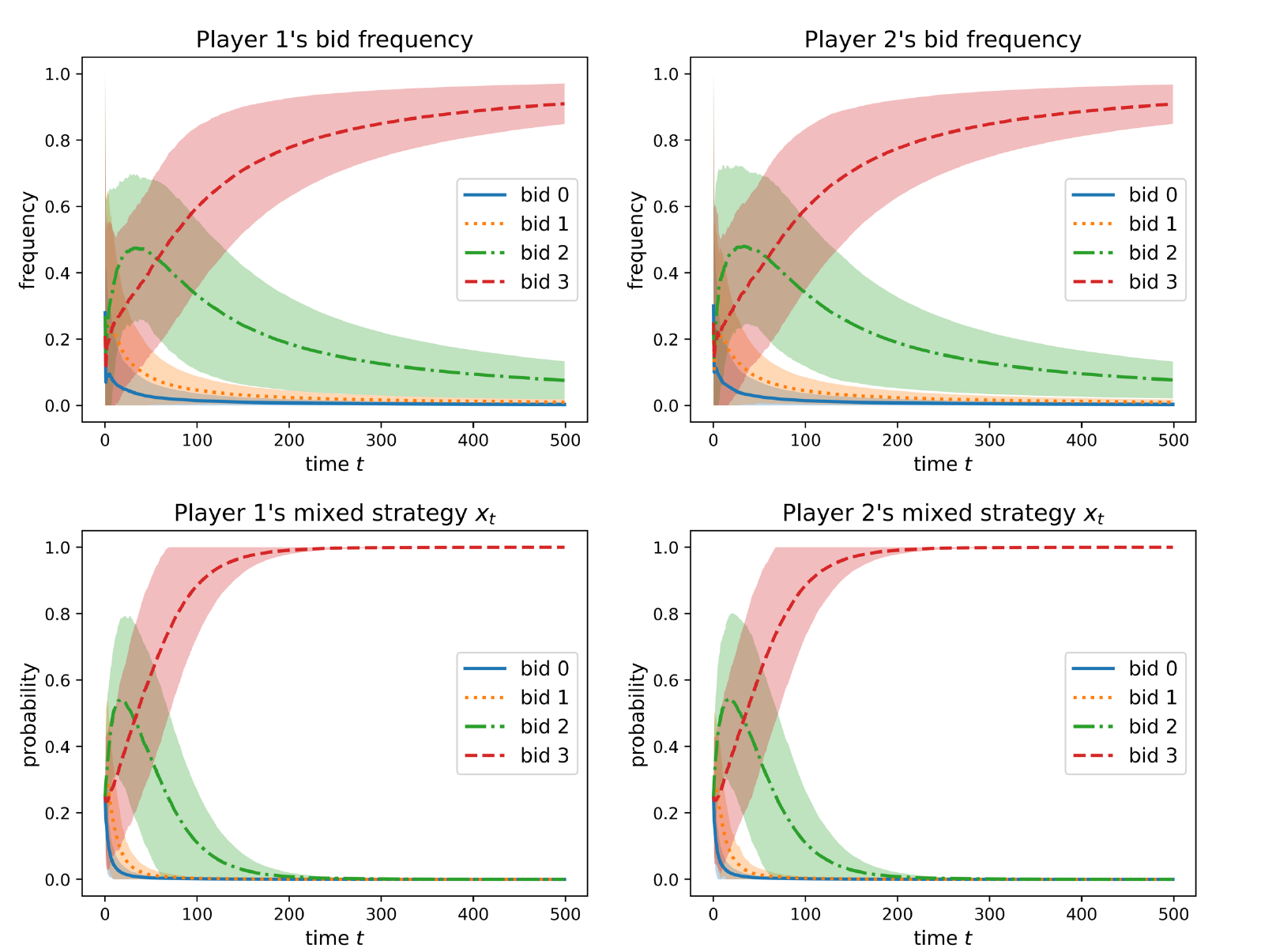}
    \caption{Player 1 and 2's bid frequencies and mixed strategies in the case of $|M^1|=2$, $v^1=v^2=4$, using MWU algorithm. The curves and the shaded regions are the means and $2$-standard deviation intervals of 1000 simulations.}
     \label{fig:M=2-MWU}
\end{figure}

\subsection{$|M^1|=1$: Non-Convergence}
For the case of $|M^1|=1$ we showed that not all mean-based algorithms can converge to equilibrium, using the example of Follow the Leader (Example~\ref{ex:M1=1}).
Here we experimentally demonstrate that such non-convergence phenomena can also happen with more natural (and even no-regret) mean-based algorithms like $\eps$-Greedy and MWU.

In the experiment we let $n=2$, $v^1=8, v^2=6$.  We run $\eps_t$-Greedy and MWU both with $\eps_t=1/\sqrt{t}$ for $T=20000$ rounds. 


For $\eps_t$-Greedy, Figure~\ref{fig:M=1-eps-greedy} shows that the two bidders do not converge to a pure-strategy equilibrium, either in time-average or last-iterate.  According to Proposition~\ref{prop:Nash}, a pure-strategy equilibrium must have bidder $1$ bidding $v^2 = 6$ and bidder $2$ bidding $v^2 - 1 = 5$.  But figure (b) shows that bidder $2$'s frequency of bidding $5$ does not converge to $1$.  The frequency oscillates and we do not know whether it will stabilize at some limit less than $1$.
Looking closer, we see that bidder $2$ constantly switches between bids $5$ and $3$, and bidder $1$ switches between $5$ and $6$.
Intuitively, this is because: in the $\eps_t$-Greedy algorithm, when bidder $1$ bids $v^2=6$ with high probability, she also sometimes (with probability $\eps_t$) chooses bids uniformly at random, in which case the best response for bidder $2$ is to bid $v^2/2=3$; but after bidder $2$ switches to $3$, bidder $1$ will find it beneficial to lower her bid from $6$ to $5$; then, bidder $2$ will switch to $5$ to compete with bidder $1$, winning the item with probability $1/2$; but then bidder $1$ will increase to $6$ to outbid bidder $2$; ...  In this way, they enter a cycle. 


\begin{figure}
    \centering
    \includegraphics[width=\textwidth]{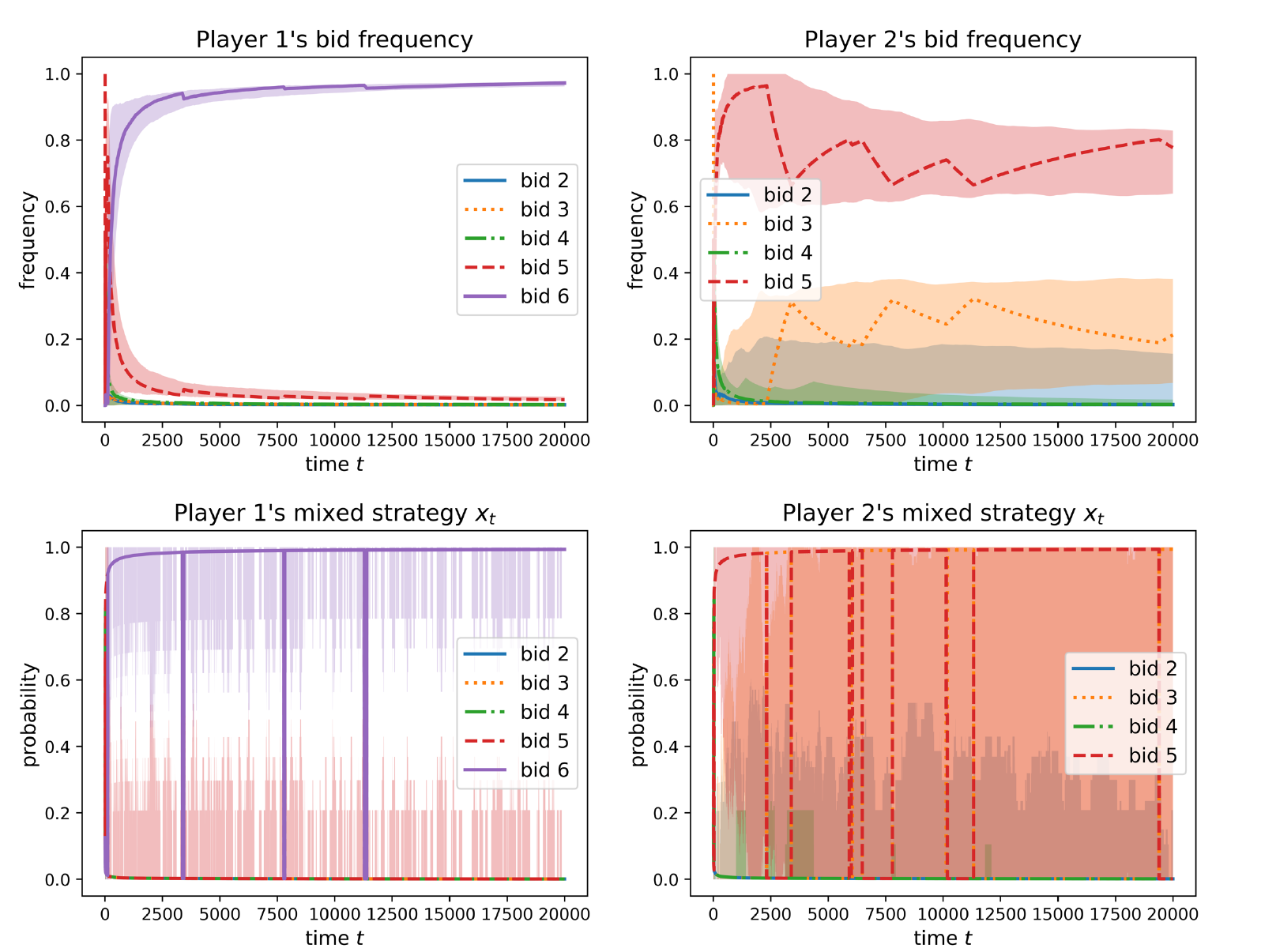}
    \caption{Player 1 and 2's bid frequencies and mixed strategies in the case of $|M^1|=1$, $v^1=8, v^2=6$, using $\eps_t$-Greedy algorithm.
    The curves are the results from one simulation. 
    The shaped regions are $2$-standard deviation intervals from 100 simulations.
    Player 1's frequency of bid $6$ seems to converge to $1$, but the mixed strategy does not last-iterate converge; it switches between bids $5$ and $6$.
    Player 2's bid frequency oscillates; the mixed strategy switches between bids $3$ and $5$.}
    \label{fig:M=1-eps-greedy}
\end{figure}

For MWU, Figure~\ref{fig:M=1-MWU} shows that the two bidders do not converge to a pure-strategy Nash equilibrium.  In particular, the left two figures show that bidder 1 converges to bidding $v^2=6$, but the right two figures show that bidder 2 seems to converge to a mixed strategy. 

\begin{figure}
    \centering
    \includegraphics[width=\textwidth]{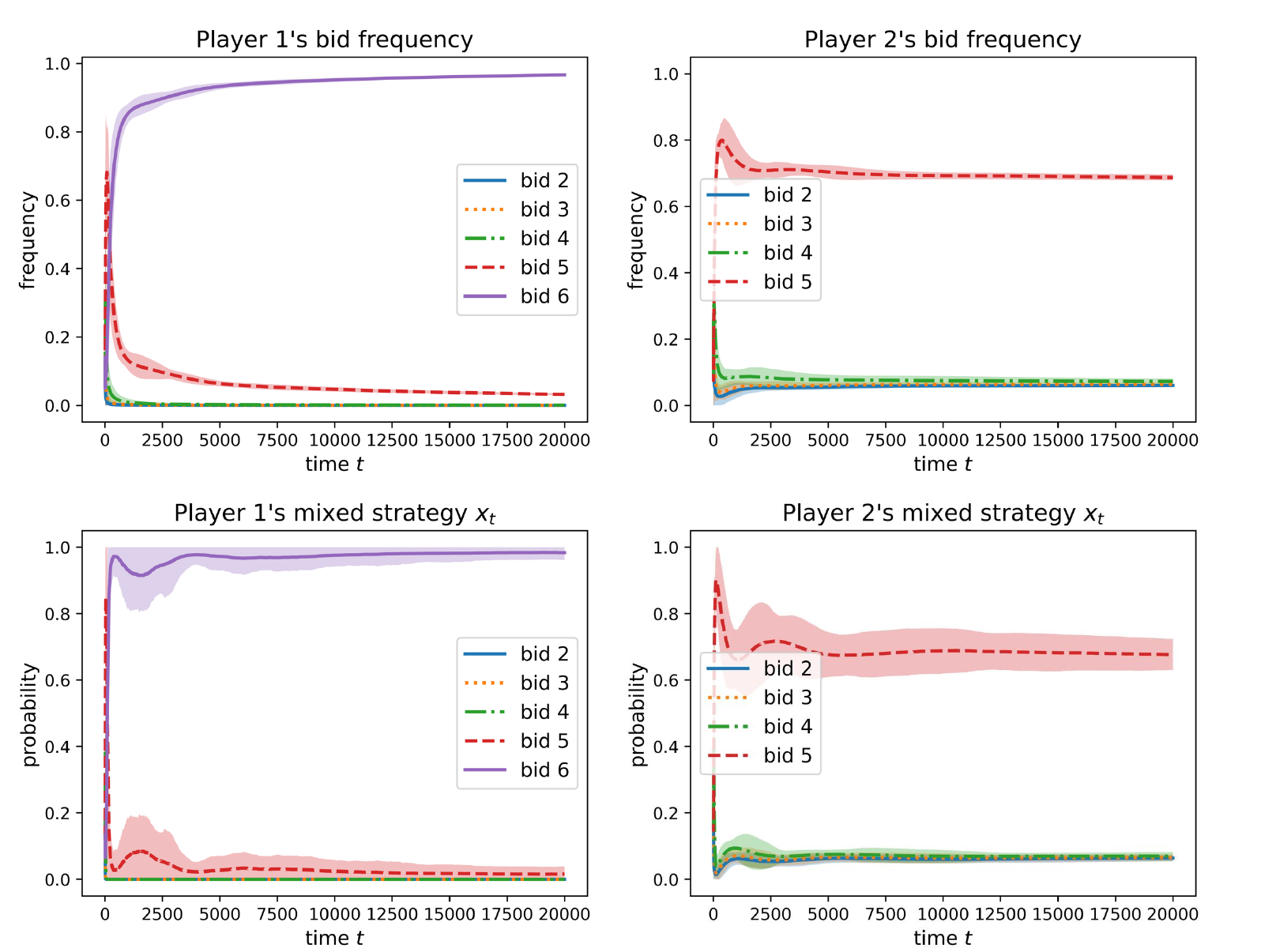}
    \caption{Player 1 and 2's bid frequencies and mixed strategies in the case of $|M^1|=1$, $v^1=8, v^2=6$, using MWU algorithm. The curves and the shaded regions are the means and $2$-standard deviation intervals of 100 simulations. }
    \label{fig:M=1-MWU}
\end{figure}

\section{Conclusions and Future Directions}
\label{sec:conclusion}


In this work we showed that, in repeated fixed-value first-price auctions, mean-based online learning bidders converge to a Nash equilibrium in the presence of \emph{competition}, in the sense that at least two bidders share the highest value.
In the case of no competition, we gave examples and experimental results showing that mean-based learning algorithms do not always converge to Nash equilibria.
There are several interesting open directions: 
\begin{itemize}
\item \textbf{The case of $|M^1|=1$.}
Although 
mean-based learning algorithms do not always converge to Nash equilibria in this case, we observed that the highest-value bidder's strategy still seems to ``converge'' in the sense that the bidder wins and pays the second highest value eventually. 
Understanding the convergence property of online learning algorithms in the absence of competition (namely, $|M^1| = 1$) is a natural and interesting future direction. 

\item \textbf{Convergence rate of mean-based learning dynamics.}
The convergence result we give is in the limit sense.  As observed by \cite{wu_multi-agent_2022}, many no-regret algorithms actually need an exponential time to converge to Nash equilibria in some iterative-dominance-solvable game. Our theoretical analysis for the first-price auction demonstrates a $T = O(c^{O(v^1)})$ upper bound on the convergence time for the case of $|M^1|=3$.  But the convergence time in our experiments is significantly shorter.  The exact convergence rate remains open. 

\item \textbf{Beyond fixed value setting.} As discussed in the Introduction, while the fixed value setting is well motivated, understanding the learning dynamics in first-price auctions in the Bayesian setting is also interesting yet challenging.  There are some recent progress in the symmetric value distribution setting \citep{ahunbay2025uniqueness}, while the general asymmetric setting remains open. 

\item \textbf{Multi-unit setting.} Our results focus on \emph{single-unit} first-price auctions. Understanding the conditions under which last-iterate convergence holds for learning dynamics in \emph{multi-unit} first-price (pay-as-bid) auctions is an interesting future direction. This question is also highlighted in a recent work by~\citet{galgana2025learning}, who studied optimization of bidding strategies from the perspective of a single bidder in multi-unit first-price auctions

\end{itemize}

\bibliographystyle{apalike}
\bibliography{bibfile}

\begin{thebibliography}{}

\bibitem[Abernethy et~al., 2019]{abernethy_learning_2019}
Abernethy, J.~D., Cummings, R., Kumar, B., Taggart, S., and Morgenstern, J.~H.
  (2019).
\newblock Learning {Auctions} with {Robust} {Incentive} {Guarantees}.
\newblock In {\em Proceedings of the 33rd {International} {Conference} on
  {Neural} {Information} {Processing} {Systems}}, {NIPS}’19, pages
  11587--11597.

\bibitem[Aggarwal et~al., 2024]{aggarwal2024auto}
Aggarwal, G., Badanidiyuru, A., Balseiro, S.~R., Bhawalkar, K., Deng, Y., Feng,
  Z., Goel, G., Liaw, C., Lu, H., Mahdian, M., et~al. (2024).
\newblock Auto-bidding and auctions in online advertising: A survey.
\newblock {\em ACM SIGecom Exchanges}, 22(1):159--183.

\bibitem[Ahunbay and Bichler, 2025a]{ahunbay2025uniqueness}
Ahunbay, M.~{\c{S}}. and Bichler, M. (2025a).
\newblock On the uniqueness of bayesian coarse correlated equilibria in
  standard first-price and all-pay auctions.
\newblock In {\em Proceedings of the 2025 Annual ACM-SIAM Symposium on Discrete
  Algorithms (SODA)}, pages 2491--2537. SIAM.

\bibitem[Ahunbay and Bichler, 2025b]{ahunbay2025semicoarse}
Ahunbay, M.~{\c{S}}. and Bichler, M. (2025b).
\newblock Semicoarse correlated equilibria and lp-based guarantees for gradient
  dynamics in normal-form games.
\newblock In {\em Proceedings of the 26th ACM Conference on Economics and
  Computation}, pages 91--91.

\bibitem[Amin et~al., 2013]{amin_learning_2013}
Amin, K., Rostamizadeh, A., and Syed, U. (2013).
\newblock Learning {Prices} for {Repeated} {Auctions} with {Strategic}
  {Buyers}.
\newblock In {\em Proceedings of the 26th {International} {Conference} on
  {Neural} {Information} {Processing} {Systems}}, {NIPS}’13, pages
  1169--1177.

\bibitem[Badanidiyuru et~al., 2023]{badanidiyuru_learning_2023}
Badanidiyuru, A., Feng, Z., and Guruganesh, G. (2023).
\newblock Learning to {Bid} in {Contextual} {First} {Price} {Auctions}.
\newblock In {\em Proceedings of the {ACM} {Web} {Conference} 2023}, pages
  3489--3497, Austin TX USA. ACM.

\bibitem[Balseiro et~al., 2019]{balseiro_contextual_2019}
Balseiro, S., Golrezaei, N., Mahdian, M., Mirrokni, V., and Schneider, J.
  (2019).
\newblock Contextual {Bandits} with {Cross}-{Learning}.
\newblock In {\em Advances in {Neural} {Information} {Processing} {Systems}},
  volume~32.

\bibitem[Banchio and Skrzypacz, 2022]{banchio_artificial_2022}
Banchio, M. and Skrzypacz, A. (2022).
\newblock Artificial {Intelligence} and {Auction} {Design}.
\newblock In {\em Proceedings of the 23rd {ACM} {Conference} on {Economics} and
  {Computation}}, pages 30--31, Boulder CO USA. ACM.

\bibitem[Bertrand, 1883]{bertrand1883review}
Bertrand, J. (1883).
\newblock Review of “theorie mathematique de la richesse sociale” and of
  “recherches sur les principles mathematiques de la theorie des
  richesses.”.
\newblock {\em Journal de savants}, 67:499.

\bibitem[Bichler et~al., 2024]{bichler_online_2024}
Bichler, M., Durmann, J., and Oberlechner, M. (2024).
\newblock Online {Optimization} {Algorithms} in {Repeated} {Price}
  {Competition}: {Equilibrium} {Learning} and {Algorithmic} {Collusion}.
\newblock \_eprint: 2412.15707.

\bibitem[Bichler et~al., 2021]{bichler_learning_2021}
Bichler, M., Fichtl, M., Heidekrüger, S., Kohring, N., and Sutterer, P.
  (2021).
\newblock Learning equilibria in symmetric auction games using artificial
  neural networks.
\newblock {\em Nature Machine Intelligence}, 3(8):687--695.

\bibitem[Blum and Hartline, 2005]{blum_near-optimal_2005}
Blum, A. and Hartline, J.~D. (2005).
\newblock Near-{Optimal} {Online} {Auctions}.
\newblock In {\em Proceedings of the {Sixteenth} {Annual} {ACM}-{SIAM}
  {Symposium} on {Discrete} {Algorithms}}, {SODA} '05, pages 1156--1163.
  Society for Industrial and Applied Mathematics.

\bibitem[Braverman et~al., 2018]{braverman_selling_2018}
Braverman, M., Mao, J., Schneider, J., and Weinberg, M. (2018).
\newblock Selling to a {No}-{Regret} {Buyer}.
\newblock In {\em Proceedings of the 2018 {ACM} {Conference} on {Economics} and
  {Computation}}, pages 523--538, Ithaca NY USA. ACM.

\bibitem[Cai et~al., 2022]{cai2022finitetime}
Cai, Y., Oikonomou, A., and Zheng, W. (2022).
\newblock Finite-time last-iterate convergence for learning in multi-player
  games.
\newblock In Oh, A.~H., Agarwal, A., Belgrave, D., and Cho, K., editors, {\em
  Advances in Neural Information Processing Systems}.

\bibitem[Cesa-Bianchi et~al., 2015]{cesa-bianchi_regret_2015}
Cesa-Bianchi, N., Gentile, C., and Mansour, Y. (2015).
\newblock Regret {Minimization} for {Reserve} {Prices} in {Second}-{Price}
  {Auctions}.
\newblock {\em IEEE Transactions on Information Theory}, 61(1):549--564.

\bibitem[Cesa-Bianchi and Lugosi, 2006]{cesa-bianchi_prediction_2006}
Cesa-Bianchi, N. and Lugosi, G. (2006).
\newblock {\em Prediction, {Learning}, and {Games}}.
\newblock Cambridge University Press, Cambridge.

\bibitem[Chen and Peng, 2023]{chen_complexity_2023}
Chen, X. and Peng, B. (2023).
\newblock Complexity of {Equilibria} in {First}-{Price} {Auctions} under
  {General} {Tie}-{Breaking} {Rules}.
\newblock In {\em Proceedings of the 55th {Annual} {ACM} {Symposium} on
  {Theory} of {Computing}}, pages 698--709, Orlando FL USA. ACM.

\bibitem[Daskalakis et~al., 2010]{daskalakis_learning_2010}
Daskalakis, C., Frongillo, R., Papadimitriou, C.~H., Pierrakos, G., and
  Valiant, G. (2010).
\newblock On {Learning} {Algorithms} for {Nash} {Equilibria}.
\newblock In {\em Algorithmic {Game} {Theory}}, pages 114--125. Springer Berlin
  Heidelberg.

\bibitem[Daskalakis and Panageas, 2018]{daskalakis_last-iterate_2018}
Daskalakis, C. and Panageas, I. (2018).
\newblock Last-{Iterate} {Convergence}: {Zero}-{Sum} {Games} and {Constrained}
  {Min}-{Max} {Optimization}.

\bibitem[Deng et~al., 2022]{deng_nash_2022}
Deng, X., Hu, X., Lin, T., and Zheng, W. (2022).
\newblock Nash {Convergence} of {Mean}-{Based} {Learning} {Algorithms} in
  {First} {Price} {Auctions}.
\newblock In {\em Proceedings of the {ACM} {Web} {Conference} 2022}, pages
  141--150, Virtual Event, Lyon France. ACM.

\bibitem[Deng et~al., 2020]{deng_game-theoretic_2020}
Deng, X., Lavi, R., Lin, T., Qi, Q., WANG, W., and Yan, X. (2020).
\newblock A {Game}-{Theoretic} {Analysis} of the {Empirical} {Revenue}
  {Maximization} {Algorithm} with {Endogenous} {Sampling}.
\newblock In {\em Advances in {Neural} {Information} {Processing} {Systems}},
  volume~33, pages 5215--5226.

\bibitem[Devanur et~al., 2015]{devanur_perfect_2015}
Devanur, N.~R., Peres, Y., and Sivan, B. (2015).
\newblock Perfect {Bayesian} {Equilibria} in {Repeated} {Sales}.
\newblock In {\em Proceedings of the {Twenty}-{Sixth} {Annual} {ACM}-{SIAM}
  {Symposium} on {Discrete} {Algorithms}}, {SODA} ’15, pages 983--1002, USA.
  Society for Industrial and Applied Mathematics.

\bibitem[Escamocher et~al., 2009]{escamocher_existence_2009}
Escamocher, G., Miltersen, P.~B., and Rocio, S.-R. (2009).
\newblock Existence and {Computation} of {Equilibria} of {First}-{Price}
  {Auctions} with {Integral} {Valuations} and {Bids}.
\newblock In {\em Proceedings of {The} 8th {International} {Conference} on
  {Autonomous} {Agents} and {Multiagent} {Systems} - {Volume} 2}, {AAMAS} '09,
  pages 1227--1228.

\bibitem[Feng et~al., 2022]{feng_peer_2022}
Feng, S., Yu, F.-Y., and Chen, Y. (2022).
\newblock Peer {Prediction} for {Learning} {Agents}.
\newblock In {\em Advances in {Neural} {Information} {Processing} {Systems}}.

\bibitem[Feng et~al., 2021]{feng_convergence_2021}
Feng, Z., Guruganesh, G., Liaw, C., Mehta, A., and Sethi, A. (2021).
\newblock Convergence {Analysis} of {No}-{Regret} {Bidding} {Algorithms} in
  {Repeated} {Auctions}.
\newblock In {\em Proceedings of the {Thirty}-{Fifth} {AAAI} {Conference} on
  {Artificial} {Intelligence} ({AAAI}-21)}.

\bibitem[Feng et~al., 2018]{feng_learning_2018}
Feng, Z., Podimata, C., and Syrgkanis, V. (2018).
\newblock Learning to {Bid} {Without} {Knowing} your {Value}.
\newblock In {\em Proceedings of the 2018 {ACM} {Conference} on {Economics} and
  {Computation}}, pages 505--522, Ithaca NY USA. ACM.

\bibitem[Fibich and Gavious, 2003]{fibich_asymmetric_2003}
Fibich, G. and Gavious, A. (2003).
\newblock Asymmetric {First}-{Price} {Auctions}: {A} {Perturbation} {Approach}.
\newblock {\em Mathematics of Operations Research}, 28(4):836--852.

\bibitem[Filos-Ratsikas et~al., 2024]{filos2024computation}
Filos-Ratsikas, A., Giannakopoulos, Y., Hollender, A., and Kokkalis, C. (2024).
\newblock On the computation of equilibria in discrete first-price auctions.
\newblock In {\em Proceedings of the 25th ACM Conference on Economics and
  Computation}, pages 379--399.

\bibitem[Filos-Ratsikas et~al., 2021]{filos2021complexity}
Filos-Ratsikas, A., Giannakopoulos, Y., Hollender, A., Lazos, P., and
  Po{\c{c}}as, D. (2021).
\newblock On the complexity of equilibrium computation in first-price auctions.
\newblock In {\em Proceedings of the 22nd ACM Conference on Economics and
  Computation}, pages 454--476.

\bibitem[Foster and Vohra, 1997]{foster_calibrated_1997}
Foster, D.~P. and Vohra, R.~V. (1997).
\newblock Calibrated {Learning} and {Correlated} {Equilibrium}.
\newblock {\em Games and Economic Behavior}, 21(1-2):40--55.

\bibitem[Fudenberg and Levine, 1998]{fudenberg_theory_1998}
Fudenberg, D. and Levine, D.~K. (1998).
\newblock {\em The theory of learning in games}.
\newblock Number~2 in {MIT} {Press} series on economic learning and social
  evolution. MIT Press, Cambridge, Mass.

\bibitem[Galgana and Golrezaei, 2025]{galgana2025learning}
Galgana, R. and Golrezaei, N. (2025).
\newblock Learning in repeated multiunit pay-as-bid auctions.
\newblock {\em Manufacturing \& Service Operations Management}, 27(1):200--229.

\bibitem[Goke et~al., 2022]{goke_bidders_2022}
Goke, S., Weintraub, G.~Y., Mastromonaco, R.~A., and Seljan, S.~S. (2022).
\newblock Bidders' {Responses} to {Auction} {Format} {Change} in {Internet}
  {Display} {Advertising} {Auctions}.
\newblock In {\em Proceedings of the 23rd {ACM} {Conference} on {Economics} and
  {Computation}}, pages 295--295, Boulder CO USA. ACM.

\bibitem[Golrezaei et~al., 2021]{golrezaei_dynamic_2021}
Golrezaei, N., Javanmard, A., and Mirrokni, V. (2021).
\newblock Dynamic {Incentive}-{Aware} {Learning}: {Robust} {Pricing} in
  {Contextual} {Auctions}.
\newblock {\em Operations Research}, 69(1):297--314.

\bibitem[Han et~al., 2025]{han_optimal_2025}
Han, Y., Weissman, T., and Zhou, Z. (2025).
\newblock Optimal {No}-{Regret} {Learning} in {Repeated} {First}-{Price}
  {Auctions}.
\newblock {\em Operations Research}, 73(1):209--238.

\bibitem[Han et~al., 2020]{han2020learning}
Han, Y., Zhou, Z., Flores, A., Ordentlich, E., and Weissman, T. (2020).
\newblock Learning to bid optimally and efficiently in adversarial first-price
  auctions.
\newblock {\em arXiv preprint arXiv:2007.04568}.

\bibitem[Hart and Mas-Colell, 2000]{hart_simple_2000}
Hart, S. and Mas-Colell, A. (2000).
\newblock A {Simple} {Adaptive} {Procedure} {Leading} to {Correlated}
  {Equilibrium}.
\newblock {\em Econometrica}, 68(5):1127--1150.

\bibitem[Hon-Snir et~al., 1998]{hon-snir_learning_1998}
Hon-Snir, S., Monderer, D., and Sela, A. (1998).
\newblock A {Learning} {Approach} to {Auctions}.
\newblock {\em Journal of Economic Theory}, 82(1):65--88.

\bibitem[Huang et~al., 2018]{huang_learning_2018}
Huang, Z., Liu, J., and Wang, X. (2018).
\newblock Learning {Optimal} {Reserve} {Price} against {Non}-{Myopic}
  {Bidders}.
\newblock In {\em Proceedings of the 32nd {International} {Conference} on
  {Neural} {Information} {Processing} {Systems}}, {NIPS}’18, pages
  2042--2052.

\bibitem[Immorlica et~al., 2017]{immorlica_repeated_2017}
Immorlica, N., Lucier, B., Pountourakis, E., and Taggart, S. (2017).
\newblock Repeated {Sales} with {Multiple} {Strategic} {Buyers}.
\newblock In {\em Proceedings of the 2017 {ACM} {Conference} on {Economics} and
  {Computation}}, pages 167--168. ACM.

\bibitem[Iyer et~al., 2014]{iyer_mean_2014}
Iyer, K., Johari, R., and Sundararajan, M. (2014).
\newblock Mean {Field} {Equilibria} of {Dynamic} {Auctions} with {Learning}.
\newblock {\em Management Science}, 60(12):2949--2970.

\bibitem[Kanoria and Nazerzadeh, 2019]{kanoria_incentive-compatible_2019}
Kanoria, Y. and Nazerzadeh, H. (2019).
\newblock Incentive-{Compatible} {Learning} of {Reserve} {Prices} for
  {Repeated} {Auctions}.
\newblock In {\em Companion {The} 2019 {World} {Wide} {Web} {Conference}}.

\bibitem[Karaca et~al., 2020]{karaca2020no}
Karaca, O., Sessa, P.~G., Leidi, A., and Kamgarpour, M. (2020).
\newblock No-regret learning from partially observed data in repeated auctions.
\newblock {\em IFAC-PapersOnLine}, 53(2):14--19.

\bibitem[Kolumbus and Nisan, 2022]{kolumbus_auctions_2022}
Kolumbus, Y. and Nisan, N. (2022).
\newblock Auctions between {Regret}-{Minimizing} {Agents}.
\newblock In {\em Proceedings of the {ACM} {Web} {Conference} 2022}, pages
  100--111, Virtual Event, Lyon France. ACM.

\bibitem[Lebrun, 1996]{lebrun_existence_1996}
Lebrun, B. (1996).
\newblock Existence of an {Equilibrium} in {First} {Price} {Auctions}.
\newblock {\em Economic Theory}, 7(3):421--443.
\newblock Publisher: Springer.

\bibitem[Lebrun, 1999]{lebrun_first_1999}
Lebrun, B. (1999).
\newblock First {Price} {Auctions} in the {Asymmetric} {N} {Bidder} {Case}.
\newblock {\em International Economic Review}, 40(1):125--142.

\bibitem[Maskin and Riley, 2000]{maskin_equilibrium_2000}
Maskin, E. and Riley, J. (2000).
\newblock Equilibrium in {Sealed} {High} {Bid} {Auctions}.
\newblock {\em Review of Economic Studies}, 67(3):439--454.

\bibitem[Mertikopoulos et~al., 2018]{mertikopoulos_cycles_2018}
Mertikopoulos, P., Papadimitriou, C., and Piliouras, G. (2018).
\newblock Cycles in {Adversarial} {Regularized} {Learning}.
\newblock In {\em Proceedings of the {Twenty}-{Ninth} {Annual} {ACM}-{SIAM}
  {Symposium} on {Discrete} {Algorithms}}, {SODA} '18, pages 2703--2717, USA.
  Society for Industrial and Applied Mathematics.
\newblock event-place: New Orleans, Louisiana.

\bibitem[Mohri and Medina, 2014]{mohri_optimal_2014}
Mohri, M. and Medina, A.~M. (2014).
\newblock Optimal {Regret} {Minimization} in {Posted}-{Price} {Auctions} with
  {Strategic} {Buyers}.
\newblock In {\em Proceedings of the 27th {International} {Conference} on
  {Neural} {Information} {Processing} {Systems}}, {NIPS}’14.

\bibitem[Nekipelov et~al., 2015]{nekipelov_econometrics_2015}
Nekipelov, D., Syrgkanis, V., and Tardos, E. (2015).
\newblock Econometrics for {Learning} {Agents}.
\newblock In {\em Proceedings of the {Sixteenth} {ACM} {Conference} on
  {Economics} and {Computation} - {EC} '15}, pages 1--18, Portland, Oregon,
  USA. ACM Press.

\bibitem[Nisan et~al., 2007]{nisan_algorithmic_2007}
Nisan, N., Roughgarden, T., Tardos, E., and Vazirani, V.~V., editors (2007).
\newblock {\em Algorithmic {Game} {Theory}}.
\newblock Cambridge University Press, Cambridge.

\bibitem[Paes~Leme et~al., 2020]{paes_leme_why_2020}
Paes~Leme, R., Sivan, B., and Teng, Y. (2020).
\newblock Why {Do} {Competitive} {Markets} {Converge} to {First}-{Price}
  {Auctions}?
\newblock In {\em Proceedings of {The} {Web} {Conference} 2020}, pages
  596--605, Taipei Taiwan. ACM.

\bibitem[Roughgarden, 2016]{roughgarden_lecture_2016}
Roughgarden, T. (2016).
\newblock Lecture \#17: {No}-{Regret} {Dynamics}.
\newblock In {\em Twenty lectures on algorithmic game theory}. Cambridge
  University Press, Cambridge ; New York, NY.

\bibitem[Wang et~al., 2023]{wang2023learning}
Wang, Q., Yang, Z., Deng, X., and Kong, Y. (2023).
\newblock Learning to bid in repeated first-price auctions with budgets.
\newblock In {\em International Conference on Machine Learning}, pages
  36494--36513. PMLR.

\bibitem[Wang et~al., 2020]{wang_bayesian_2020}
Wang, Z., Shen, W., and Zuo, S. (2020).
\newblock Bayesian {Nash} {Equilibrium} in {First}-{Price} {Auction} with
  {Discrete} {Value} {Distributions}.
\newblock In {\em Proceedings of the 19th {International} {Conference} on
  {Autonomous} {Agents} and {MultiAgent} {Systems}}, {AAMAS} '20, pages
  1458--1466.

\bibitem[Weed et~al., 2016]{weed_online_2016}
Weed, J., Perchet, V., and Rigollet, P. (2016).
\newblock Online learning in repeated auctions.
\newblock In {\em Conference on {Learning} {Theory}}, pages 1562--1583. PMLR.

\bibitem[Wei et~al., 2021]{wei_linear_2021}
Wei, C.-Y., Lee, C.-W., Zhang, M., and Luo, H. (2021).
\newblock Linear {Last}-iterate {Convergence} in {Constrained} {Saddle}-point
  {Optimization}.
\newblock In {\em International {Conference} on {Learning} {Representations}}.

\bibitem[Wu et~al., 2022]{wu_multi-agent_2022}
Wu, J., Xu, H., and Yao, F. (2022).
\newblock Multi-{Agent} {Learning} for {Iterative} {Dominance} {Elimination}:
  {Formal} {Barriers} and {New} {Algorithms}.
\newblock In {\em Proceedings of {Thirty} {Fifth} {Conference} on {Learning}
  {Theory}}, volume 178 of {\em Proceedings of {Machine} {Learning}
  {Research}}, pages 543--543. PMLR.

\bibitem[Zhang et~al., 2022]{zhang2022leveraging}
Zhang, W., Han, Y., Zhou, Z., Flores, A., and Weissman, T. (2022).
\newblock Leveraging the hints: Adaptive bidding in repeated first-price
  auctions.
\newblock {\em Advances in Neural Information Processing Systems},
  35:21329--21341.

\bibitem[Zinkevich, 2003]{zinkevich2003online}
Zinkevich, M. (2003).
\newblock Online convex programming and generalized infinitesimal gradient
  ascent.
\newblock In {\em Proceedings of the 20th international conference on machine
  learning (icml-03)}, pages 928--936.

\end{thebibliography}


\appendix
\section{Missing Proofs from Section~\ref{sec:main-result}}\label{app:main-result-proof}
\subsection{Proof of Theorem~\ref{thm:main-M1-2}}
Suppose $|M^1|=2$. 
We will prove that, for any sufficiently large integer $T_b$, with probability at least $1 - \exp\big(-\frac{T_b}{24NV}\big) - 2\exp\big(-\frac{T_b}{1152N^2V^2} \big) - \frac{6}{e-2}\big(\frac{48NV}{T_b}\big)^{3e/4}$, one of following two events must happen: 
\begin{itemize}
    \item $\lim_{t\to\infty}\frac{1}{t}\sum_{s=1}^{t} \mathbb{I}[\forall i\in M^1, b_s^i=v^1-2]=1$;
    \item $\lim_{t\to\infty}\frac{1}{t}\sum_{s=1}^{t} \mathbb{I}[\forall i\in M^1, b_s^i=v^1-1]=1$ and $\lim_{t\to\infty}\Pr[b_t^i=v^1-1]=1$.
\end{itemize}
And if $n\ge 3$ and $v^3=v^1-1$, only the second event happens. 
Letting $T_b\to\infty$ proves Theorem~\ref{thm:main-M1-2}.




We reuse the argument in Section~\ref{sec:first-stage}.  Assume $v^1\ge 3$.\footnote{If $v^1=1$, Theorem~\ref{thm:main-M1-2} trivially holds.  If $v^1=2$, we let $T_{v^1-3} = T_0 = T_b$; $A_{v^1-3}$ holds with probability $1$ since $\frac{1}{T_{v^1-3}} \sum_{t=1}^{T_{v^1-3}} \mathbb{I}[\exists i\in M^1, b_t^i\le v^1-3] = 0$; the argument for $v^1\ge 3$ will still apply. }   Recall that we defined $ c=1+\frac{1}{12NV}$, $d = \lceil \log_c (8NV) \rceil$; $T_b$ is any integer such that $\gamma_{T_b} < \frac{1}{12N^2 V^2}$ and $\exp\left( -\frac{(c-1)T_b}{1152N^2V^2} \right) \le \frac{1}{2}$;
$T_0 = 12N V T_b$; $T_{v^1-3} = c^{(v^1-3)d}T_0$.
We defined $A_{v^1-3}$ to be the event $\frac{1}{T_{v^1-3}} \sum_{t=1}^{T_{v^1-3}} \mathbb{I}[\exists i\in M^1, b_t^i\le v^1-3]\le \frac{1}{4N V}$.
According to Corollary~\ref{cor:first-stage-end}, $A_{v^1-3}$ holds with probability at least $1 - \exp\left(-\frac{T_b}{24NV}\right) - 2\exp\left(-\frac{T_b}{1152N^2V^2} \right)$.
Suppose $A_{v^1-3}$ holds. 

Now we partition the time horizon after $T_{v^1-3}$ as follows:
let $T_a^0=T_{v^1-3}, T_a^k = C(k+24NV)^2$, $\forall k\ge 0$, where $C=\frac{T_{v^1-3}}{(24NV)^2}$, so that $T_a^0 = C(0+24NV)^2$. Denote $\Gamma_a^{k+1} = [T_a^k+1, T_a^{k+1}]$, with $|\Gamma_a^{k+1}| = T_a^{k+1} - T_a^k$. (We note that the notations here have different meanings than those in Section~\ref{sec:main-M1-3-proof}.) 
We define $\delta_t=(\frac{1}{t})^{1/8}, t\ge 0$. For each $k\ge 0$, we define 
\begin{equation*}
    F_{T_a^k}=\frac{T_a^0}{T_a^k} \frac{1}{4NV} + \sum_{s=0}^{k-1} \frac{T_a^{s+1} - T_a^s}{T_a^k}\delta_{T_a^s}+ \sum_{s=0}^{k-1} \frac{T_a^{s+1} - T_a^s}{T_a^k}|M^1|V\gamma_{T_a^s}.
\end{equation*}
Let $A_a^k$ be event
\begin{equation*}
    A_a^k = \left[ \frac{1}{T_a^k} \sum_{t=1}^{T_a^k} \mathbb{I}[\exists i\in M^1, b_t^i\le v^1-3]\le F_{T_a^k} \right].
\end{equation*}
We note that $A_a^0 = A_{v^1-3}$ because $F_{T_a^0} = \frac{1}{4NV}$. 

In the proof we will always let $T_b$ to be sufficiently large.  This implies that all the times $T_0, T_{v^1-3}, T_a^0, T_a^k$, etc., are sufficiently large. 

\subsubsection{Additional Notations, Claims, and Lemmas}
\begin{claim}\label{claim:F_T_a_k}
When $T_b$ is sufficiently large, 
\begin{itemize}
    \item $F_{T_a^{k+1}} \le F_{T_a^k} \le \frac{1}{4NV}$ for every $k\ge 0$. 
    \item $\lim_{k\to\infty} F_{T_a^k} = 0$. 
\end{itemize}
\end{claim}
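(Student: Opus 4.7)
The plan is to recognize $F_{T_a^k}$ as a running weighted average and then combine a monotonicity argument with a Cesàro-type limit. Writing $a_s := \delta_{T_a^s} + |M^1|V\gamma_{T_a^s}$, I would first rewrite
\[
F_{T_a^k} \;=\; \frac{1}{T_a^k}\left(T_a^0\cdot \tfrac{1}{4NV} + \sum_{s=0}^{k-1}(T_a^{s+1}-T_a^s)\,a_s\right),
\]
so that multiplying both sides by $T_a^k$ and comparing to the $k+1$ case gives the one-step recursion
\[
F_{T_a^{k+1}} \;=\; \frac{T_a^k}{T_a^{k+1}}\,F_{T_a^k} \;+\; \frac{T_a^{k+1}-T_a^k}{T_a^{k+1}}\,a_k,
\]
i.e.\ $F_{T_a^{k+1}}$ is a convex combination of $F_{T_a^k}$ and $a_k$.

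From this recursion, $F_{T_a^{k+1}} \le F_{T_a^k}$ holds iff $a_k \le F_{T_a^k}$. I would prove the latter by induction. The sequence $(a_s)$ is decreasing: $\delta_{T_a^s} = (T_a^s)^{-1/8}$ is decreasing because $T_a^s = C(s+24NV)^2$ is increasing in $s$, and $\gamma_{T_a^s}$ is decreasing by the definition of a mean-based algorithm. For the base case, taking $T_b$ (and hence $T_a^0$) sufficiently large makes $a_0 = \delta_{T_a^0} + |M^1|V\gamma_{T_a^0} \le \tfrac{1}{4NV} = F_{T_a^0}$, since $\delta_t, \gamma_t \to 0$. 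For the inductive step, assuming $a_k \le F_{T_a^k}$, the convex-combination formula yields $F_{T_a^{k+1}} \ge \min(F_{T_a^k}, a_k) = a_k \ge a_{k+1}$, closing the induction. The bound $F_{T_a^k} \le \tfrac{1}{4NV}$ is then immediate from monotonicity together with $F_{T_a^0} = \tfrac{1}{4NV}$.

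For the limit, I would split the expression. The initial term $\frac{T_a^0}{T_a^k}\cdot\frac{1}{4NV}$ tends to $0$ since $T_a^k = C(k+24NV)^2 \to \infty$. For the remaining weighted sum $\sum_{s=0}^{k-1} w_s^{(k)} a_s$ with $w_s^{(k)} := \frac{T_a^{s+1}-T_a^s}{T_a^k}$, I would note that $\sum_{s=0}^{k-1} w_s^{(k)} = \frac{T_a^k - T_a^0}{T_a^k} \le 1$ and, for each fixed $s$, $w_s^{(k)} \to 0$ as $k\to\infty$. Combined with $a_s \to 0$, a standard Toeplitz/Cesàro argument applies: given $\varepsilon>0$, pick $s_0$ with $a_s \le \varepsilon$ for $s>s_0$; then $\sum_s w_s^{(k)} a_s \le a_0 \cdot \frac{T_a^{s_0+1}}{T_a^k} + \varepsilon$, and the first term vanishes as $k\to\infty$, so $\limsup F_{T_a^k} \le \varepsilon$, whence $\lim F_{T_a^k}=0$.

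There is no real obstacle: the only nontrivial move is spotting the recursion that expresses $F_{T_a^{k+1}}$ as a convex combination of $F_{T_a^k}$ and $a_k$, which reduces the monotonicity question to the inductively verifiable inequality $a_k \le F_{T_a^k}$; the rest is bookkeeping and a textbook Cesàro estimate, with the choice $\delta_t=t^{-1/8}$ and the quadratic growth of $T_a^k$ playing no special role beyond ensuring $a_s\to 0$.
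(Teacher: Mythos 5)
Your proof is correct and follows essentially the same route as the paper's: the same one-step convex-combination recursion $F_{T_a^{k+1}} = \frac{T_a^k}{T_a^{k+1}}F_{T_a^k} + \frac{T_a^{k+1}-T_a^k}{T_a^{k+1}}\bigl(\delta_{T_a^k}+|M^1|V\gamma_{T_a^k}\bigr)$, monotonicity via the inequality $\delta_{T_a^k}+|M^1|V\gamma_{T_a^k}\le F_{T_a^k}$, and a split-the-sum Ces\`aro estimate for the limit. The only cosmetic difference is that you establish $\delta_{T_a^k}+|M^1|V\gamma_{T_a^k}\le F_{T_a^k}$ by induction through the recursion, whereas the paper lower-bounds $F_{T_a^k}$ directly by the last (smallest) term of the decreasing weighted sum; both are fine.
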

\begin{proof}
Since $\delta_{T_a^0} \to 0$ and $\gamma_{T_a^0} \to 0$ as $T_b\to\infty$, when $T_b$ is sufficiently large we have
\[F_{T_a^1}=\frac{T_a^0}{T_a^1} \frac{1}{4NV} + \frac{T_a^{1} - T_a^0}{T_a^1}\left(\delta_{T_a^0}+ |M^1|V\gamma_{T_a^0}\right) \le \frac{T_a^0}{T_a^1} \frac{1}{4NV} + \frac{T_a^{1} - T_a^0}{T_a^1}\frac{1}{4NV} = \frac{1}{4NV} = F_{T_a^0}. \]
%
%
%
Since $\delta_{T_a^s}$ and $\gamma_{T_a^s}$ are both decreasing, we have
\begin{align*}
    F_{T_a^{k}}
    & > \sum_{s=0}^{k-1} \frac{T_a^{s+1} - T_a^s}{T_a^k}\delta_{T_a^s}+ \sum_{s=0}^{k-1} \frac{T_a^{s+1} - T_a^s}{T_a^k}|M^1|V\gamma_{T_a^s} \\
    & \ge \sum_{s=0}^{k-1} \frac{T_a^{s+1} - T_a^s}{T_a^k}\delta_{T_a^k}+ \sum_{s=0}^{k-1} \frac{T_a^{s+1} - T_a^s}{T_a^k}|M^1|V\gamma_{T_a^k} ~=~ \delta_{T_a^k} + |M^1|V \gamma_{T_a^k}.
\end{align*}
Thus,
\begin{align*}
    F_{T_a^{k+1}} &\stackrel{\text{by definition}}{=} \frac{T_a^k }{T_a^{k+1}} F_{T_a^k} + \frac{T_a^{k+1} - T_a^k}{T_a^{k+1}} \left(\delta_{T_a^k} + |M^1|V \gamma_{T_a^k}\right) < \frac{T_a^k }{T_a^{k+1}} F_{T_a^k} + \frac{T_a^{k+1} - T_a^k}{T_a^{k+1}} F_{T_a^k} = F_{T_a^k}.
\end{align*}

Then we prove $\lim_{k\to\infty} F_{T_a^{k}} = 0$. 
For every $ 0<\eps<\frac{1}{4NV}$, we can find $k$ sufficiently large such that $\delta_{T_a^{k}} \le \frac{\eps}{6}$, and $\gamma_{T_a^{k}} \le \frac{\eps}{6|M^1|V}$. For any $l \ge \lceil k/\eps \rceil$, we have $\frac{T_a^0}{T_a^{l}} \le \frac{T_a^{k}}{T_a^{l}} \le \frac{\eps}{6}$. Then 
\begin{align*}
    F_{T_a^{l}}&=\frac{T_a^0}{T_a^l} \frac{1}{4NV} + \sum_{s=0}^{l-1} \frac{T_a^{s+1} - T_a^s}{T_a^{l}}(\delta_{T_a^s}+|M^1|V\gamma_{T_a^s})\\
    &\le \frac{\eps}{3} + 2\sum_{s=0}^{k-1} \frac{T_a^{s+1} - T_a^s}{T_a^l} + \sum_{s=k}^{l-1} \frac{T_a^{s+1} - T_a^s}{T_a^{l}}(\delta_{T_a^k}+|M^1|V\gamma_{T_a^k})\\
    &\le \frac{\eps}{3} + 2\frac{T_a^k}{T_a^l} + \delta_{T_a^k}+|M^1|V\gamma_{T_a^k} \\
    &\le \frac{\eps}{3} + \frac{\eps}{3} + \frac{\eps}{3}  = \eps.
\end{align*}
Since $F_{T_a^k}$ is non-negative, we have $\lim_{k\to\infty} F_{T_a^k} = 0$. 
 \end{proof}

\begin{claim}\label{claim:new-delta-probability-sum}
$\sum_{s=0}^\infty \exp\left(-\frac{1}{2}|\Gamma_a^{s+1}|\delta_{T_a^s}^2\right) \le \frac{2}{e-2}\frac{1}{C^{3e/4}} \le \frac{2}{e-2}\big(\frac{48NV}{T_b}\big)^{3e/4}$.
\end{claim}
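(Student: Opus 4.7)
The plan is to exploit the polynomial growth of the partition $T_a^k = C(k+24NV)^2$ to lower bound $\tfrac{1}{2}|\Gamma_a^{s+1}|\delta_{T_a^s}^2$ by a power of $(s+24NV)$, then to trade exponential decay for polynomial decay via a tightly calibrated elementary inequality, and finally to evaluate the resulting convergent $p$-series tail. Nothing deep is involved; the entire proof is a short calculation once the right polynomial bound on $e^{-x}$ is in hand.

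First I would expand the two factors in the exponent. Since
$|\Gamma_a^{s+1}| = T_a^{s+1} - T_a^s = C\bigl(2(s+24NV)+1\bigr) \ge 2C(s+24NV)$
and $\delta_{T_a^s}^2 = (T_a^s)^{-1/4} = C^{-1/4}(s+24NV)^{-1/2}$, multiplication gives
$\tfrac{1}{2}|\Gamma_a^{s+1}|\delta_{T_a^s}^2 \ge C^{3/4}(s+24NV)^{1/2}$,
so each summand is at most $\exp\bigl(-C^{3/4}(s+24NV)^{1/2}\bigr)$.

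The key step is then to apply the inequality $e^{-x} \le x^{-e}$, valid for all $x>0$; this holds because $f(x)=x^e e^{-x}$ has derivative $x^{e-1}(e-x)e^{-x}$ and therefore attains its maximum $e^e\cdot e^{-e}=1$ at $x=e$. This particular calibration is precisely what produces both the exponent $3e/4$ and (after summation) the constant $2/(e-2)$ appearing in the statement. Applied with $x = C^{3/4}(s+24NV)^{1/2}$, it gives
$\exp\bigl(-\tfrac{1}{2}|\Gamma_a^{s+1}|\delta_{T_a^s}^2\bigr) \le C^{-3e/4}(s+24NV)^{-e/2}$.

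To close out, I would bound the tail $\sum_{s=0}^\infty (s+24NV)^{-e/2}$ (which converges since $e/2>1$) by the integral test, yielding $\tfrac{2(24NV-1)^{1-e/2}}{e-2} \le \tfrac{2}{e-2}$, where the last inequality uses $24NV - 1 \ge 1$ and $1-e/2<0$. This gives the first bound $\tfrac{2}{e-2}\,C^{-3e/4}$. For the second inequality it suffices to show $1/C \le 48NV/T_b$; this is immediate from $C = T_{v^1-3}/(24NV)^2$ together with $T_{v^1-3} \ge T_0 = 12NV\,T_b$ (established in Section~\ref{sec:first-stage}), so raising to the power $3e/4$ completes the proof. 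No step is expected to be an obstacle; the only nonroutine ingredient is recognizing that $e^{-x}\le x^{-e}$ is the right polynomial bound, since any stronger bound $e^{-x}\le C_a/x^a$ with integer $a$ would fail to yield the clean constant $2/(e-2)$.
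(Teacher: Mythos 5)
Your proposal is correct and follows essentially the same route as the paper: lower-bound the exponent by $C^{3/4}\sqrt{s+24NV}$, invoke $e^{x}\ge x^{e}$ (your $e^{-x}\le x^{-e}$), and finish with an integral comparison giving the factor $\tfrac{2}{e-2}C^{-3e/4}$, then use $C\ge T_b/(48NV)$. The only cosmetic difference is that you apply the polynomial bound termwise and then use the integral test (keeping the offset $24NV$, which is even slightly tighter), whereas the paper first passes to the integral and then applies the bound under the integral sign.
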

\begin{proof}
Recall that $|\Gamma_a^{s+1}| = T_a^{s+1} - T_a^s$, $\delta_{T_a^s}^2 = (\frac{1}{T_a^s})^{1/8}$, and $T_a^s = C(s+24NV)^2$.  Hence, 
\begin{align*}
    \sum_{s=0}^\infty \exp\left(-\frac{1}{2}|\Gamma_a^{s+1}|\delta_{T_a^s}^2\right) & = \sum_{s=0}^\infty \exp\left(-\frac{1}{2}\big(T_a^{s+1} - T_a^s\big)\big(\frac{1}{T_a^s}\big)^{1/4}\right) \\
    & = \sum_{s=0}^\infty \exp\left(-\frac{1}{2}C\big(2(s+24NV)+1\big)\big(\frac{1}{C(s+24NV)^2}\big)^{1/4}\right) \\
    & \le \sum_{s=0}^\infty \exp\left(-C^{3/4}\big(s+24NV\big)\big(\frac{1}{s+24NV}\big)^{1/2}\right) \\
    & = \sum_{s=0}^\infty \exp\left(-C^{3/4}\sqrt{s+24NV}\right) \\
    & \le \sum_{x=2}^\infty \exp\left(-C^{3/4}\sqrt{x}\right) \\
    & \le \int_{x=1}^\infty \exp\left(-C^{3/4}\sqrt{x}\right) \dd x \\
    \text{(using $e^x \ge x^e$ for $x\ge 0$) } & \le \int_{x=1}^\infty \frac{1}{(C^{3/4}\sqrt{x})^e} \dd x = \frac{1}{C^{3e/4}} \cdot \frac{2}{e-2}.
\end{align*}
Substituting $C = \frac{T_{v^1-3}}{(24NV)^2} = \frac{c^{(v^1-3)d}12NV T_b}{(24NV)^2} \ge \frac{12 NV T_b}{(24NV)^2} = \frac{T_b}{48NV}$ proves the claim. 
 \end{proof}

\begin{claim}\label{claim:T_ratio_bound}
$\frac{T_a^k}{T_a^{k+1}} \ge 1 - \frac{2}{k+24NV}$. 
\end{claim}
\begin{proof}
By definition,
$
\frac{T_a^k}{T_a^{k+1}} = \frac{(k+24NV)^2}{(k+24NV+1)^2} = 1 - \frac{2(k+24NV)+1}{(k+24NV+1)^2} \ge 1 - \frac{2}{k+24NV+1} \ge 1 - \frac{2}{k+24NV}. 
$
 \end{proof}

\begin{claim}\label{claim:A_a^k-to-t}
When $A_a^k$ holds, we have, for every $t\in \Gamma_a^{k+1} = [T_a^k+1, T_a^{k+1}]$, 
$
\frac{1}{t-1}\sum_{s=1}^{t-1} \mathbb{I}[\exists i  \in M^1, b_s^i\le v^1-3]\le F_{T_a^k} + \frac{2}{k + 24NV} \le \frac{1}{2NV} - 2\gamma_t.
$
\end{claim}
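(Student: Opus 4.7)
The plan is to split the prefix average at time $T_a^k$: for $t \in \Gamma_a^{k+1}$,
\[
\frac{1}{t-1}\sum_{s=1}^{t-1}\mathbb{I}[\exists i\in M^1,\, b_s^i \le v^1-3]
= \frac{1}{t-1}\sum_{s=1}^{T_a^k}\mathbb{I}[\cdots] + \frac{1}{t-1}\sum_{s=T_a^k+1}^{t-1}\mathbb{I}[\cdots].
\]
I would bound the first piece using the hypothesis $A_a^k$: its numerator is at most $T_a^k F_{T_a^k}$, and since $t-1 \ge T_a^k$, the whole ratio is at most $F_{T_a^k}$. I would bound the second piece termwise by $1$, giving $(t-1-T_a^k)/(t-1) = 1 - T_a^k/(t-1)$, which is increasing in $t-1$ and hence maximized at the right endpoint; combined with $t-1 < T_a^{k+1}$ this yields an upper bound of $1 - T_a^k/T_a^{k+1}$, which Fact~\ref{fact:T_ratio_bound} converts into $2/(k+24NV)$. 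Adding the two pieces gives the first inequality.

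The second inequality is routine arithmetic. Claim~\ref{claim:F_T_a_k} gives $F_{T_a^k} \le F_{T_a^0} = 1/(4NV)$; for $k \ge 0$ we have $2/(k+24NV) \le 1/(12NV)$; and the standing hypothesis $\gamma_{T_b} < 1/(12N^2V^2) \le 1/(12NV)$ yields $2\gamma_t \le 2\gamma_{T_b} \le 1/(6NV)$. Summing, $\tfrac{1}{4NV}+\tfrac{1}{12NV}+\tfrac{1}{6NV} = \tfrac{1}{2NV}$, which gives the required bound after rearrangement.

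There is no substantive obstacle in this claim; the argument is pure bookkeeping. What is conceptually essential, and what I expect makes the bound work, is the \emph{slow} (quadratic) growth of the partition $T_a^k = C(k+24NV)^2$: this is precisely what makes $1 - T_a^k/T_a^{k+1} = O(1/k)$, so that the gap $2/(k+24NV)$ added to $F_{T_a^k}$ vanishes as $k\to\infty$. A geometric partition, as used in the $|M^1|\ge 3$ argument of Section~\ref{sec:main-M1-3-proof}, would leave a constant-fraction slack between $T_a^k$ and $T_a^{k+1}$ and would be unusable for this claim.
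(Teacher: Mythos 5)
Your proof is correct and is essentially the paper's own argument: split the prefix sum at $T_a^k$, bound the in-interval indicators by $1$ so the slack is $1 - T_a^k/(t-1) \le 1 - T_a^k/T_a^{k+1}$, apply Fact~\ref{fact:T_ratio_bound}, and finish with the same arithmetic ($\frac{1}{4NV} + \frac{1}{12NV} \le \frac{1}{2NV} - 2\gamma_t$ since $\gamma_t \le \gamma_{T_b} < \frac{1}{12N^2V^2}$). One small caveat on your closing remark: a geometric partition with ratio $c = 1+\frac{1}{12NV}$ would \emph{not} break this particular claim (the additive slack would then be at most $\frac{1}{12NV}$, still giving $\frac{1}{3NV} \le \frac{1}{2NV}-2\gamma_t$, exactly as in the proof of Lemma~\ref{lemma:second-stage-M1=3}); the quadratic partition is needed only later, where the slack $\frac{2}{k+24NV}$ must vanish as $k\to\infty$.
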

\begin{proof}
When $A_a^k$ holds, for every $t\in \Gamma_a^{k+1}$,
\begin{align*}
    \frac{1}{t-1}\sum_{s=1}^{t-1} \mathbb{I}[\exists i \in M^1, b_s^i\le v^1-3]&\le \frac{1}{t-1} \left( T_a^k F_{T_a^k} + (t-1 - T_a^k) \right) \\
    \text{(since $T_a^k\le t-1 \le T_a^{k+1}$)}&\le F_{T_a^k} + \frac{T_a^{k+1} - T_a^k}{T_a^{k+1}} \\
    \text{(by Claim~\ref{claim:T_ratio_bound})} & \le F_{T_a^k} + \frac{2}{k+24NV}.
\end{align*}
Since $F_{T_a^k}\le \frac{1}{4NV}$ by Claim~\ref{claim:F_T_a_k} and $\gamma_t \le \frac{1}{12N^2V^2}$ by assumption, the above expression is further bounded by $\frac{1}{4NV} + \frac{2}{k+24NV}\le \frac{1}{4NV} + \frac{2}{24NV} = \frac{1}{3NV} \le \frac{1}{2NV} - 2\gamma_t$.
 \end{proof}

\begin{lemma}\label{lem:A_a-k-to-A_a-k+1}
For every $k\ge 0$, $\Pr[A_a^{k+1} \mid A_a^k] \ge 1 - \exp\left(-\frac{1}{2} |\Gamma_a^{k+1}| \delta_{T_a^k}^2 \right)$.
\end{lemma}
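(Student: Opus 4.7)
The plan is to reuse the template from Lemma~\ref{lemma:first-stage}: bound the per-round probability that some bidder in $M^1$ bids in $\{0,1,\ldots,v^1-3\}$ using the mean-based property, then upgrade this to a frequency bound on the block $\Gamma_a^{k+1}$ via Azuma's inequality, with the deviation parameter tuned to match the recursive definition of $F_{T_a^{k+1}}$.

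Assume $A_a^k$ holds and fix any $t\in\Gamma_a^{k+1} = [T_a^k+1,T_a^{k+1}]$ and any $i\in M^1$. Because $|M^1|\ge 2$, bidder $i$ has a partner $i^*\in M^1\setminus\{i\}$, so $\mathbb{I}[\max_{j\ne i} b_s^j\le v^1-4]\le \mathbb{I}[\exists j\in M^1, b_s^j\le v^1-3]$. Combined with Claim~\ref{claim:A_a^k-to-t} this gives
\[
P_{t-1}^i(0{:}v^1-4)\;\le\;\frac{1}{t-1}\sum_{s=1}^{t-1}\mathbb{I}[\exists j\in M^1, b_s^j\le v^1-3]\;\le\;\frac{1}{2NV}-2\gamma_t.
\]
Since $P_{t-1}^i(0{:}k-1)\le P_{t-1}^i(0{:}v^1-4)$ for every $k\in\{0,\ldots,v^1-3\}$, Claim~\ref{claim:mean-based} yields $\Pr[b_t^i=k\mid H_{t-1}]\le \gamma_t$ for each such $k$. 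A union bound over $i\in M^1$ and over the at most $V$ values of $k$ gives
\[
\Pr[\exists i\in M^1,\ b_t^i\le v^1-3\mid H_{t-1}]\;\le\;|M^1|V\gamma_t\;\le\;|M^1|V\gamma_{T_a^k}.
\]

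Next I would set $Z_t = \mathbb{I}[\exists i\in M^1,\ b_t^i\le v^1-3]-|M^1|V\gamma_{T_a^k}$ and check that the partial sums $X_t=\sum_{s=T_a^k+1}^t Z_s$ form a supermartingale with respect to $(H_t)$. Azuma's inequality with deviation $\Delta = |\Gamma_a^{k+1}|\,\delta_{T_a^k}$ gives, with probability at least $1-\exp\!\bigl(-\tfrac{1}{2}|\Gamma_a^{k+1}|\delta_{T_a^k}^2\bigr)$,
\[
\sum_{t\in\Gamma_a^{k+1}}\mathbb{I}[\exists i\in M^1,\ b_t^i\le v^1-3]\;<\;|\Gamma_a^{k+1}|\bigl(\delta_{T_a^k}+|M^1|V\gamma_{T_a^k}\bigr).
\]
Combining with the hypothesis $A_a^k$ and the definition of $F_{T_a^{k+1}}$,
\[
\frac{1}{T_a^{k+1}}\sum_{t=1}^{T_a^{k+1}}\mathbb{I}[\exists i\in M^1,\ b_t^i\le v^1-3]\;\le\;\frac{T_a^k}{T_a^{k+1}}F_{T_a^k}+\frac{|\Gamma_a^{k+1}|}{T_a^{k+1}}\bigl(\delta_{T_a^k}+|M^1|V\gamma_{T_a^k}\bigr)\;=\;F_{T_a^{k+1}},
\]
which is exactly $A_a^{k+1}$.

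The only genuinely delicate step is the choice of $\Delta$: it must simultaneously match the $\delta_{T_a^k}$-term in the telescoping definition of $F_{T_a^{k+1}}$ (so that $A_a^{k+1}$ closes cleanly) and leave a failure probability $\exp(-\tfrac12|\Gamma_a^{k+1}|\delta_{T_a^k}^2)$ that is summable over $k$. The calibration $\delta_t=t^{-1/8}$ together with the quadratic schedule $T_a^k=C(k+24NV)^2$ is tailored precisely for this; Claim~\ref{claim:new-delta-probability-sum} will then turn this per-step bound into the overall failure probability used later when combining all $A_a^k$. Apart from bookkeeping the direction of inequalities (using $\gamma_t\le\gamma_{T_a^k}$ on $\Gamma_a^{k+1}$) and checking the supermartingale property, no new ideas beyond Lemma~\ref{lemma:first-stage} are required.
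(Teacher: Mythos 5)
Your proposal is correct and takes essentially the same route as the paper's proof: Claim~\ref{claim:A_a^k-to-t} combined with Claim~\ref{claim:mean-based} (via the $|M^1|\ge 2$ observation) gives the per-round probability bound, a supermartingale plus Azuma's inequality with $\Delta=|\Gamma_a^{k+1}|\delta_{T_a^k}$ controls the block frequency, and the recursion $F_{T_a^{k+1}}=\frac{T_a^k}{T_a^{k+1}}F_{T_a^k}+\frac{|\Gamma_a^{k+1}|}{T_a^{k+1}}\bigl(\delta_{T_a^k}+|M^1|V\gamma_{T_a^k}\bigr)$ closes the step exactly as in the paper. The only cosmetic difference is that you subtract the constant $|M^1|V\gamma_{T_a^k}$ per round instead of $|M^1|V\gamma_t$ and bound $\gamma_t\le\gamma_{T_a^k}$ up front rather than at the end, which is equivalent bookkeeping.
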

\begin{proof}
Given $A_a^k$, according to Claim~\ref{claim:A_a^k-to-t}, it holds that for every $t\in \Gamma_a^{k+1}$,  $\frac{1}{t-1}\sum_{s=1}^{t-1} \mathbb{I}[\exists i  \in M^1, b_s^i\le v^1-3] \le \frac{1}{2NV} - 2\gamma_t$.  Then according to Lemma~\ref{lemma:mean-based}, 
\[ \Pr[\exists i\in M^1, b_t^i\le v^1-3 \mid H_{t-1}, A_a^k] \, \le\, |M^1|V \gamma_t. \]
Let $Z_t = \mathbb{I}[\exists i\in M^1, b_t^i \le v^1-3] - |M^1|V\gamma_t$ and let $X_t=\sum_{s=T_a^k+1}^t Z_s$.  We have $\Ex{Z_t \mid H_{t-1}, A_a^k} \le 0$.  Therefore, the sequence $X_{T_a^k+1}, X_{T_a^k+2}, \ldots, X_{T_a^{k+1}}$ is a supermartingale (with respect to the sequence of history $H_{T_a^k}, H_{T_a^k+1}, \ldots, H_{T_a^{k+1}-1}$).  By Azuma's inequality, for any $\Delta > 0$, we have 
\begin{equation*}
     \Pr\Bigg[ \sum_{t \in \Gamma_a^{k+1}} Z_t \ge \Delta \bigggiven A_a^k \Bigg] \,\le\, \exp\left(-\frac{\Delta^2}{2|\Gamma_a^{k+1}|}\right). 
\end{equation*}
Let $\Delta = |\Gamma_a^{k+1}|\delta_{T_a^k}$. Then with probability at least $1-\exp\big(-\frac{1}{2}|\Gamma_a^{k+1}|\delta_{T_a^k}^2\big)$, 
we get $\sum_{t \in \Gamma_a^{k+1}} \mathbb{I}[\exists i\in M^1, b_t^{i} \le v^1-3] \;<\; \Delta + |M^1|V \sum_{t \in \Gamma_a^{k+1}} \gamma_t \; \le\; |\Gamma_a^{k+1}|\delta_{T_a^k} + |M^1|V|\Gamma_a^{k+1}| \gamma_{T_a^k}$, 
which implies
\begin{align*}
    & \frac{1}{T_a^{k+1}} \sum_{t=1}^{T_a^{k+1}} \mathbb{I}[\exists i\in M^1, b_t^{i} \le v^1-3]
    \\
    & = \frac{1}{T_a^{k+1}} \bigg(\sum_{t=1}^{T_a^k} \mathbb{I}[\exists i\in M^1, b_t^{i} \le v^1-3] + \sum_{t \in \Gamma_a^{k+1}} \mathbb{I}[\exists i\in M^1, b_t^{i} \le v^1-3]\bigg)\\
    & \le \frac{1}{T_a^{k+1}} \left(T_a^k F_{T_a^k} + |\Gamma_a^{k+1}|\delta_{T_a^k} + |M^1|V|\Gamma_a^{k+1}| \gamma_{T_a^k}\right)\\
    {\text{(by definition) }} &=F_{T_a^{k+1}}
\end{align*}
and thus $A_a^{k+1}$ holds.
 \end{proof}

Denote by $f_t^i(b)$ the frequency of bid $b$ in the first $t$ rounds for bidder $i$: $f_t^i(b)=\frac{1}{t}\sum_{s=1}^t \mathbb{I}[b_s^i=b].$
Let  $f_t^i(0:v^1-3)=\frac{1}{t}\sum_{s=1}^t \mathbb{I}[b_s^i\le v^1-3]$.
\begin{claim}\label{claim:prob-v-2}
If the history $H_{t-1}$ satisfies $f_{t-1}^i(v^1 -1)> 2(X + V \gamma_t)$ and $\frac{1}{t-1}\sum_{s=1}^{t-1} \mathbb{I}[\exists i  \in M^1, b_s^i\le v^1-3]\le X$ for some $X\in [0, 1]$, then we have $\Pr[b_t^{i'}=v^1-2\mid H_{t-1}]\le \gamma_t$ for the other $i'\neq i \in M^1$.
\end{claim}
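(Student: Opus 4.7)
\textbf{Proof plan for Claim~\ref{claim:prob-v-2}.} The claim is set in the $|M^1|=2$ regime, so $M^1 = \{i, i'\}$. My strategy is to show $\alpha_{t-1}^{i'}(v^1-1) - \alpha_{t-1}^{i'}(v^1-2) > V\gamma_t$ under the stated hypotheses and then invoke the mean-based property.

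First I will write both quantities using \eqref{eq:alpha-P-Q}:
\begin{align*}
\alpha_{t-1}^{i'}(v^1-1) &= P_{t-1}^{i'}(0:v^1-2) + Q_{t-1}^{i'}(v^1-1), \\
\alpha_{t-1}^{i'}(v^1-2) &= 2\bigl(P_{t-1}^{i'}(0:v^1-3) + Q_{t-1}^{i'}(v^1-2)\bigr).
\end{align*}
Using $P_{t-1}^{i'}(0:v^1-2) = P_{t-1}^{i'}(0:v^1-3) + P_{t-1}^{i'}(v^1-2)$ and the bound $Q_{t-1}^{i'}(v^1-2) \le \tfrac{1}{2} P_{t-1}^{i'}(v^1-2)$ from \eqref{eq:P-Q}, the difference simplifies to
\begin{equation*}
\alpha_{t-1}^{i'}(v^1-1) - \alpha_{t-1}^{i'}(v^1-2) \;\ge\; -P_{t-1}^{i'}(0:v^1-3) + Q_{t-1}^{i'}(v^1-1).
\end{equation*}

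Next I will bound the two terms on the right. For the first, note that $[\max_{j\ne i'} b_s^j \le v^1-3]$ forces $b_s^i \le v^1-3$ (since $i \ne i'$), so $P_{t-1}^{i'}(0:v^1-3) \le \tfrac{1}{t-1}\sum_{s=1}^{t-1} \mathbb{I}[b_s^i \le v^1-3] \le \tfrac{1}{t-1}\sum_{s=1}^{t-1}\mathbb{I}[\exists j\in M^1, b_s^j \le v^1-3] \le X$. For the second, since bidders outside $M^1$ have value strictly less than $v^1$ and hence bid at most $v^1-2$, the only bidder other than $i'$ who can reach bid $v^1-1$ is $i$ itself; thus whenever $b_s^i = v^1-1$ we have $\max_{j\ne i'} b_s^j = v^1-1$ with $|\argmax_{j\ne i'} b_s^j| = 1$, contributing $\tfrac{1}{2}$ to $Q_{t-1}^{i'}(v^1-1)$. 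Therefore $Q_{t-1}^{i'}(v^1-1) \ge \tfrac{1}{2} f_{t-1}^i(v^1-1) > X + V\gamma_t$ by hypothesis.

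Combining, $\alpha_{t-1}^{i'}(v^1-1) - \alpha_{t-1}^{i'}(v^1-2) > -X + (X + V\gamma_t) = V\gamma_t$, so the mean-based property immediately yields $\Pr[b_t^{i'} = v^1-2 \mid H_{t-1}] \le \gamma_t$. The only conceptually delicate step is the lower bound on $Q_{t-1}^{i'}(v^1-1)$, which relies on the structural fact that under $|M^1|=2$ the tie-breaking at the bid $v^1-1$ involves exactly the two bidders in $M^1$; all remaining steps are straightforward algebraic manipulation of the identities in Section~\ref{sec:property}.
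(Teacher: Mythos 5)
Your proof is correct and follows essentially the same argument as the paper's: compare $\alpha_{t-1}^{i'}(v^1-1)$ with $\alpha_{t-1}^{i'}(v^1-2)$, use that only bidder $i$ can tie $i'$ at $v^1-1$ and that the low-bid frequency is at most $X$ to get a gap of at least $\tfrac{1}{2}f_{t-1}^i(v^1-1)-X>V\gamma_t$, and then invoke the mean-based property. The only difference is bookkeeping: you route through the identity \eqref{eq:alpha-P-Q} and the bound $Q_{t-1}^{i'}(v^1-2)\le\tfrac12 P_{t-1}^{i'}(v^1-2)$, whereas the paper writes the two average utilities directly in terms of bidder $i$'s bid frequencies; both reduce to the same final inequality.
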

\begin{proof}
Consider $\alpha_{t-1}^{i'}(v^1-1)$ and $\alpha_{t-1}^{i'}(v^1-2)$. On the one hand, 
\begin{align}\label{eq:v^1-1}
    \alpha_{t-1}^{i'}(v^1-1) = 1\times (1-f_{t-1}^i(v^1-1))+\frac{1}{2}\times f_{t-1}^i(v^1-1)= 1-\frac{1}{2}f_{t-1}^i(v^1-1). 
\end{align}
On the other hand, since having more bidders with bids no larger than $v^1-2$ only decreases the utility of a bidder who bids $v^1-2$, we can upper bound $\alpha_{t-1}^{i'}(v^1-2)$ by
\begin{align}\label{eq:v^1-2}
    \alpha_{t-1}^{i'}(v^1-2) &\le  2 \times f_{t-1}^i (0:v^1 -3) + 1\times(1-f_{t-1}^i(v^1-1)-f_{t-1}^i(0:v^1-3))\nonumber\\
    &= 1 - f_{t-1}^i(v^1-1) + f_{t-1}^i(0:v^1-3)\nonumber \\
    & \le 1 - f_{t-1}^i(v^1-1) + X,
\end{align}
where the last inequality holds because $f_{t-1}^i (0:v^1 -3) \le \frac{1}{t-1}\sum_{s=1}^{t-1} \mathbb{I}[\exists i  \in M^1, b_s^i\le v^1-3]\le X$. Combining (\ref{eq:v^1-1}) and (\ref{eq:v^1-2}), we get
\begin{align*}
    \alpha_{t-1}^{i'}(v^1-1)-\alpha_{t-1}^{i'}(v^1-2) \ge (1-\frac{1}{2}f_{t-1}^i) - (1-f_{t-1}^i+X)= \frac{1}{2}f_{t-1}^i(v^1-1) - X > V\gamma_t.
\end{align*}
This implies $\Pr[b_t^{i'}=v^1-2 \mid H_{t-1}]\le \gamma_t$ according to the mean-based property.
 \end{proof}

\subsubsection{Proof of the General Case}
We consider $k=0, 1, \ldots$ to $\infty$. 
For each $k$, we suppose $A_a^0, A_a^1, \ldots, A_a^k$ hold, which happens with probability at least $1 - \sum_{s=0}^{k-1} \exp\big(-\frac{1}{2}|\Gamma_a^{s+1}|\delta_{T_a^s}^2\big)$ according to Lemma~\ref{lem:A_a-k-to-A_a-k+1}, given that $A_a^0=A_{v^1-3}$ already held.
The proof is divided into two cases based on $f_{T_a^k}^i (v^1-1)$.

\paragraph{Case 1:}  \textit{For all $k\ge 0$, $f_{T_a^k}^i(v^1-1) \le 16(F_{T_a^k}+\frac{2}{k+24NV} + V\gamma_{T_a^k})$ for both $i \in M^1$.} 

We argue that the two bidders in $M^1$ converge to playing $v^1-2$ in this case. 

According to Lemma~\ref{lem:A_a-k-to-A_a-k+1}, all events $A_a^0, A_a^1, \ldots, A_a^k, \ldots $ happen with probability at least $1 - \sum_{k=0}^\infty \exp\left(-\frac{1}{2} |\Gamma_a^{k+1}|\delta_{T_a^k}^2\right)$. Claim~\ref{claim:A_a^k-to-t} and Claim~\ref{claim:F_T_a_k} then imply that, for both $i\in M^1$,  
\[ \lim_{t\to\infty} f_t^i(0:v^1-3) \le \lim_{k\to\infty} \left( F_{T_a^k} + \frac{2}{k+24NV} \right) = 0. \]
Because for every $t \in \Gamma_a^{k+1} = [T_a^{k}+1, T_a^{k+1}]$ we have $
    f_{t}^i (v^1-1) \le \frac{T_a^{k+1}}{t} f_{T_a^k}^i(v^1-1) \le \frac{T_a^{k+1}}{T_a^k} f_{T_a^k}^i(v^1-1) \le 2f_{T_a^k}^i(v^1-1)$
and by condition $f_{T_a^k}^i(v^1-1)\to 0$ as $k\to \infty$, we have $ \lim_{t\to\infty} f_t^i(v^1-1) = 0.$
Therefore, 
$ \lim_{t\to\infty} f_t^i(v^1-2) = \lim_{t\to\infty} 1 - f_t^i(0:v^1-3) - f_t^i(v^1-1) = 1, $
which implies
\[\lim_{t \to \infty} \frac{1}{t}\sum_{s=1}^{t} \mathbb{I}[ \forall i\in M^1, b_s^i=v^1-2] = 1.\]

\paragraph{Case 2:} \textit{There exists $k \ge 0$ such that $f_{T_a^{k}}^i (v^1-1) > 16(F_{T_a^{k}}+\frac{2}{k+24NV}+V\gamma_{T_a^{k}})$ for some $i\in M^1$.}

If this case happens, we argue that the two bidders in $M^1$ converge to playing $v^1-1$. 

We first prove that, after $\ell = k+24NV$ periods (i.e., at time $T_a^{k+\ell}$), the frequency of $v^1-1$ for \emph{both} bidders in $M^1$ is greater than  $4(F_{T_a^{k+\ell}}+\frac{2}{(k+\ell)+24NV}+V\gamma_{T_a^{k+\ell}})$, with high probability.
\begin{lemma}
Suppose that, at time $T_a^k$, $A_a^k$ holds and for some $i\in M^1$, $f_{T_a^k}^i (v^1- 1)> 16(F_{T_a^k}+\frac{2}{k+24NV}+V\gamma_{T_a^k})$ holds.  Then, with probability at least $1 - 2\sum_{j=k}^{k+\ell-1} \exp\left(-\frac{1}{2}|\Gamma_a^{j+1}|\delta_{T_a^j}^2\right)$, the following events happen at time $T_a^{k+\ell}$, where $\ell = k+24NV$: 
\begin{itemize}
    \item $A_a^{k+\ell}$; 
    \item For both $i\in M^1$, $f_{T_a^{k+\ell}}^i (v^1-1)> 4(F_{T_a^{k+\ell}}+\frac{2}{(k+\ell)+24NV}+V\gamma_{T_a^{k+\ell}})$. 
\end{itemize}
\end{lemma}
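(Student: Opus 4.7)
My plan mirrors the inductive template of Lemma~\ref{lemma:first-stage}, but tracks two quantities in parallel: the $A_a^j$ events governing the frequency of low bids, and, on top of that, a growing lower bound on bidder $i'$'s frequency of $v^1-1$. The driving mechanism is self-reinforcing: once bidder $i$'s historical frequency of $v^1-1$ is high, Claim~\ref{claim:prob-v-2} makes $v^1-2$ dominated for bidder $i'$, while $A_a^j$ combined with Claim~\ref{claim:mean-based} rules out $i'$ playing anything $\le v^1-3$, so $i'$ is forced to play $v^1-1$ in nearly every subsequent round.

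The deterministic preservation of bidder $i$'s frequency is the easy half. Since $f^i_t(v^1-1) \ge \tfrac{T_a^k}{t} f^i_{T_a^k}(v^1-1)$ for all $t \ge T_a^k$, and the choice $\ell = k+24NV$ gives $T_a^k/T_a^{k+\ell} = \tfrac{(k+24NV)^2}{(2(k+24NV))^2} = \tfrac14$, we obtain $f^i_{t-1}(v^1-1) \ge 4(F_{T_a^k}+\tfrac{2}{k+24NV}+V\gamma_{T_a^k})$ for every $t-1 \in [T_a^k, T_a^{k+\ell}]$. By the monotonicity of $F$ (Claim~\ref{claim:F_T_a_k}) and of $\gamma$, this strictly exceeds $2(F_{T_a^j}+\tfrac{2}{j+24NV}+V\gamma_t)$ for every $j \in \{k,\ldots,k+\ell-1\}$, so the hypothesis of Claim~\ref{claim:prob-v-2} with $X = F_{T_a^j}+\tfrac{2}{j+24NV}$ holds whenever $A_a^j$ holds.

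Given $A_a^j$, for any $t \in \Gamma_a^{j+1}$, Claim~\ref{claim:A_a^k-to-t} verifies the hypothesis of Claim~\ref{claim:mean-based}, yielding $\Pr[b^{i'}_t \le v^1-3 \mid H_{t-1}] \le (v^1-2)\gamma_t$ by a union bound over bids, while Claim~\ref{claim:prob-v-2} gives $\Pr[b^{i'}_t = v^1-2 \mid H_{t-1}] \le \gamma_t$. Together, $\Pr[b^{i'}_t \ne v^1-1 \mid H_{t-1}] \le V\gamma_t$. The sequence $Z_s = \mathbb{I}[b^{i'}_s \ne v^1-1] - V\gamma_s$ is therefore a supermartingale on $\Gamma_a^{j+1}$ conditional on $A_a^j$, and Azuma's inequality with deviation $|\Gamma_a^{j+1}|\delta_{T_a^j}$ produces a failure probability of $\exp(-\tfrac12|\Gamma_a^{j+1}|\delta_{T_a^j}^2)$ per period. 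Union-bounding the $A_a^{j+1}$-extension from Lemma~\ref{lem:A_a-k-to-A_a-k+1} together with this $i'$-Azuma bound over $j=k,\ldots,k+\ell-1$ gives exactly the overall probability $1 - 2\sum_{j=k}^{k+\ell-1}\exp(-\tfrac12|\Gamma_a^{j+1}|\delta_{T_a^j}^2)$ in the statement.

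On this good event, bidder $i'$ fails to bid $v^1-1$ in at most $|\Gamma|(V\gamma_{T_a^k}+\delta_{T_a^k})$ rounds of $(T_a^k, T_a^{k+\ell}]$, where $|\Gamma| = \tfrac34 T_a^{k+\ell}$, so $f^{i'}_{T_a^{k+\ell}}(v^1-1) \ge \tfrac34(1 - V\gamma_{T_a^k} - \delta_{T_a^k})$. The main obstacle I expect is the final numerical comparison between this lower bound and $4(F_{T_a^{k+\ell}}+\tfrac{2}{k+\ell+24NV}+V\gamma_{T_a^{k+\ell}})$. Using $F_{T_a^{k+\ell}} \le \tfrac{1}{4NV}$ from Claim~\ref{claim:F_T_a_k} and $k+\ell+24NV \ge 48NV$, the right side is at most $\tfrac{1}{NV}+\tfrac{1}{6NV}+o(1) = \tfrac{7}{6NV}+o(1)$, which lies strictly below $\tfrac34$ for $N\ge 2$, $V\ge 3$ once $T_b$ is large enough. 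The matching bound for $f^i_{T_a^{k+\ell}}(v^1-1)$ then follows immediately from the deterministic step above.
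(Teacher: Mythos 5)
Your proposal is correct and follows essentially the same route as the paper's proof: a per-period induction in which $A_a^j$ plus Claim~\ref{claim:A_a^k-to-t}, Claim~\ref{claim:mean-based}, and Claim~\ref{claim:prob-v-2} force $\Pr[b_t^{i'}=v^1-1\mid H_{t-1}]\ge 1-V\gamma_t$, Azuma's inequality applied on each $\Gamma_a^{j+1}$ together with Lemma~\ref{lem:A_a-k-to-A_a-k+1} giving the factor-$2$ union bound, the choice $\ell=k+24NV$ yielding the ratio $T_a^k/T_a^{k+\ell}=1/4$ (hence the drop from $16$ to $4$ for bidder $i$), and the final bound $f^{i'}_{T_a^{k+\ell}}(v^1-1)\ge \tfrac34(1-V\gamma_{T_a^k}-\delta_{T_a^k})$ compared numerically against $4(F_{T_a^{k+\ell}}+\tfrac{2}{(k+\ell)+24NV}+V\gamma_{T_a^{k+\ell}})$ for large $T_b$. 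No substantive gap.
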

\begin{proof}
We prove by an induction from $j =  k$ to $k+\ell - 1$.
Given $A_a^j$, $A_a^{j+1}$ happens with probability at least $1 - \exp\left(-\frac{1}{2}|\Gamma_a^{j+1}|\delta_{T_a^j}^2\right)$ according to Lemma~\ref{lem:A_a-k-to-A_a-k+1}.  Hence, with probability at least $1 - \sum_{j=k}^{k+\ell-1} \exp\left(-\frac{1}{2}|\Gamma_a^{j+1}|\delta_{T_a^j}^2\right)$, all events $A_a^k, A_a^{k+1}, \ldots, A_a^{k+\ell}$ happen. 

Now we consider the second event. 
For all $t\in \Gamma_a^{j+1}$, noticing that $\frac{T_a^k}{t-1} \ge \frac{T_a^k}{T_a^{j+1}} \ge \frac{T_a^k}{T_a^{k+\ell}} = \frac{(k+24NV)^2}{(2(k+24NV))^2} = \frac{1}{4}$, we have 
\begin{align}
    f_{t-1}^i (v^1-1) \ge \frac{T_a^k}{t-1}f_{T_a^k}^i(v^1-1) & \ge \frac{1}{4} f_{T_a^k}^i(v^1-1) \nonumber \\
    \text{(by condition) }& > 4(F_{T_a^k} + \frac{2}{k+24NV} + V\gamma_{T_a^k})  \label{eq:f-v-1-i-eq-1} \\ 
     \text{($F_{T_a^k}$ and $\gamma_{T_a^k}$ are decreasing in $k$) }& \ge 4(F_{T_a^j} + \frac{2}{j+24NV} + V\gamma_{T_a^j}). \nonumber 
\end{align}
According to Claim~\ref{claim:A_a^k-to-t}, given $A_a^j$ we have $\frac{1}{t-1}\sum_{s=1}^{t-1} \mathbb{I}[\exists i  \in M^1, b_s^i\le v^1-3]\le F_{T_a^j} + \frac{2}{j + 24NV} \le \frac{1}{2NV} - 2\gamma_t$.  Using Claim~\ref{claim:prob-v-2} with $X = F_{T_a^j} + \frac{2}{j + 24NV}$, we have, for bidder $i'\ne i, i'\in M^1$, $\Pr[b_t^{i'}=v^1-2\mid H_{t-1}]\le \gamma_t$.
By Lemma~\ref{lemma:mean-based}, $\Pr[b_t^{i'}\le v^1-3\mid H_{t-1}]\le (V-1)\gamma_t$.   Combining the two, we get
$\Pr[b_t^{i'}=v^1-1 \mid H_{t-1}]\ge 1-V\gamma_t.$
Let $\Delta=|\Gamma_a^{k+1}|\delta_{T_a^k}$.
Similar to the proof of Lemma~\ref{lem:A_a-k-to-A_a-k+1}, we can use Azuma's inequality to argue that, with probability at least $1-\exp(-\frac{1}{2}|\Gamma_a^{k+1}|\delta_{T_a^k}^2)$, it holds that 
\[\sum_{t\in \Gamma_a^{j+1}}\mathbb{I}[b_t^{i'}=v^1-1] 
\ge \sum_{t\in \Gamma_a^{j+1}} (1-V\gamma_t-\delta_{T_a^j})
\ge |\Gamma_a^{j+1}|(1-V\gamma_{T_a^j}-\delta_{T_a^j}).\]
An induction shows that, with probability at least $1 - \sum_{j=k}^{k+\ell-1} \exp\left(-\frac{1}{2}|\Gamma_a^{j+1}|\delta_{T_a^j}^2\right)$, $\sum_{t\in \Gamma_a^{j+1}}\mathbb{I}[b_t^{i'}=v^1-1]
\ge |\Gamma_a^{j+1}|(1-V\gamma_{T_a^j}-\delta_{T_a^j})$ holds for all $j\in\{k, \ldots, k+\ell-1\}$.  
Therefore, 
\begin{align*}
    f_{T_a^{k+\ell}}^{i'}(v^1-1) & \ge \frac{1}{T_a^{k+\ell}}\left(0 + \sum_{t\in \Gamma_a^{k+1} \cup \cdots \cup \Gamma_a^{k+\ell}} \mathbb{I}[b_t^{i'}=v^1-1]  \right) \\
    & \ge \frac{1}{T_a^{k+\ell}}\left( |\Gamma_a^{k+1}|(1-V\gamma_{T_a^k}-\delta_{T_a^k}) + \cdots + |\Gamma_a^{k+\ell}|(1-V\gamma_{T_a^{k+\ell-1}}-\delta_{T_a^{k+\ell-1}})\right) \\
    &\ge \frac{1}{T_a^{k+\ell}}\left( (|\Gamma_a^{k+1}| + \cdots + |\Gamma_a^{k+\ell}|) \cdot (1-V\gamma_{T_a^k}-\delta_{T_a^k}) \right)\\
    &= \frac{T_a^{k+\ell} - T_a^k}{T_a^{k+\ell}}(1-V\gamma_{T_a^k}-\delta_{T_a^k})\\
    & = \frac{4(k+24NV)^2 - (k+24NV)^2}{4(k+24NV)^2}(1-V\gamma_{T_a^k}-\delta_{T_a^k}) \\
    & = \frac{3}{4} (1-V\gamma_{T_a^k}-\delta_{T_a^k})\\
    & 
   \stackrel{ \text{(assuming $T_b$ is large enough) }}{>} 4\left(F_{T_a^{k+\ell}}+\frac{2}{(k+\ell)+24NV}+V\gamma_{T_a^{k+\ell}}\right). 
\end{align*}
This proves the claim for $i'\in M^1$. 
The claim for $i\in M^1$ follows from \eqref{eq:f-v-1-i-eq-1} and the fact that $F_{T_a^k}$ and $\gamma_{T_a^{k}}$ are decreasing in $k$. 
 \end{proof} 

We denote by $k_0 = k + \ell$ the time period at which $f_{T_a^{k_0}}^i(v^1-1) > 4(F_{T_a^{k_0}} + \frac{2}{k_0+24NV} + V\gamma_{T_a^{k_0}})$ for both $i\in M^1$.  
We continuing the analysis for each period $k\ge k_0$. 
Define sequence $(G_{T_a^k})$:
\[G_{T_a^k}=\frac{T_a^{k_0}}{T_a^k}\cdot 4\left(F_{T_a^{k_0}} + \frac{2}{k_0+24NV} + V\gamma_{T_a^{k_0}}\right) + \sum_{s=k_0}^{k-1}\frac{T_a^{s+1} - T_a^s}{T_a^k}(1-V\gamma_{T_a^s}-\delta_{T_a^s}), ~ \text{ for } k \ge k_0,\]
where we recall that $\delta_t=(\frac{1}{t})^{1/8}$. 
We note that $f_{T_a^{k_0}}^i(v^1-1)> G_{T_a^{k_0}} = 4\big(F_{T_a^{k_0}} + \frac{2}{k_0+24NV} + V\gamma_{T_a^{k_0}}\big)$. 

\begin{claim}\label{claim:G_k-bound}
When $T_b$ is sufficiently large,
\begin{itemize}
    \item $G_{T_a^k} \ge 4\big(F_{T_a^{k_0}} + \frac{2}{k_0+24NV} + V\gamma_{T_a^{k_0}}\big)$ for every $k\ge k_0$.
    \item $\lim_{k\to\infty}G_{T_a^k}=1$.
\end{itemize} 
\end{claim}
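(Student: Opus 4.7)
}
The plan is to mimic the proof of Claim~\ref{claim:F_T_a_k}, exploiting the recursive structure
\[
G_{T_a^{k+1}} \;=\; \frac{T_a^k}{T_a^{k+1}}\, G_{T_a^k} \;+\; \frac{T_a^{k+1}-T_a^k}{T_a^{k+1}}\bigl(1-V\gamma_{T_a^k}-\delta_{T_a^k}\bigr),
\]
which follows by separating the last summand of $G_{T_a^{k+1}}$. Thus each $G_{T_a^{k+1}}$ is a convex combination of the previous value $G_{T_a^k}$ and the one-step quantity $1-V\gamma_{T_a^k}-\delta_{T_a^k}$. Both bullets will follow from this recursion once one controls how large/small the two ingredients are.

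For the first bullet I would proceed by induction on $k\ge k_0$. Choose $T_b$ large enough that $V\gamma_{T_a^{k_0}}+\delta_{T_a^{k_0}} \le \tfrac12$ and $4(F_{T_a^{k_0}}+\tfrac{2}{k_0+24NV}+V\gamma_{T_a^{k_0}}) \le \tfrac12$; this is possible because $F_{T_a^{k_0}}\to 0$ by Claim~\ref{claim:F_T_a_k}, $\gamma_t,\delta_t\to 0$, and $k_0=k+24NV\to\infty$ as $T_b\to\infty$. Then $G_{T_a^{k_0}} \le 1-V\gamma_{T_a^{k_0}}-\delta_{T_a^{k_0}}$. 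Assuming $G_{T_a^k} \le 1-V\gamma_{T_a^k}-\delta_{T_a^k}$, the recursion together with monotonicity of $\gamma_t$ and $\delta_t$ gives $G_{T_a^{k+1}} \le 1-V\gamma_{T_a^k}-\delta_{T_a^k} \le 1-V\gamma_{T_a^{k+1}}-\delta_{T_a^{k+1}}$, and the same recursion implies $G_{T_a^{k+1}} \ge G_{T_a^k}$. Hence $(G_{T_a^k})_{k\ge k_0}$ is nondecreasing, so $G_{T_a^k} \ge G_{T_a^{k_0}} = 4(F_{T_a^{k_0}}+\tfrac{2}{k_0+24NV}+V\gamma_{T_a^{k_0}})$.

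For the second bullet I would split $G_{T_a^k}$ into three pieces using $\sum_{s=k_0}^{k-1}(T_a^{s+1}-T_a^s)=T_a^k-T_a^{k_0}$:
\[
G_{T_a^k} \;=\; \underbrace{\tfrac{T_a^{k_0}}{T_a^k}\cdot 4\bigl(F_{T_a^{k_0}}+\tfrac{2}{k_0+24NV}+V\gamma_{T_a^{k_0}}\bigr)}_{(\mathrm{I})} \;+\; \underbrace{\tfrac{T_a^k-T_a^{k_0}}{T_a^k}}_{(\mathrm{II})} \;-\; \underbrace{\sum_{s=k_0}^{k-1}\tfrac{T_a^{s+1}-T_a^s}{T_a^k}(V\gamma_{T_a^s}+\delta_{T_a^s})}_{(\mathrm{III})}.
\]
Since $T_a^k=C(k+24NV)^2\to\infty$, term (I) $\to 0$ and term (II) $\to 1$. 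For (III), fix $\eps>0$, choose $K$ with $V\gamma_{T_a^s}+\delta_{T_a^s}<\eps$ for $s\ge K$, then bound (III) by $\tfrac{T_a^K-T_a^{k_0}}{T_a^k}+\eps\cdot\tfrac{T_a^k-T_a^K}{T_a^k}$, which is below $2\eps$ once $k$ is large. Combining these, $\lim_{k\to\infty}G_{T_a^k}=1$.

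The calculations are routine; the only mildly delicate point is making sure the base-case inequality for the monotonicity induction holds, which is where the ``$T_b$ sufficiently large'' hypothesis is used in an essential way.
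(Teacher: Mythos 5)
Your proposal is correct, and it establishes both bullets by a mildly different elementary route than the paper. For the first bullet the paper argues in one line: $G_{T_a^k}$ is a weighted average (the weights $\frac{T_a^{k_0}}{T_a^k}$ and $\frac{T_a^{s+1}-T_a^s}{T_a^k}$ sum to one) of the constant $4\big(F_{T_a^{k_0}}+\frac{2}{k_0+24NV}+V\gamma_{T_a^{k_0}}\big)$ and the terms $1-V\gamma_{T_a^s}-\delta_{T_a^s}$, and for $T_b$ large each of the latter dominates that constant, so the average does too. You instead use the recursion $G_{T_a^{k+1}}=\frac{T_a^k}{T_a^{k+1}}G_{T_a^k}+\frac{T_a^{k+1}-T_a^k}{T_a^{k+1}}\big(1-V\gamma_{T_a^k}-\delta_{T_a^k}\big)$ (the same identity the paper invokes inside Lemma~\ref{lemma:4event}) together with the invariant $G_{T_a^k}\le 1-V\gamma_{T_a^k}-\delta_{T_a^k}$ to get monotonicity of $(G_{T_a^k})_{k\ge k_0}$, hence $G_{T_a^k}\ge G_{T_a^{k_0}}$, which equals the target constant; this is slightly longer but also yields monotonicity as a bonus. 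For the second bullet the paper lower-bounds $G_{T_a^k}$ by the tail of the sum over $s\ge\sqrt{k}$ and pairs this with the separate upper bound $G_{T_a^k}\le 1$; your exact decomposition $G_{T_a^k}=(\mathrm{I})+(\mathrm{II})-(\mathrm{III})$ with a fixed cutoff $K$ is the same Ces\`aro-type argument and is arguably cleaner, since it gives the two-sided limit in one step. Both routes rest on the same facts: the weights sum to $1-\frac{T_a^{k_0}}{T_a^k}$ and $\gamma_t,\delta_t\downarrow 0$.

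One justification should be repaired, though the conclusion it supports is fine. You assert that the base-case bound $4\big(F_{T_a^{k_0}}+\frac{2}{k_0+24NV}+V\gamma_{T_a^{k_0}}\big)\le\frac12$ is available ``because $F_{T_a^{k_0}}\to 0$ \dots and $k_0\to\infty$ as $T_b\to\infty$.'' Neither reason is accurate: $k_0$ is determined by the realized dynamics, not by $T_b$, and $F_{T_a^{k_0}}$ need not vanish as $T_b\to\infty$ for a fixed $k_0$. The inequality nevertheless holds for every admissible $T_b$: Claim~\ref{claim:F_T_a_k} gives $F_{T_a^{k_0}}\le\frac{1}{4NV}$, while $\frac{2}{k_0+24NV}\le\frac{1}{12NV}$ and $V\gamma_{T_a^{k_0}}\le V\gamma_{T_b}<\frac{1}{12N^2V}$ (since $T_a^{k_0}\ge T_a^0\ge 12NVT_b\ge T_b$ and $\gamma_t$ is decreasing), so with $N\ge2$, $V\ge3$ the left-hand side is at most $4\big(\frac{1}{24}+\frac{1}{72}+\frac{1}{144}\big)=\frac14$; and $V\gamma_{T_a^{k_0}}+\delta_{T_a^{k_0}}\le\frac12$ for $T_b$ sufficiently large because $\delta_{T_a^{k_0}}\le\big(\frac{1}{12NVT_b}\big)^{1/8}\to 0$. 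With that substitution your induction, and hence the whole argument, goes through.
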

\begin{proof}
Since $1-V \gamma_{T_a^s} - \delta_{T_a^s} \to 1$ as $T_b\to\infty$, for sufficiently large $T_b$ we have $1-V \gamma_{T_a^s} - \delta_{T_a^s} \ge 4\big(F_{T_a^{k_0}} + \frac{2}{k_0+24NV} + V\gamma_{T_a^{k_0}}\big)$ and hence $G_{T_a^k} \ge 4\big(F_{T_a^{k_0}} + \frac{2}{k_0+24NV} + V\gamma_{T_a^{k_0}}\big)$. 

Now we prove $\lim_{k\to\infty}G_{T_a^k}=1$. 
Consider the second term in $G_{T_a^k}$, $\sum_{s=k_0}^{k-1}\frac{T_a^{s+1} - T_a^s}{T_a^k}(1-V\gamma_{T_a^s}-\delta_{T_a^s})$. 
Since
$ 
\sum_{s=\sqrt{k}}^{k-1} \frac{T_a^{s+1} - T_a^s}{T_a^k} = \sum_{s=\sqrt k}^{k-1} \frac{2(s+24NV)+1}{(k+24NV)^2} = \frac{(k+\sqrt k + 48NV)(k-\sqrt k)}{(k+24NV)^2} \to 1
$
and $1 - V\gamma_{T_a^k} - \delta_{T_a^k} \to 1$ as $k\to \infty$, 
for any $\eps > 0$ we can always find $K\ge k_0$ such that $\sum_{s=\sqrt{k}}^{k-1} \frac{T_a^{s+1} - T_a^s}{T_a^k} \ge 1 - \eps/2$ for every $k\ge K$ and $1 - V\gamma_{T_a^s} - \delta_{T_a^s} \ge 1-\eps/2$ for every $s\ge \sqrt{k}$.  Hence,
$
    G_{T_a^k}  \ge \sum_{s=\sqrt k}^{k-1}\frac{T_a^{s+1} - T_a^s}{T_a^k}(1-V\gamma_{T_a^s}-\delta_{T_a^s}) \ge (1-\eps/2)(1-\eps/2) \ge 1-\eps.
$
In addition, $G_{T_a^k} \le 1$ when $T_b$ is sufficiently large.  Therefore $\lim_{k\to\infty} G_{T_a^k} = 1$. 
 \end{proof}

\begin{lemma}\label{lemma:4event}
Fix any $k$. 
Suppose $A_a^k$ holds and $f_{T_a^k}(v^1-1) > G_{T_a^k}$ holds for both $i\in M^1$.  Then, the following four events happen with probability at least $1 - 3\exp\left(-\frac{1}{2} |\Gamma_a^{k+1}|\delta_{T_a^k}^2\right)$: 
\begin{itemize}
    \item $A_a^{k+1}$;
    \item $f_{T_a^{k+1}}^i(v^1-1) > G_{T_a^{k+1}}$ holds for both $i\in M^1$; 
    \item $f_{t}^i(v^1-1) > (1-\frac{2}{k+24NV})G_{T_a^{k}}$ holds for both $i\in M^1$, for any $t\in \Gamma_a^{t+1}$. 
    \item $\bm x_t^i(v^1-1) = \Pr[b_t^i=v^1-1\mid H_{t-1}]\ge 1-V\gamma_t$ for both $i\in M^1$, for any $t \in \Gamma_a^{k+1}$. 
\end{itemize}
\end{lemma}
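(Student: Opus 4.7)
The plan is to establish the four events in order of their dependencies: events (iii) and (iv) will follow deterministically from the lemma's hypotheses, while events (i) and (ii) will contribute the three exponential failure terms (one from invoking Lemma~\ref{lem:A_a-k-to-A_a-k+1} for event~(i), and two from an Azuma-based argument for event~(ii), one per bidder in $M^1$). The overall target failure probability of $3\exp(-\frac{1}{2}|\Gamma_a^{k+1}|\delta_{T_a^k}^2)$ will then drop out of a union bound.

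For event~(iii), I would observe that for any $t \in \Gamma_a^{k+1}$,
\begin{equation*}
f_t^i(v^1-1) \ge \tfrac{T_a^k}{t} f_{T_a^k}^i(v^1-1) \ge \tfrac{T_a^k}{T_a^{k+1}} G_{T_a^k},
\end{equation*}
and then apply Fact~\ref{fact:T_ratio_bound} to conclude $\tfrac{T_a^k}{T_a^{k+1}} \ge 1 - \tfrac{2}{k+24NV}$. For event~(iv), the plan is to combine Claim~\ref{claim:mean-based} with Claim~\ref{claim:prob-v-2}. Given the hypothesis $A_a^k$, Claim~\ref{claim:A_a^k-to-t} yields the frequency bound $P_{t-1}^i(0:v^1-3)\le F_{T_a^k}+\tfrac{2}{k+24NV}\le \tfrac{1}{2NV}-2\gamma_t$, so Claim~\ref{claim:mean-based} applies for every $b\in\{0,\dots,v^1-3\}$, giving $\Pr[b_t^i=b\mid H_{t-1}]\le \gamma_t$. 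To handle $b=v^1-2$, I would use Claim~\ref{claim:prob-v-2} with $X=F_{T_a^k}+\tfrac{2}{k+24NV}$; verifying its hypothesis $f_{t-1}^{i'}(v^1-1)>2(X+V\gamma_t)$ requires chaining event~(iii) with the lower bound $G_{T_a^k}\ge 4(F_{T_a^{k_0}}+\tfrac{2}{k_0+24NV}+V\gamma_{T_a^{k_0}})\ge 4(F_{T_a^k}+\tfrac{2}{k+24NV}+V\gamma_{T_a^k})$ from Claim~\ref{claim:G_k-bound}, using that $F_{T_a^s}$ and $\gamma_{T_a^s}$ are decreasing in $s$, together with $(1-\tfrac{2}{k+24NV})\cdot 4\ge 2$. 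Summing the per-bid bounds gives $\Pr[b_t^i=v^1-1\mid H_{t-1}]\ge 1-(v^1-1)\gamma_t\ge 1-V\gamma_t$.

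For event~(i), I would directly invoke Lemma~\ref{lem:A_a-k-to-A_a-k+1} to obtain $\Pr[A_a^{k+1}\mid A_a^k]\ge 1-\exp(-\tfrac{1}{2}|\Gamma_a^{k+1}|\delta_{T_a^k}^2)$. For event~(ii), I would fix $i\in M^1$ and define $Z_t=(1-V\gamma_t)-\mathbb{I}[b_t^i=v^1-1]$; event~(iv) guarantees $\mathbb{E}[Z_t\mid H_{t-1}]\le 0$, so $\sum_{s\le t} Z_s$ is a supermartingale. Azuma's inequality with $\Delta=|\Gamma_a^{k+1}|\delta_{T_a^k}$ then yields, with probability at least $1-\exp(-\tfrac{1}{2}|\Gamma_a^{k+1}|\delta_{T_a^k}^2)$, that
\begin{equation*}
\sum_{t\in \Gamma_a^{k+1}}\mathbb{I}[b_t^i=v^1-1]\ge |\Gamma_a^{k+1}|(1-V\gamma_{T_a^k}-\delta_{T_a^k}).
\end{equation*}
Combining this with $f_{T_a^k}^i(v^1-1)>G_{T_a^k}$ and the self-similar recursion
\begin{equation*}
G_{T_a^{k+1}}=\tfrac{T_a^k}{T_a^{k+1}}G_{T_a^k}+\tfrac{T_a^{k+1}-T_a^k}{T_a^{k+1}}(1-V\gamma_{T_a^k}-\delta_{T_a^k}),
\end{equation*}
which follows directly from unrolling the definition of $G_{T_a^{k+1}}$, yields $f_{T_a^{k+1}}^i(v^1-1)>G_{T_a^{k+1}}$. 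Union-bounding over the two bidders in $M^1$ and over events~(i) and~(ii) gives the stated failure probability. The main obstacle I anticipate is the algebraic bookkeeping in event~(iv) to verify that the current frequency of $v^1-1$ indeed exceeds the threshold of Claim~\ref{claim:prob-v-2}; this is precisely where the factor~$4$ in the base term of $G_{T_a^{k_0}}$ (carried through $G_{T_a^k}$ by monotonicity) is essential.
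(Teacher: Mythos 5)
Your proposal is correct and follows essentially the same route as the paper: event (i) via Lemma~\ref{lem:A_a-k-to-A_a-k+1}, events (iii) and (iv) deterministically from frequency persistence, Fact~\ref{fact:T_ratio_bound}, Claim~\ref{claim:A_a^k-to-t}, Claim~\ref{claim:mean-based}, Claim~\ref{claim:prob-v-2} with $X=F_{T_a^k}+\tfrac{2}{k+24NV}$, and the lower bound on $G_{T_a^k}$ from Claim~\ref{claim:G_k-bound}, and event (ii) via a per-bidder Azuma bound combined with the recursion $G_{T_a^{k+1}}=\tfrac{T_a^k}{T_a^{k+1}}G_{T_a^k}+\tfrac{T_a^{k+1}-T_a^k}{T_a^{k+1}}(1-V\gamma_{T_a^k}-\delta_{T_a^k})$, union-bounded to give the three exponential terms. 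The only difference is cosmetic: you make explicit the monotonicity of $F_{T_a^s}$ and $\gamma_{T_a^s}$ that the paper uses implicitly when passing from $k_0$ to $k$.
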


\begin{proof}
By Lemma~\ref{lem:A_a-k-to-A_a-k+1}, $A_a^{k+1}$ holds with probability at least $1 - \exp\left(-\frac{1}{2} |\Gamma_a^{k+1}|\delta_{T_a^k}^2\right)$.  Now we consider the second event. 
For every $t\in \Gamma_a^{k+1}$, we have
\begin{align}
f_{t-1}^i(v^i-1) & \ge \frac{T_a^k}{T_a^{k+1}} f_{T_a^k}(v^i-1) \nonumber \\
\text{(by condition) } & >  \frac{T_a^k}{T_a^{k+1}} G_{T_a^k}  \nonumber \\
\text{(by Claim~\ref{claim:T_ratio_bound}) } & \ge  \left(1-\frac{2}{k+24NV}\right)G_{T_a^k}  \label{eq:three-events-event-3} \\
& \ge \frac{1}{2}G_{T_a^k} \nonumber \\
\text{(by Claim~\ref{claim:G_k-bound}) } & \ge 2\left(F_{T_a^k} + \frac{2}{k+24NV} + V\gamma_{T_a^k}\right).  \nonumber 
\end{align}
In addition, according to Claim~\ref{claim:A_a^k-to-t} $A_a^k$ implies
$
\frac{1}{t-1}\sum_{s=1}^{t-1} \mathbb{I}[\exists i  \in M^1, b_s^i\le v^1-3]\le F_{T_a^k} + \frac{2}{k+24NV} \le \frac{1}{2NV} - 2\gamma_t.
$
Using Claim \ref{claim:prob-v-2} with $X=F_{T_a^k} + \frac{2}{k+24NV}$, we get $\Pr[b_t^i=v^1-2\mid H_{t-1}]\le \gamma_t$. Additionally, by Lemma~\ref{lemma:mean-based} we have $\Pr[b_t^i\le v^1-3 \mid H_{t-1}]\le (V-1)\gamma_t$. Therefore,
\begin{equation}\label{eq:mixed-strategy-v-1}
 \Pr[b_t^i=v^1-1 \mid H_{t-1}] \ge 1-V\gamma_t.
\end{equation}
Using Azuma's inequality with $\Delta=|\Gamma_a^{k+1}|\delta_{T_a^k}$, we have with probability at least $1-\exp(-\frac{1}{2}|\Gamma_a^{k+1}|\delta_{T_a^k}^2)$, 
$
v\sum_{t\in \Gamma_a^{k+1}}\mathbb{I}[b_t^i=v^1-1] 
> \sum_{t\in \Gamma_a^{k+1}} (1-V\gamma_t-\delta_{T_a^k})
\ge |\Gamma_a^{k+1}|(1-V\gamma_{T_a^k}-\delta_{T_a^k}).
$
It follows that
$
f_{T_a^{k+1}}^i(v^1-1) > \frac{1}{T_a^{k+1}}\left(T_a^{k} G_{T_a^k}+ |\Gamma_a^{k+1}|(1-V\gamma_{T_a^k}-\delta_{T_a^k})\right)=G_{T_a^{k+1}}
$
by definition. 

Using a union bound, the first event $A_a^{k+1}$ and the second event that $f_{T_a^{k+1}}^i(v^1-1) > G_{T_a^{k+1}}$ holds for both $i\in M^1$ happen with probability at least $1 - 3\exp(-\frac{1}{2}|\Gamma_a^{k+1}|\delta_{T_a^k}^2)$.  The third event is given by~\eqref{eq:three-events-event-3} and the forth event is given by~\eqref{eq:mixed-strategy-v-1}. 
 \end{proof}

We use Lemma~\ref{lemma:4event} from $k$ to $\infty$; from its third and fourth events, combined with Claim~\ref{claim:G_k-bound}, we get
$\lim_{t\to \infty} f_t^i (v^1-1) \ge \lim_{k\to \infty}\left(1-\frac{2}{k+24NV}\right)G_{T_a^k} = 1 ~\text{and}~\lim_{t\to \infty} \bm x_t^i = \bm 1_{v^1-1},$
which happens with probability at least $1-3\sum_{k=0}^{\infty}\exp(-\frac{1}{2}|\Gamma_a^{k+1}|\delta_{T_a^k}^2)$.
This concludes the analysis for Case 2. 

Combining Case 1 and Case 2, we have that either $\lim_{t\to\infty}\frac{1}{t}\sum_{s=1}^{t} \mathbb{I}[\forall i\in M^1, b_s^i=v^1-2]=1$ happens or $\lim_{t\to\infty}\frac{1}{t}\sum_{s=1}^{t} \mathbb{I}[\forall i\in M^1, b_s^i=v^1-1]=1$ happens (in which case we also have $\lim_{t\to \infty} \bm x_t^i = \bm 1_{v^1-1}$) with overall probability at least $1 - \exp\big(-\frac{T_b}{24NV}\big) - 2\exp\big(-\frac{T_b}{1152N^2V^2} \big) - 3\sum_{k=0}^{\infty}\exp(-\frac{1}{2}|\Gamma_a^{k+1}|\delta_{T_a^k}^2)$.  Using Claim~\ref{claim:new-delta-probability-sum} concludes the proof. 

\subsubsection{The special case of $v^3=v^1-1$}
\begin{claim}\label{claim:bidder3}
Given $f_t^i(v^1-2)\ge 1-\frac{1}{4+2NV}$ for all $i\in M^1$, we have $\Pr[b_t^3=v^1-2\mid H_{t-1}]\ge 1-V\gamma_t$.
\end{claim}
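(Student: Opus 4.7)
The plan is to verify the mean-based gap for bidder $3$ at round $t$: for every $b \le v^1 - 3$ I would show that $\alpha_{t-1}^3(v^1 - 2) - \alpha_{t-1}^3(b) > V \gamma_t$. Once this gap is established, the mean-based property gives $\Pr[b_t^3 = b \mid H_{t-1}] \le \gamma_t$, and since bidder $3$'s bid set is $\{0, 1, \ldots, v^3 - 1\} = \{0, 1, \ldots, v^1 - 2\}$ (using $v^3 = v^1 - 1$), a union bound over the $v^1 - 2$ alternatives yields $\Pr[b_t^3 = v^1 - 2 \mid H_{t-1}] \ge 1 - (v^1 - 2)\gamma_t \ge 1 - V\gamma_t$, as claimed.

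Set $\beta := \tfrac{1}{2 + NV}$. By a union bound over the two bidders in $M^1$, the hypothesis $f_{t-1}^i(v^1-2) \ge 1 - \tfrac{1}{4+2NV}$ implies that the fraction of past rounds in which \emph{both} bidders of $M^1$ played $v^1 - 2$ is at least $1 - \tfrac{2}{4+2NV} = 1 - \beta$. In every such round, if bidder $3$ bids $v^1 - 2$, she is tied with at least two other bidders at the top bid of that round, so her winning probability (with ties) is at least $\tfrac{1}{N}$; combined with the per-win payoff $v^3 - (v^1 - 2) = 1$, this yields
\[
\alpha_{t-1}^3(v^1-2) \;\ge\; \tfrac{1-\beta}{N}.
\]
Conversely, for any $b \le v^1 - 3$, bidder $3$ obtains positive utility only in rounds where $\max_{j \ne 3} b_s^j \le b < v^1 - 2$, which cannot occur when both $M^1$-bidders play $v^1 - 2$. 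So such ``bad'' rounds form at most a $\beta$ fraction of history, and each contributes utility at most $v^3 \le V$, giving $\alpha_{t-1}^3(b) \le V\beta$.

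Subtracting these two estimates produces the key inequality
\[
\alpha_{t-1}^3(v^1-2) - \alpha_{t-1}^3(b) \;\ge\; \tfrac{1-\beta}{N} - V\beta \;=\; \tfrac{1 + NV - NV}{N(2+NV)} \;=\; \tfrac{1}{N(2+NV)},
\]
a positive constant depending only on $N$ and $V$. Under the standing convention that $T_b$ is taken sufficiently large that $\gamma_t \le \gamma_{T_b} < \tfrac{1}{V N(2+NV)}$ for all relevant $t$, the right-hand side strictly exceeds $V\gamma_t$, which verifies the mean-based gap and completes the proof. There is no serious obstacle; the only care required is the bookkeeping that bidder $3$ earns \emph{zero} utility (when bidding $b \le v^1 - 3$) in the $\ge 1 - \beta$ fraction of ``good'' rounds, which is what makes the upper bound $V\beta$ tight enough to dominate the $V\gamma_t$ slack.
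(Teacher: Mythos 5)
Your proof is correct and follows essentially the same route as the paper's: lower-bound $\alpha_{t-1}^3(v^1-2)$ using the $\ge 1-\tfrac{2}{4+2NV}$ fraction of rounds in which both $M^1$-bidders play $v^1-2$ (tie with at least two others, payoff $1$, winning probability $\ge \tfrac1N$), upper-bound $\alpha_{t-1}^3(b)$ for $b\le v^1-3$ by $V$ times the complementary fraction, and conclude via the mean-based property plus a union bound over the $v^1-2$ lower bids. Your gap constant $\tfrac{1}{N(2+NV)}$ is slightly smaller than the paper's $\tfrac1N$ (the paper's extra factor $2$ in the lower bound is unnecessary and your accounting is the more careful one), but it still exceeds $V\gamma_t$ under the standing assumption $\gamma_t<\tfrac{1}{12N^2V^2}$, so the conclusion is unaffected.
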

\begin{proof}
If $f_t^i(v^1-2)\ge 1-\eps$, $\eps=\frac{1}{4+2NV}$, for all $i\in M^1$ then the frequency of the maximum bid to be $v^1-2$ is at least $1-2\eps$, which implies
$\alpha_{t-1}^3(v^1-2)\ge 2 \frac{1}{N} (1-2\eps).$
For any $b\le v^1-3$, 
$\alpha_{t-1}^3(b)\le V 2\eps.$
Since $\gamma_t< \frac{1}{12N^2V^2}< \frac{1}{NV}$, we have $\alpha_{t-1}^3(v^1-2)-\alpha_{t-1}^3(b)\ge 2 \frac{1}{N} (1-2\eps) - 2V\eps > V\gamma_t$, which implies, according to mean-based property, 
$\Pr[b_t^3=v^1-2]\ge 1-V\gamma_t. $
 \end{proof}

\begin{claim}\label{claim:v-2}
If history $H_{t-1}$ satisfies $f_{t-1}^i(v^1-2)\ge \frac{9}{10}$ for $i\in M^1$ and $f_{t-1}^3(v^1-2)\ge \frac{9}{10}$, then $\Pr[b_{t-1}^{i'}=v^1-2\mid H_{t-1}]\le \gamma_t$.
\end{claim}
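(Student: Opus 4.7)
The plan is to apply the mean-based property: to obtain $\Pr[b_t^{i'}=v^1-2\mid H_{t-1}]\le\gamma_t$, I will exhibit an alternative bid---namely $v^1-1$---whose average historical reward beats that of $v^1-2$ by strictly more than $V\gamma_t$. (The statement almost surely intends $b_t^{i'}$ rather than the $b_{t-1}^{i'}$ as written, matching the convention used elsewhere in the paper.) The key structural observation is that in this special case $v^3=v^1-1$, so every bidder $k\notin M^1$ has $b_s^k\le v^3-1=v^1-2$ for all $s$. Consequently the only other bidder who can ever play $v^1-1$ is the second member of $M^1$, which I denote $j$.

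For the lower bound on $\alpha_{t-1}^{i'}(v^1-1)$: bidding $v^1-1$ yields marginal utility $1$ and wins outright unless $j$ also bid $v^1-1$, in which case the result is a pairwise tie worth $1/2$. The hypothesis gives $f_{t-1}^j(v^1-1)\le 1-f_{t-1}^j(v^1-2)\le \tfrac{1}{10}$, so $\alpha_{t-1}^{i'}(v^1-1)\ge 1-\tfrac{1}{2}\cdot\tfrac{1}{10}=\tfrac{19}{20}$. For the upper bound on $\alpha_{t-1}^{i'}(v^1-2)$: on any round $s$ with $b_s^j=b_s^3=v^1-2$, bidder $i'$ bidding $v^1-2$ faces at least a three-way tie and earns at most $2/3$; on the remaining rounds I bound the per-round utility trivially by $2$. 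An inclusion--exclusion bound on the two $f\ge\tfrac{9}{10}$ hypotheses shows that the exceptional rounds have frequency at most $\tfrac{2}{10}$, yielding $\alpha_{t-1}^{i'}(v^1-2)\le \tfrac{2}{3}\cdot\tfrac{4}{5}+2\cdot\tfrac{1}{5}=\tfrac{14}{15}$.

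Combining, the gap is at least $\tfrac{19}{20}-\tfrac{14}{15}=\tfrac{1}{60}$, which exceeds $V\gamma_t$ under the standing assumption $\gamma_t<\tfrac{1}{12N^2V^2}$ (for $N\ge 3$, $V\ge 3$ this gives $V\gamma_t<\tfrac{1}{324}<\tfrac{1}{60}$). The mean-based definition then immediately delivers the claim. The main obstacle is that the arithmetic is tight: the threshold $\tfrac{9}{10}$ barely suffices for the gap to be positive, so the case split must use both hypotheses $f_{t-1}^j(v^1-2)\ge\tfrac{9}{10}$ and $f_{t-1}^3(v^1-2)\ge\tfrac{9}{10}$ simultaneously to force a three-way tie on almost every round; using only one of them would leave $\alpha_{t-1}^{i'}(v^1-2)$ as large as about $1$ and destroy the argument.
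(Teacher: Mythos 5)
Your proposal is correct and takes essentially the same route as the paper: compare bid $v^1-2$ against the alternative $v^1-1$, use that only the other member of $M^1$ can ever bid $v^1-1$ and that with frequency at least $4/5$ both that bidder and bidder $3$ bid $v^1-2$ (forcing an at-least-three-way tie), then invoke the mean-based property, reading the conclusion as $\Pr[b_t^{i'}=v^1-2\mid H_{t-1}]\le\gamma_t$ as intended. The only difference is bookkeeping: the paper lower-bounds the difference $\alpha_{t-1}^{i'}(v^1-1)-\alpha_{t-1}^{i'}(v^1-2)$ round-by-round and obtains a margin of $\tfrac{1}{6}$, whereas your separate bounds on the two averages give the smaller but still sufficient margin of $\tfrac{1}{60} > V\gamma_t$.
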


\begin{proof}
If $f_{t-1}^i(v^1-2)\ge \frac{9}{10}$ for $i\in M^1$ and $f_{t-1}^3(v^1-2)\ge \frac{9}{10}$, then we have 
$
\frac{1}{t-1}\sum_{s=1}^{t-1}\mathbb{I}[|\{ i\notin M^1 : b_s^i = v^1-2 \}|\ge 2]\ge 1 - 2\times \frac{1}{10} = \frac{4}{5}$ and
$
P_{t-1}^{i'}(0:v^1-3)
\le 1-f_{t-1}^3(v^1-2)\le \frac{1}{10}.$

Recall that $P_t^i(k) = \frac{1}{t} \sum_{s=1}^t \mathbb{I}[\max_{j\ne i} b_s^j = k]$. By $P_t^i(0:k)$ we mean $\sum_{\ell = 0}^k P_t^i(\ell)$.
And we can calculate
\begin{align*}
    & \alpha_{t-1}^{i'}(v^1-1)-\alpha_{t-1}^{i'}(v^1-2)\\ 
    &\ge P_{t-1}^{i'}(v^1-1)\times(\frac{1}{2}-0)+\frac{1}{t-1}\sum_{s=1}^{t-1}\mathbb{I}[|\{ i\notin M^1 : b_s^i = v^1-2 \}|\ge 2]\times(1-\frac{2}{3}) \\
    & \quad + P_{t-1}^{i'}(0:v^1-3)\times(1-2)\\
    &\ge 0+\frac{1}{3}\times\frac{4}{5}-\frac{1}{10}=\frac{1}{6}\\
    &> V\gamma_t,
\end{align*}
which implies $\Pr[b_{t-1}^{i'}=v^1-2\mid H_{t-1}]\le \gamma_t$ according to mean-based property.
 \end{proof}



We only provide a proof sketch here; the formal proof is complicated but similar to the above proof for Case 2 and hence omitted.  We prove by contradiction.  Suppose Case 1 happens, that is, at each time step $T_a^k$ the frequency of $v^1-1$ for both bidders $i\in M^1$, $f_{T_a^k}^i(v^1-1)$, is upper bounded by the threshold $16(F_{T_a^k}+\frac{2}{k+24NV} + V\gamma_{T_a^k})$, which approaches $0$ as $k\to\infty$.  Assuming $A_a^0, \ldots, A_a^k$ happen (which happens with high probability), the frequency of $0:v^1-3$ is also low.  Thus, $f_t^i(v^1-2)$ must be close to $1$.  Then, according to Claim~\ref{claim:bidder3}, bidder $3$ will bid $v^1-2$ with high probability. Using Azuma's inequality, with high probability, the frequency of bidder $3$ bidding $v^1-2$ in all future periods will be approximately $1$, which increases $f_t^3(v^1-2)$ to be close to $1$ after several periods.  Then, according Claim~\ref{claim:v-2}, bidder $i\in M^1$ will switch to bid $v^1-1$.  After several periods, the frequency $f_{T_a^k}^i(v^1-1)$ will exceed $16(F_{T_a^k}+\frac{2}{k+24NV} + V\gamma_{T_a^k})$ and thus satisfy Case 2.  This leads to a contradiction.

\subsection{Proof of Proposition \ref{prop:M2counterexample}}
\label{proof:M2counterexample}
We consider a simple case where there are only two bidders with the same type $v^1 = v^2 = 3$. Let $V=3$. The set of possible bids is $\mathcal B^1 = \mathcal B^2=\{0, 1, 2\}.$ Denote $f_t^i(b) = \frac{1}{t}\sum_{s=1}^t\mathbb{I}[b_s^i = b]$ the frequency of bidder $i$'s bid in the first $t$ rounds. 
\begin{claim}\label{Claim:ex-1}
For $i\in \{1,2\}$, $\alpha_t^i(1) - \alpha_t^i(2) = f_t^{3-i}(0) - \frac{f_t^{3-i}(2)}{2}$ and $\alpha_t^i(1) - \alpha_t^i(0) = f_t^{3-i}(1) + \frac{f_t^{3-i}(0)}{2}$.
\end{claim}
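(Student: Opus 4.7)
The plan is to verify both identities by direct computation, unfolding the definition of $\alpha_t^i(b)$ in the two-bidder setting with $v^1=v^2=3$ and simplifying. Since there is only one opponent ($3-i$), the statistics $P_t^i(\cdot)$ and $Q_t^i(\cdot)$ collapse into functions of the opponent's empirical frequencies $f_t^{3-i}(\cdot)$, which makes this essentially an algebraic exercise.

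First I would write down the three quantities $\alpha_t^i(0), \alpha_t^i(1), \alpha_t^i(2)$ explicitly. Using the utility formula $u^i(b, b^{3-i}) = (v^i - b)\mathbb{I}[b=\max\{b,b^{3-i}\}]/|\argmax|$ with $v^i=3$ and averaging over history, one obtains
\begin{align*}
\alpha_t^i(0) &= 3\cdot \tfrac{1}{2} f_t^{3-i}(0), \\
\alpha_t^i(1) &= 2\bigl(f_t^{3-i}(0) + \tfrac{1}{2} f_t^{3-i}(1)\bigr), \\
\alpha_t^i(2) &= 1\cdot \bigl(f_t^{3-i}(0) + f_t^{3-i}(1) + \tfrac{1}{2} f_t^{3-i}(2)\bigr).
\end{align*}
The factor $\tfrac{1}{2}$ in each expression accounts for the tie-breaking rule when both bidders submit the same bid.

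Next I would subtract to obtain the two claimed identities. For the first,
\begin{equation*}
\alpha_t^i(1) - \alpha_t^i(2) = 2f_t^{3-i}(0) + f_t^{3-i}(1) - f_t^{3-i}(0) - f_t^{3-i}(1) - \tfrac{1}{2} f_t^{3-i}(2) = f_t^{3-i}(0) - \tfrac{f_t^{3-i}(2)}{2}.
\end{equation*}
For the second,
\begin{equation*}
\alpha_t^i(1) - \alpha_t^i(0) = 2f_t^{3-i}(0) + f_t^{3-i}(1) - \tfrac{3}{2} f_t^{3-i}(0) = f_t^{3-i}(1) + \tfrac{f_t^{3-i}(0)}{2}.
\end{equation*}

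There is no real obstacle here: the claim is a bookkeeping identity whose only subtlety is correctly accounting for the tie-breaking probability (the $\tfrac{1}{2}$ factor on $f_t^{3-i}(b)$ when bidder $i$ also bids $b$). Once the three expressions for $\alpha_t^i(\cdot)$ are written out with the correct tie factors, both identities follow immediately.
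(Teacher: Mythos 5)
Your proposal is correct and matches the paper's own proof: both write out $\alpha_t^i(0)$, $\alpha_t^i(1)$, $\alpha_t^i(2)$ in terms of the opponent's bid frequencies with the $\tfrac{1}{2}$ tie factor and then subtract. Nothing further is needed.
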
 
\begin{proof}
We can express $\alpha_t^i(b)$ using the frequencies as the following:
   $ \alpha_t^i(0)  = \frac{ 3f_t^{3-i}(0)}{2};
    \alpha_t^i(1) = f_t^{3-i}(1) + 2f_t^{3-i}(0) = 1 + f_t^{3-i}(0) - f_t^{3-i}(2);
     \alpha_t^i(2) = \frac{f_t^{3-i}(2)}{2} + 1-f_t^{3-i}(2).
$
Then the claim follows from direct calculation. 
 \end{proof}

We construct a $\gamma_t$-mean-based algorithm $\alg$ (Algorithm~\ref{alg:ex}) with $\gamma_t = O(\frac{1}{t^{1/4}})$ such that, with constant probability, $\lim_{t\to\infty} f^i_t(1) = 1$ but in infinitely many rounds the mixed strategy $\bm x_t^i = \bm 1_2$. 
The key idea is that, when $\alpha_t^i(1) - \alpha_t^i(2)$ is positive but lower than $V\gamma_t$ in some round $t$ (which happens infinitely often), we let the algorithm bid $2$ with certainty in round $t+1$.  This does not violate the mean-based property.

We note that this algorithm has no randomness in the first $T_0$ rounds. It bids $1$ in the first $T_0 - T_0^{2/3}$ rounds and bid $0$ in the remaining $T_0^{2/3}$ rounds. Define round $T_k = 32^k T_0$ for $k \ge 0$. Let $\gamma_t = 1$ for $1 \le t \le T_0$ and $\gamma_t = T_{k}^{-1/4} = O(t^{-1/4})$ for  $t \in [T_k+1,T_{k+1}]$ and all $k \ge 0$. 

\begin{claim}
Algorithm \ref{alg:ex} is a $\gamma_t$-mean-based algorithm with $\gamma_t = O(t^{-1/4})$.
\end{claim}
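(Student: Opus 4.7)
The plan is to verify the two pieces of the claim separately: (a) that the sequence $(\gamma_t)$ satisfies $\gamma_t = O(t^{-1/4})$ and is decreasing to $0$, and (b) that Algorithm~\ref{alg:ex} meets the $\gamma_t$-mean-based condition round-by-round. For (a), the piecewise definition gives $\gamma_t = T_k^{-1/4}$ on the dyadic block $t \in [T_k{+}1, T_{k+1}]$; combined with $T_{k+1} = 32 T_k$ this yields $T_k \ge t/32$, hence $\gamma_t \le 32^{1/4}\, t^{-1/4}$. Monotonicity and the limit $\gamma_t \to 0$ are then immediate from the piecewise formula.

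For (b), I would split the analysis by which branch of the algorithm executes at round $t$. When $t \le T_0$, we have $\gamma_t = 1$ and the mean-based inequality $\Pr[b_t = b] \le 1$ is vacuous. When $t \in [T_k{+}1, T_{k+1}]$ and the algorithm enters the normal branch (line~10), it plays $\argmax_b \alpha_{t-1}(b)$ with probability $1 - T_{k+1}^{-1/3}$ and plays $0$ with probability $T_{k+1}^{-1/3}$. If some bid $b$ is suboptimal in the sense of the mean-based condition (there exists $b'$ with $\alpha_{t-1}(b')-\alpha_{t-1}(b) > V\gamma_t$), then $b$ is not an argmax and hence $\Pr[b_t = b] \le T_{k+1}^{-1/3}$; it thus suffices to check $T_{k+1}^{-1/3} \le T_k^{-1/4} = \gamma_t$. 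Using $T_{k+1}^{-1/3} = 32^{-1/3} T_k^{-1/3}$, this reduces to $T_k^{1/12} \ge 32^{-1/3}$, which holds comfortably since $T_k \ge T_0 > 640$. Finally, when the special branch (line~8) triggers---only at $t = T_k + 1$ with $\argmax = 1$ and $\alpha_{t-1}(1) - \alpha_{t-1}(2) < V\gamma_t$---the algorithm deterministically bids $2$. Here bids $0$ and $1$ are played with probability $0 \le \gamma_t$, and for bid $2$ the largest gap $\max_{b'}\{\alpha_{t-1}(b') - \alpha_{t-1}(2)\}$ equals $\alpha_{t-1}(1) - \alpha_{t-1}(2) < V\gamma_t$, so no $b'$ exceeds the mean-based threshold and the condition is again vacuous.

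None of the three regimes requires delicate estimation; the only numerically significant step is the elementary inequality $T_{k+1}^{-1/3} \le T_k^{-1/4}$ used in the normal branch, and this is exactly where the hypothesis $T_0 > 640$ provides ample slack. I therefore do not anticipate any real obstacle beyond cleanly enumerating the branches and applying the definition of mean-based algorithm.
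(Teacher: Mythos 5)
Your proposal is correct and follows essentially the same route as the paper: handle $t\le T_0$ vacuously via $\gamma_t=1$, note that in the ordinary branch any bid violating the mean-based threshold is not an argmax and is thus played with probability at most $T_{k+1}^{-1/3}\le T_k^{-1/4}=\gamma_t$, and observe that the special branch at $t=T_k+1$ imposes no constraint on bid $2$ because the gap $\alpha_{t-1}(1)-\alpha_{t-1}(2)$ is below $V\gamma_t$ while bids $0,1$ are played with probability $0$. Your additional bookkeeping ($\gamma_t\le 32^{1/4}t^{-1/4}$ and the explicit check $T_{k+1}^{-1/3}\le T_k^{-1/4}$) only spells out steps the paper leaves implicit.
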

\begin{proof}
We only need to verify the mean-based property in round $t \ge T_{0}+1$ since $\gamma_t = 1$ for $t \le T_0$.
The proof follows by the definition and is straightforward: 
If the condition in Line 5 holds, where $\argmax_{b} \alpha_{t-1}(b) = 1$ and $\alpha_{t-1}^i(1) - \alpha_{t-1}^i(2) \le V\gamma_t$, then the mean-based property does not apply to bids $1$ and $2$ and the algorithm bids $0$ with probability $0\le \gamma_t$. 
Otherwise, according to Line 8, the algorithm bids $b' \notin \argmax_b \alpha_{t-1}(b)$ with probability at most $T_{k+1}^{-1/3} \le \gamma_t$. 
 \end{proof}

For $k \ge 0$, let $A_k$ be the event that for both $i \in \{1,2\}$, it holds that $ T_{k}^{-\frac{1}{3}} \le f_{T_k}^i(0) \le 2T_{k}^{-\frac{1}{3}}$ and $f_{T_k}^i(2) = \frac{k}{T_k}$. Since both bidders submit deterministic bids in the first $T_0$ rounds, it is easy to check that $A_0$ holds probability 1. 

The following two claims show that if $A_0, A_1 , \ldots$ all happen, then the dynamics time-average converges to $1$ while in the meantime, both of the bidders bid 2 at round $T_k+1$ for all $k \ge 0$.

\begin{claim}\label{claim:time-average}
For any $k \ge 0$ and $i \in \{1,2\}$, if $A_{k+1}$ holds, then $f_t^i(1) \ge 1- 64 T_{k+1}^{-\frac{1}{3}}- \frac{32k}{T_{k+1}}$ holds for any $t \in [T_k, T_{k+1}]$. In particular, if $A_k$ holds for all $k \ge 0$, then $\lim_{t \rightarrow \infty} f_t^i(1) = 1$ for $i \in \{1,2\}$.
\end{claim} 
\begin{proof}
Let $A_{k+1}$ holds. Then $2 T_{k+1}^{-\frac{1}{3}} \ge f_{T_{k+1}}^i(0) 
    \ge \frac{t f_{t}^i(0)}{T_{k+1}}
 \ge \frac{f_t^i(0)}{32},$ which implies that $f_{T_k}^i(0) \le 64 T_{k+1}^{-\frac{1}{3}}$. Similarly, we have $f_{t}^i(2) \le \frac{32k}{T_{k+1}}$. The claim follows by $f_t^i(1) = 1 -f_{t}^i(0) - f_t^i(2)$.
 \end{proof}

\begin{claim}\label{claim:bid2}
If $A_k$ happens, then both of the bidders bid 2 at round $T_k+1$.
\end{claim}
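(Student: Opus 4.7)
The plan is to verify directly the two conditions in Line~8 of Algorithm~\ref{alg:ex} at round $t = T_k + 1$, namely that $\argmax_b \alpha_{T_k}^i(b) = 1$ and $\alpha_{T_k}^i(1) - \alpha_{T_k}^i(2) < V\gamma_{T_k+1}$, and this would have to hold for both $i \in \{1,2\}$. Since both bidders are running the same algorithm and $A_k$ constrains the frequencies of both, the argument will be symmetric in $i$.

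First I would plug the hypothesis $A_k$ into the identities of Claim~\ref{Claim:ex-1}. Under $A_k$, for the opponent $3-i$ we have $f_{T_k}^{3-i}(0) \in [T_k^{-1/3},\, 2T_k^{-1/3}]$ and $f_{T_k}^{3-i}(2) = k/T_k$, so $f_{T_k}^{3-i}(1) = 1 - f_{T_k}^{3-i}(0) - k/T_k$, which is close to $1$ for large $T_k$. Then Claim~\ref{Claim:ex-1} gives
\[
\alpha_{T_k}^i(1) - \alpha_{T_k}^i(0) \;=\; f_{T_k}^{3-i}(1) + \tfrac{1}{2}f_{T_k}^{3-i}(0) \;>\; 0,
\]
which is immediate from non-negativity of the terms together with $f_{T_k}^{3-i}(0) \ge T_k^{-1/3} > 0$.

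Next, still from Claim~\ref{Claim:ex-1},
\[
\alpha_{T_k}^i(1) - \alpha_{T_k}^i(2) \;=\; f_{T_k}^{3-i}(0) - \tfrac{1}{2}f_{T_k}^{3-i}(2) \;\ge\; T_k^{-1/3} - \tfrac{k}{2T_k}.
\]
Since $T_k = 32^k T_0$ grows exponentially in $k$ while $k/(2T_k) = k/(2 \cdot 32^k T_0)$ decays exponentially, the bound $T_k^{2/3} > k/2$ (i.e.\ $T_k^{-1/3} > k/(2T_k)$) is easy to check for all $k \ge 0$ given $T_0 > 640$. This establishes $\alpha_{T_k}^i(1) > \alpha_{T_k}^i(2)$, and combined with the previous display it yields $\argmax_b \alpha_{T_k}^i(b) = 1$, the first condition of Line~8.

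Finally I upper bound the same difference using $A_k$ once more: $\alpha_{T_k}^i(1) - \alpha_{T_k}^i(2) \le f_{T_k}^{3-i}(0) \le 2T_k^{-1/3}$. Since $\gamma_{T_k+1} = T_k^{-1/4}$ and $V = 3$, the comparison $2T_k^{-1/3} < 3T_k^{-1/4}$ reduces to $T_k^{-1/12} < 3/2$, which holds for all $T_k \ge 1$. Thus the second condition in Line~8 is also met, so the algorithm deterministically plays $b_{T_k+1}^i = 2$ for both $i \in \{1,2\}$. I do not expect a real obstacle here: the statement is essentially a book-keeping check that the carefully chosen rates $T_k^{-1/3}$ and $\gamma_t = T_k^{-1/4}$ are compatible with triggering the special branch of the algorithm at the designated rounds.
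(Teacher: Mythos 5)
Your proposal is correct and follows essentially the same route as the paper's proof: both use Claim~\ref{Claim:ex-1} together with the bounds from $A_k$ to get $0 < T_k^{-1/3} - O(k/T_k) \le \alpha_{T_k}^i(1)-\alpha_{T_k}^i(2) \le 2T_k^{-1/3} < 3\gamma_{T_k+1}$ and $\alpha_{T_k}^i(1) > \alpha_{T_k}^i(0)$, so the Line~8 branch of Algorithm~\ref{alg:ex} fires for both bidders at round $T_k+1$. Your explicit verification of the positivity $T_k^{2/3} > k/2$ and of $2T_k^{-1/3} < 3T_k^{-1/4}$ is just a slightly more careful write-up of the same bookkeeping.
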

\begin{proof}
According to Claim \ref{Claim:ex-1}, we know that for any $i \in \{1,2\}$ and any $t > T_0$,
$
    \alpha_{t-1}^i(1) - \alpha_{t-1}^i(0) = f_{t-1}^{3-i}(1) + \frac{f_{t-1}^{3-i}(0)}{2} > 0.$
Thus $\argmax_b\{\alpha_{t-1}^i(b)\} \ne 0$ for any history $H_{t-1}$.
Again by Claim \ref{Claim:ex-1}, we have for any $i \in \{1,2\}$, $
  0 < T_k^{-\frac{1}{3}} - \frac{k}{T_k} \le \alpha_{T_k}^{i}(1) - \alpha_{T_k}^i(2) = f_{T_k}^{3-i}(0) - \frac{f_{T_k}^{3-i}(2)}{2} \le f_{T_k}^{3-i}(0) \le 2T_{k+1}^{-\frac{1}{3}} < 3T_{k+1}^{-\frac{1}{4}} = V\gamma_{T_k+1}. $
It follows from Lines 5-6 of Algorithm \ref{alg:ex} that both bidders bid 2 at round $T_k+1$.
 \end{proof}

We now bound the probability of $A_{k+1}$ given $A_k$ happens.  This will be used later to derive a constant lower bound on the probability that $A_k$ happens for all $k \ge 0$.
\begin{lemma}\label{lemma:induction}
For any $k \ge 0$,
$\Pr[A_{k+1}\mid A_k] \ge  1- 4\exp\Big( \frac{T_{k+1}^{\frac{1}{3}}}{900}\Big)$.
\end{lemma}
\begin{proof}
Suppose $A_k$ happens. We know from Claim \ref{claim:bid2} that both bidders bid 2 in round $T_k+1$. The following claim shows the behaviour of the algorithm in rounds $[T_k +2, T_{k+1}]$.

\begin{claim}\label{claim:tmp}
For any $i \in \{1,2\}$ and any $t \in [T_k+2,T_{k+1}]$, 
$\Pr[b_t^i=1\mid A_k] = 1-T_{k+1}^{-\frac{1}{3}}$, and $\Pr[b_t^i = 0\mid A_k] = T_{k+1}^{-\frac{1}{3}}$. 
\end{claim}

\begin{proof}
According to the definition of Algorithm \ref{alg:ex}, it suffices to prove that for any $t \in [T_k+2,T_{k+1}]$ and $i \in \{1,2\}$, $\argmax_b\{\alpha_{t-1}^i(b)\} = 1$ holds. 
We prove it by induction. For the base case, it is easy to verify that $\alpha_{T_k+1}^i(1) - \alpha_{T_k+1}^i(2) = f_{T_k+1}^{3-i}(0) - \frac{f_{T_k+1}^{3-i}(2)}{2} > 0,\forall i \in \{1,2\}$. Suppose the claim holds for all of the rounds $[T_k +2,t]$. Then none of the bidders bids 2 in rounds $[T_k +2,t]$. It follows that for any $i \in \{1,2\}$,
$
    \alpha_{t}^i(1) - \alpha_{t}^i(2) = f_{t}^{3-i}(0) - \frac{f_{t}^{3-i}(2)}{2}
    \ge \frac{f_{T_k}^{3-i}(0)}{32} - \frac{k+1}{2T_k}
    \ge \frac{1}{32T_k^{\frac{1}{3}}} - \frac{k+1}{T_k} 
    > 0 \text{ (since $T_0 > 64^{\frac{3}{2}}$)}.
$
Therefore $\argmax_b\{\alpha_{t-1}^i(b)\} = 1$. This completes the induction step. 
 \end{proof}

From the above proof we can also conclude that for $i \in \{1,2\}$, $f_{T_{k+1}}^i(2) = \frac{k+1}{T_{k+1}}$.

Note that the bidding strategies of both bidders at different rounds in $[T_k+2, T_{k+1}]$ are independent. Specifically, we have $\Pr[b^i_s = 0] = T_{k+1}^{-\frac{1}{3}}$ for $i \in \{1,2\}$ and $s \in [T_{k}+2, T_{k+1}]$. Therefore, by Chernoff bound, we have for $i \in \{1,2\}$, the total number of bids $0$ between rounds $[T_{k}+2, T_{k+1}]$ lies between $[\frac{29}{30}(T_{k+1}-T_k-1)T_{k+1}^{-\frac{1}{3}}, \frac{31}{30}(T_{k+1}-T_k-1)T_{k+1}^{-\frac{1}{3}}]$ with probability at least $1- 2\exp\left( -\frac{T_{k+1}-T_k-1}{450T_{k+1}^{\frac{2}{3}}} \right)  \ge 1- 2\exp\left(- \frac{T_{k+1}^{\frac{1}{3}}}{900}\right)$.
Therefore, with probability at least $1- 4\exp\left(- \frac{T_{k+1}^{\frac{1}{3}}}{900}\right)$, both of the above events happens. It implies that for $i \in \{1,2\}$, the frequency of bids $0$ is at least
\begin{align*}
    f_{T_{k+1}}^{i}(0) &\ge \frac{1}{T_{k+1}}\left( T_k f_{T_k}^i(0) + \frac{29}{30}\frac{T_{k+1}-T_k-1}{T_{k+1}^{\frac{1}{3}}} \right)\\
    & \ge \frac{1}{T_{k+1}}\left( \frac{T_k}{T_k^{\frac{1}{3}}} + \frac{29}{30}\frac{\frac{30}{32}T_{k+1}}{T_{k+1}^{\frac{1}{3}}} \right)\\
    & = \frac{32^{\frac{1}{3}}}{32T_{k+1}^{\frac{1}{3}}} + \frac{29}{32 T_{k+1}^{\frac{1}{3}}} \ge \frac{1}{T_{k+1}^{\frac{1}{3}}},
\end{align*}
and the frequency of bids $0$ is at most
\begin{align*}
    f_{T_{k+1}}^{i}(0) &\le \frac{1}{T_{k+1}}\left( T_k f_{T_k}^i(0) + \frac{31}{30}\frac{T_{k+1}-T_k-1}{T_{k+1}^{\frac{1}{3}}} \right)\\
    & \le \frac{1}{T_{k+1}}\left( \frac{2T_k}{T_k^{\frac{1}{3}}} + \frac{31}{30}\frac{T_{k+1}}{T_{k+1}^{\frac{1}{3}}} \right)\\
    & = \frac{2\times 32^{\frac{1}{3}}}{32T_{k+1}^{\frac{1}{3}}} + \frac{31}{30 T_{k+1}^{\frac{1}{3}}} \le \frac{2}{T_{k+1}^{\frac{1}{3}}}.
\end{align*}
Therefore, $A_{k+1}$ holds. This completes the proof of Lemma~\ref{lemma:induction}.
 \end{proof}
Using a union bound, we have $\Pr[\forall k \ge 0, A_k \text{ holds}] = \Pr[A_0]\prod_{k=0}^{\infty}\Pr[A_{k+1} \mid A_k]$ is at least $1 - 4\sum_{j=1}^\infty \exp(- T_{j}^{\frac{1}{3}} / 900 )$. Since $T_0 = 10^{12}$, $\exp(-T_0^{1/3}/900) \le \frac{1}{16}$ and $T_j = 32^j T_0$, we can further lower bound the probability by $1 - \frac{1}{4}\sum_{j=1}^\infty \exp\left(- 32^{j/3}\right) \ge \frac{1}{2}$. 
Therefore, with probability at least $\frac{1}{2}$, $A_k$ holds for all $k \ge 0$. By Claim~\ref{claim:time-average}, the dynamics time-average converges to the equilibrium of $1$, but by Claim~\ref{claim:bid2}, both bidders' mixed strategies do not converge in the last-iterate sense. This completes the proof of Proposition \ref{prop:M2counterexample}.


\section{Missing Proofs from Section~\ref{sec:main-proof}}
\label{app:main-proof}

\subsection{Proof of Lemma~\ref{lemma:mean-based-v-2}}
\label{app:mean-based-v-2}
Let $\Gamma=\{s\leq t-1 | \exists i\in M^1, b^i_s \leq v^1-3\}$.
The premise of the lemma says $\frac{|\Gamma|}{t-1} \le \frac{1}{3NV}$.  First, note that 
\begin{align}\label{eq:M1-3-P-v-3}
P_{t-1}^i(0:v^1-3) & = \frac{1}{t-1}\sum_{s=1}^{t-1} \mathbb{I}[\max_{i'\ne i} b_s^{i'} \le v^1-3] \nonumber \\
& \le \frac{1}{t-1}\sum_{s=1}^{t-1} \mathbb{I}[\exists i\in M^1, b_s^{i} \le v^1-3]= \frac{|\Gamma|}{t-1} \le \frac{1}{3NV}.
\end{align}
Then, according to \eqref{eq:alpha-P-Q}, 
\begin{align}\label{eq:M1-3-alpha}
     & \alpha_{t-1}^i(v^1-1) - \alpha_{t-1}^i(v^1-2)  \nonumber \\
     & = Q_{t-1}^i(v^1-1) + P_{t-1}^i(v^1-2)  - 2Q_{t-1}^i(v^1-2) - P_{t-1}^i(0:v^1-3). 
\end{align}
Using $Q_{t-1}^i(v^1-1) \ge \frac{1}{N} P_{t-1}^i(v^1-1)$ and $Q_{t-1}^i(v^1-2)\le \frac{1}{2} P_{t-1}^i(v^1-2)$ from \eqref{eq:P-Q}, we can lower bound \eqref{eq:M1-3-alpha} by
$ 
\frac{1}{N} P_{t-1}^i(v^1-1) - P_{t-1}^i(0:v^1-3). 
$
With \eqref{eq:M1-3-P-v-3}, we get
$
\alpha_{t-1}^i(v^1-1) - \alpha_{t-1}^i(v^1-2) \ge \frac{1}{N} P_{t-1}^i(v^1-1) - \frac{1}{3NV}.
$

If $\frac{1}{N} P_{t-1}^i(v^1-1) - \frac{1}{3NV} > V\gamma_t$, then $\alpha_{t-1}^i(v^1-1) - \alpha_{t-1}^i(v^1-2) > V\gamma_t$.  By the mean-based property, $\Pr[b_t^i = v^1-2 \mid H_{t-1}] \le \gamma_t$. 

Suppose $\frac{1}{N} P_{t-1}^i(v^1-1) - \frac{1}{3NV} \le V\gamma_t$, which is equivalent to
$ 
P_{t-1}^i(v^1-1) \le \frac{1}{3V} + NV\gamma_t.
$
Consider $Q_{t-1}^i(v^1-2)$.  By the definition of $\Gamma$, in all rounds $s\notin \Gamma$ and $s\le t-1$, all bidders in $M^1$ bid $v^1-2$ or $v^1-1$. 
If bidder $i$ wins with bid $v^1-2$ in round $s\notin \Gamma$, she must be tied with at least two other bidders in $M^1$ since $|M^1|\ge 3$; if bidder $i$ wins with bid $v^1-2$ (tied with at least one other bidder) in round $s\in \Gamma$, that round contributes at most $\frac{1}{2}$ to the summation in $Q_{t-1}^i(v^1-2)$.  Therefore, 
\begin{align}\label{eq:M1-3-Q}
    Q_{t-1}^i(v^1-2) \le \frac{1}{t-1} \left( \frac{(t-1) - |\Gamma|}{3} + \frac{|\Gamma|}{2} \right)  = \frac{1}{3} + \frac{1}{6}\frac{|\Gamma|}{t-1} \le \frac{1}{3} + \frac{1}{18 NV}. 
\end{align}
We then consider $P_{t-1}^i(v^1-2)$. Since $P_{t-1}^i(0:v^1-3) + P_{t-1}^i(v^1-2) + P_{t-1}^i(v^1-1) = 1$, and recalling that $P_{t-1}^i(0:v^1-3)\le \frac{1}{3NV}$ and $P_{t-1}^i(v^1-1) \le \frac{1}{3V} + NV\gamma_t$, we get
\begin{align}\label{eq:M1-3-P}
    P_{t-1}^i(v^1-2) = 1 - P_{t-1}^i(0:v^1-3) - P_{t-1}^i(v^1-1) \ge 1 - \frac{1}{3NV} - \frac{1}{3V} - NV\gamma_t. 
\end{align}
Combining \eqref{eq:M1-3-alpha} with \eqref{eq:M1-3-P-v-3}, \eqref{eq:M1-3-Q}, and \eqref{eq:M1-3-P}, we get 
\begin{align*}
     & \alpha_{t-1}^i (v^1-1) - \alpha_{t-1}^i(v^1-2) \\
    & \ge 0 + \left(1 - \frac{1}{3NV} - \frac{1}{3V} - NV\gamma_t\right) - 2\left(\frac{1}{3} + \frac{1}{18 NV}\right) - \frac{1}{3NV} \\
    & = \frac{1}{3} - \frac{3N+7}{9NV}  - NV\gamma_t \\
    & \ge \frac{1}{3} - \frac{3N+7}{9NV} - \frac{1}{12} \quad \text{(because $\gamma_t\le \frac{1}{12NV}$)} \\
    & \ge \frac{1}{4} - \frac{3N+7}{27 N} \quad \text{(because $V\ge 3$)} \\
    & \ge \frac{1}{12N} \quad \text{(because $N\ge 3$)} \\
    & \ge V \gamma_t \quad \text{(because $\gamma_t\le \frac{1}{12NV}$)}.
\end{align*}
Therefore, by the mean-based property, $\Pr[b_t^i=v^1-2 \mid H_{t-1}] \le \gamma_t$.

\subsection{Proof of Claim \ref{claim:F_k-bound}}
Since $\delta_{T_a^0} \to 0$ and $\gamma_{T_a^0} \to 0$ as $T_b \to \infty$, when $T_b$ is sufficiently large we have 
\[F_{T_a^1} = \frac{1}{c}\frac{1}{4NV} + \frac{c-1}{c} \left(\delta_{T_a^0} + |M^1|V \gamma_{T_a^0}\right) \le \frac{1}{c}\frac{1}{4NV} + \frac{c-1}{c} \frac{1}{4NV} \le \frac{1}{4NV} = F_{T_a^0}.\]
By definition, for every $k\ge 1$
\[F_{T_a^{k+1}}=\frac{1}{c}F_{T_a^k}+ \frac{c-1}{c}\left(\delta_{T_a^k}+|M^1|V\gamma_{T_a^k}\right), \quad F_{T_a^{k}}=\frac{1}{c}F_{T_a^{k-1}}+ \frac{c-1}{c}\left(\delta_{T_a^{k-1}}+|M^1|V\gamma_{T_a^{k-1}}\right). \]
Using the fact that $F_{T_a^k} \le F_{T_a^{k-1}}$ and that $\delta_{T_a^k}+|M^1|V\gamma_{T_a^k}$ is decreasing in $k$, we have $F_{T_a^{k+1}}\le F_{T_a^{k}} \le \frac{1}{4NV}$. Similarly, we have $\widetilde{F}_{T_a^{k+1}} \le \widetilde{F}_{T_a^k}$ for any $k \ge 0$.

Note that $\delta_{T_a^k} \to 0$ and $\gamma_{T_a^0} \to 0$ as $k \to +\infty$. Therefore, for any $0< \eps \le \frac{1}{4NV}$, we can find $k$ sufficiently large such that $\frac{1}{c^{k/2}} \le \frac{\eps}{6}$, $\delta_{T_a^s} \le \frac{\eps}{6}$, and $\gamma_{T_a^s} \le \frac{\eps}{6|M^1|V}$. Then we have
\begin{align*}
   F_{T_a^k} \le \widetilde{F}_{T_a^k} &= \frac{1}{c^k} +\sum_{s=0}^{k-1} \frac{c-1}{c^{k-s}}\delta_{T_a^s}+ \sum_{s=0}^{k-1}|M^1|V \frac{c-1}{c^{k-s}} \gamma_{T_a^s} \\
   & \le \frac{\eps}{3} + 2\sum_{s=0}^{k/2-1}\frac{c-1}{c^{k-s}} + \sum_{s=k/2}^{k-1}\frac{c-1}{c^{k-s}} (\delta_{T_a^{k/2}}+ |M^1|V \frac{c-1}{c^{k-s}} \gamma_{T_a^{k/2}})\\
   & \le \frac{\eps}{3} + 2\frac{1}{c^{k/2}} + \frac{\eps}{3} \sum_{s=k/2}^{k-1}\frac{c-1}{c^{k-s}} \\
   & \le \frac{\eps}{3} + \frac{\eps}{3} + \frac{\eps}{3} = \eps.
\end{align*}
Thus for any $l \ge k$, we have $F_{T_a^l} \le \widetilde{F}_{T_a^l} \le \eps$. Since $F_{T_a^k}$ and $\widetilde{F}_{T_a^k}$ are both positive, we have $\lim_{k\to \infty}F_{T_a^k} =\lim_{k\to \infty} \widetilde{F}_{T_a^k} =0$. 

\subsection{Proof of Lemma \ref{lemma:second-stage-M1=3}}
\label{proof:lemma:second-stage-M1=3}
We will use an induction to prove the following: 
\begin{equation*}
    \Pr[A_a^{k+1}] \ge 1 - \exp\left(-\frac{T_b}{24NV}\right) - 2\exp\left(-\frac{T_b}{1152N^2V^2} \right) - \sum_{s=0}^{k} \exp\left(-\frac{1}{2}|\Gamma_a^{s+1}|\delta_{T_a^s}^2\right).
\end{equation*}
We do not assume $|M^1|\ge 3$ for now. 
The base case follows from Corollary \ref{lemma:first-stage} because $A_a^0$ is the same as $A_{v^1-3}$.
Suppose $A_a^k$ happens. Consider $A_a^{k+1}$. 
For any round $t\in \Gamma_a^{k+1}$,
\begin{align*}
    P_{t-1}^i(0:v^1-3) &\le \frac{1}{t-1} \sum_{s=1}^{t-1} \mathbb{I}[\exists i\in M^1, b_s^i \le v^1-3] \\
    &  = \frac{1}{t-1} \bigg( \sum_{s=1}^{T_a^k} \mathbb{I}[\exists i\in M^1, b_s^i \le v^1-3] + \sum_{s=T_a^k +1}^{t-1} \mathbb{I}[\exists i\in M^1, b_s^i \le v^1-3] \bigg) \\
     {\text{(by $F_{T_a^k}\le \frac{1}{4NV}$) }} & \le  \frac{1}{t-1} \left(\frac{T_a^k}{4NV} + (t-1-T_a^k) \right) \\
     {\text{(by $T_a^k \le t-1\le T_a^{k+1}$) }} & \le \frac{1}{T_a^{k}} \left( \frac{T_a^k}{4N V} + T_a^{k+1} -T_a^k \right) \\
     {\text{(by $T_a^{k+1} = c T_a^k$) }} & = \frac{1}{3N V}. 
\end{align*}
By Lemma \ref{lemma:mean-based}, 
for any history $H_{t-1}$ that satisfies $A_a^k$, we have 
\begin{equation}\label{eq:M1-V-gamma_t}
\Pr[\exists i \in M^1, b_t^i\le v^1-3 \given H_{t-1}, A_a^k]\le |M^1|V\gamma_t.
\end{equation}
Let $Z_t = \mathbb{I}[\exists i\in M^1, b_t^i \le v^1-3] - |M^1|V\gamma_t$ and let $X_t=\sum_{s=T_a^k+1}^t Z_s$.  We have $\Ex{Z_t \mid A_a^k, H_{t-1}} \le 0$.  Therefore, the sequence $X_{T_a^k+1}, X_{T_a^k+2}, \ldots, X_{T_a^{k+1}}$ is a supermartingale (with respect to the sequence of history $H_{T_a^k}, H_{T_a^k+1}, \ldots, H_{T_a^{k+1}-1}$).  By Azuma's inequality, for any $\Delta > 0$, we have 
\begin{equation*}
     \Prx{\sum_{t \in \Gamma_a^{k+1}} Z_t \ge \Delta \Biggiven A_a^k} \le \exp\left(-\frac{\Delta^2}{2|\Gamma_a^{k+1}|}\right). 
\end{equation*}
Let $\Delta = |\Gamma_a^{k+1}|\delta_{T_a^k}$. Then with probability at least $1-\exp\big(-\frac{1}{2}|\Gamma_a^{k+1}|\delta_{T_a^k}^2\big)$, 
we have 
\begin{align}\label{eq:second-stage-azuma-consequence-M=3}
\sum_{t \in \Gamma_a^{k+1}} \mathbb{I}[\exists i\in M^1, b_t^{i} \le v^1-3] & \;<\; \Delta + |M^1|V \sum_{t \in \Gamma_a^{k+1}} \gamma_t \nonumber \\
& \; \le\; |\Gamma_a^{k+1}|\delta_{T_a^k} + |M^1|V|\Gamma_a^{k+1}| \gamma_{T_a^k}, 
\end{align}
which implies
\begin{align*}
    \frac{1}{T_a^{k+1}} & \sum_{t=1}^{T_a^{k+1}} \mathbb{I}[\exists i\in M^1, b_t^{i} \le v^1-3] \\
    & = \frac{1}{T_a^{k+1}} \bigg(\sum_{t=1}^{T_a^k} \mathbb{I}[\exists i\in M^1, b_t^{i} \le v^1-3] + \sum_{t \in \Gamma_a^{k+1}} \mathbb{I}[\exists i\in M^1, b_t^{i} \le v^1-3]\bigg)\\
    & \le \frac{1}{T_a^{k+1}} \left(T_a^k F_{T_a^k} + |\Gamma_a^{k+1}|\delta_{T_a^k} + |M^1|V|\Gamma_a^{k+1}| \gamma_{T_a^k}\right)\\
    {\text{(since $T_a^{k+1}=c T_a^k$) }} &=\frac{1}{c}F_{T_a^k}+ \frac{c-1}{c}\delta_{T_a^k}+|M^1|V\frac{c-1}{c}\gamma_{T_a^k}\\
    {\text{(by definition) }} &=F_{T_a^{k+1}}
\end{align*}
and thus $A_a^{k+1}$ holds.

Now we suppose $|M^1|\ge 3$. We can change \eqref{eq:M1-V-gamma_t} to $ \Pr[\exists i \in M^1, b_t^i\le v^1-2 \given H_{t-1}, A_a^k]\le |M^1|V\gamma_t$
due to Lemma~\ref{lemma:mean-based-v-2} and the fact that $\frac{1}{t-1} \sum_{s=1}^{t-1} \mathbb{I}[\exists i\in M^1, b_s^i \le v^1-3] \le \frac{1}{3NV}$.  The definition of $Z_t$ is changed accordingly, and \eqref{eq:second-stage-azuma-consequence-M=3} becomes
\begin{align*}
\hspace{-2em} \sum_{t \in \Gamma_a^{k+1}} \mathbb{I}[\exists i\in M^1, b_t^{i} \le v^1-2] <  |\Gamma_a^{k+1}|\delta_{T_a^k} + |M^1|V|\Gamma_a^{k+1}| \gamma_{T_a^k},
\end{align*}
which implies 
\begin{align*}
    \frac{1}{T_a^{k+1}} \sum_{t=1}^{T_a^{k+1}} \mathbb{I}[\exists i\in M^1, b_t^{i} \le v^1-2] \le \frac{1}{T_a^{k+1}} \left(T_a^k \widetilde{F}_k + |\Gamma_a^{k+1}|\delta_{T_a^k} + |M^1|V|\Gamma_a^{k+1}| \gamma_{T_a^k}\right) = \widetilde{F}_{T_a^{k+1}}. 
\end{align*}
To conclude, by induction, 
\begin{align*}
    \Pr[A_a^{k+1}] & =\Pr[A_a^k] \Pr[A_a^{k+1}|A_a^k] \\
    & \ge \Pr[A_a^k]-\exp\left(-\frac{1}{2}|\Gamma_a^{k+1}|\delta_{T_a^k}^2\right)\\
    &\ge 1 - \exp\left(-\frac{T_b}{24NV}\right) - 2\exp\left(-\frac{T_b}{1152N^2V^2} \right) - \sum_{s=0}^{k} \exp\left(-\frac{1}{2}|\Gamma_a^{s+1}|\delta_{T_a^s}^2\right). 
\end{align*}

As $\delta_t=(\frac{1}{t})^{\frac{1}{3}}$ and $|\Gamma_a^{s}|=c^{s+d(v^1-3)-1}(c-1)T_0$, $T_a^s=c^{s+d(v^1-3)}T_0$ (let $v^1-3=0$ if $v^1 < 3$), we have
\begin{align*}
     & \sum_{s=0}^{k} \exp\left(-\frac{1}{2}|\Gamma_a^{s+1}|\delta_{T_a^s}^2\right) \\
     &=\sum_{s=0}^{k} \exp\left(-\frac{1}{2}c^{\frac{1}{3}(s+d(v^1-3))}(c-1)(T_0)^{\frac{1}{3}}\right)\\
     &=\exp\left(-\frac{1}{2}c^{\frac{1}{3}d(v^1-3)}(c-1)(T_0)^{\frac{1}{3}}\right)\left(1+\sum_{s=1}^{k} \exp\left(-\frac{1}{2}c^{\frac{1}{3}d(v^1-3)}(c-1)(T_0)^{\frac{1}{3}}(c^{\frac{s}{3}}-1)\right)\right)\\
     &\le \exp\left(-\frac{1}{2}c^{\frac{1}{3}d(v^1-3)}(c-1)(T_0)^{\frac{1}{3}}\right)\left(1+\sum_{s=1}^{k} \exp\left(-\frac{1}{2}c^{\frac{1}{3}d(v^1-3)}(c-1)(T_0)^{\frac{1}{3}}s(c^{\frac{1}{3}}-1)\right)\right)\\
     &\le \exp\left(-\frac{1}{2}c^{\frac{1}{3}d(v^1-3)}(c-1)(T_0)^{\frac{1}{3}}\right)\left(1+\sum_{s=1}^{k} (\frac{1}{2})^s\right)\\
     &\le 2 \exp\left(-\frac{1}{2}c^{\frac{1}{3}d(v^1-3)}(c-1)(T_0)^{\frac{1}{3}}\right), 
\end{align*}
where in the last but one inequality we suppose that $T_0$ is large enough so that $\exp\big(-\frac{1}{2} c^{\frac{1}{3}d(v^1-3)}(c-1)(T_0)^{\frac{1}{3}}s(c^{\frac{1}{3}}-1)\big) \le \frac{1}{2}$.  Substituting $T_0 = 12NV T_b = \frac{1}{c-1} T_b$, $c=1+\frac{1}{12NV}$, and $c^d=8NV$ gives
\begin{align*}
     \sum_{s=0}^{k} \exp\left(-\frac{1}{2}|\Gamma_a^{s+1}|\delta_{T_a^s}^2\right) & \le 2\exp\left(-\left(\frac{(8NV)^{(v^1-3)} T_b}{1152N^2V^2}\right)^{\frac{1}{3}}\right)\\
     &\le 2\exp\left(-\left(\frac{ T_b}{1152N^2V^2}\right)^{\frac{1}{3}}\right),
\end{align*}
concluding the proof.

\end{document}